\documentclass{article}
\RequirePackage{etex} %
\usepackage[margin=1in]{geometry}
\usepackage{setspace}
\usepackage{amssymb}%
\usepackage{pifont}%

\usepackage{multirow}
\usepackage{colortbl}
\usepackage{threeparttable}
\usepackage{acronym}

\usepackage{microtype}
\usepackage{graphicx}
\usepackage{subfigure}
\usepackage{booktabs}
\usepackage{hyperref}

\usepackage{natbib}

\usepackage{bm}
\usepackage{amsmath}
\usepackage{amssymb}
\usepackage{mathtools}
\usepackage{amsthm}
\usepackage{amsfonts}

\usepackage{aliascnt} %

\theoremstyle{plain}
\newtheorem{theorem}{Theorem}[section]

\newaliascnt{proposition}{theorem}
\newtheorem{proposition}[proposition]{Proposition}
\aliascntresetthe{proposition}

\newaliascnt{lemma}{theorem}
\newtheorem{lemma}[lemma]{Lemma}
\aliascntresetthe{lemma}

\newaliascnt{corollary}{theorem}
\newtheorem{corollary}[corollary]{Corollary}
\aliascntresetthe{corollary}

\theoremstyle{definition}

\newaliascnt{definition}{theorem}
\newtheorem{definition}[definition]{Definition}
\aliascntresetthe{definition}

\newaliascnt{assumption}{theorem}
\newtheorem{assumption}[assumption]{Assumption}
\aliascntresetthe{assumption}

\newaliascnt{remark}{theorem}
\newtheorem{remark}[remark]{Remark}
\aliascntresetthe{remark}

\newaliascnt{fact}{theorem}
\newtheorem{fact}[fact]{Fact}
\aliascntresetthe{fact}

\usepackage{cleveref}
\crefname{theorem}{theorem}{theorems}
\Crefname{theorem}{Theorem}{Theorems}
\crefname{lemma}{lemma}{lemmas}
\Crefname{lemma}{Lemma}{Lemmas}
\crefname{proposition}{proposition}{propositions}
\Crefname{proposition}{Proposition}{Propositions}
\crefname{corollary}{corollary}{corollaries}
\Crefname{corollary}{Corollary}{Corollaries}
\crefname{definition}{definition}{definitions}
\Crefname{definition}{Definition}{Definitions}
\crefname{assumption}{assumption}{assumptions}
\Crefname{assumption}{Assumption}{Assumptions}
\crefname{remark}{remark}{remarks}
\Crefname{remark}{Remark}{Remarks}
\crefname{fact}{fact}{facts}
\Crefname{fact}{Fact}{Facts}

\usepackage{todonotes}

\usepackage[algo2e,lined,boxed,commentsnumbered,ruled]{algorithm2e}
\usepackage{setspace}
\RestyleAlgo{ruled} 

\usepackage{enumitem}
\usepackage{nicefrac}       %
\usepackage{textcomp}

\usepackage{comment}

\usepackage{autonum}
\definecolor{mygray}{gray}{0.8}

\DeclareMathOperator*{\argmax}{argmax}

\newcommand{\lrset}[1]{\left\{ #1 \right\}}
\newcommand{\lrp}[1]{\left( #1 \right)}
\newcommand{\lrs}[1]{\left[ #1 \right]}

\newcommand{\KLinf}{\operatorname{KL_{inf}}}
\newcommand{\KL}{\operatorname{KL}}

\newcommand{\Exp}[2]{\mathbb{E}_{#1}\lrs{#2}}

\newcommand{\R}{\mathbb{R}}

\newcommand{\calP}{\mathcal P}
\newcommand{\calF}{\mathcal F}
\newcommand{\calG}{\mathcal G}
\newcommand{\calC}{\mathcal C}
\newcommand{\calJ}{\mathcal J}
\newcommand{\calQ}{\mathcal Q}
\newcommand{\calX}{\mathcal X}
\newcommand{\calD}{\mathcal D}

\newcommand{\calB}{\mathcal B}
\newcommand{\calL}{\mathcal L}

\newcommand{\calK}{\mathcal{K}}
\newcommand{\calT}{\mathcal T}
\newcommand{\calM}{\mathcal M}

\newcommand{\calY}{\mathcal Y}

\newcommand{\avec}{\mathbf{a}}

\newcommand{\cvec}{\mathbf{c}}
\newcommand{\dvec}{\mathbf{d}}

\newcommand{\ivec}{\mathbf{i}}

\newcommand{\mvec}{\mathbf{m}}

\newcommand{\svec}{\mathbf{s}}

\newcommand{\uvec}{\mathbf{u}}
\newcommand{\vvec}{\mathbf{v}}

\newcommand{\xvec}{\mathbf{x}}
\newcommand{\yvec}{\mathbf{y}}
\newcommand{\zvec}{\mathbf{z}}

\newcommand{\qtext}[1]{\quad \text{#1} \quad}

\usepackage{xspace}

\newcommand{\Qepsilon}{Q^{(\epsilon)}}
\newcommand{\Qtilde}{\widetilde{Q}}
\newcommand{\qtilde}{\widetilde{q}}

\newcommand{\iid}{i.i.d.\xspace}
\newcommand{\boldlambda}{\boldsymbol{\lambda}}
\newcommand{\boldmu}{\boldsymbol{\mu}}

\newcommand{\boldrho}{\boldsymbol{\rho}}
\DeclareMathOperator{\supp}{supp}

\newcommand{\Vertices}{\mathrm{Vert}}
\newcommand{\dualdomain}{\calL_{\boldmu}}
\newcommand{\epsdualdomain}{\calL_{\boldmu}^{(\epsilon)}}
\newcommand{\finitesupport}{\mathbb{X}}
\newcommand{\ftilde}{\widetilde{f}}

\newcommand{\fdivdualdomain}{\calL_{\boldmu, f}}
\newcommand{\epsfdivdualdomain}{\calL_{\boldmu, f}^{(\epsilon)}}
\newcommand{\lsc}{\text{lsc}\xspace}

\title{Dual Representation of Minimum Divergence Under Integral Constraints}
\date{}
\author{Shubhanshu Shekhar \\ EECS Department \\ University of Michigan, Ann Arbor \\ \url{shubhan@umich.edu} \and 
Shubhada Agrawal \\ ECE Department \\ Indian Institute of Science, Bangalore \\ \url{shubhada@iisc.ac.in}}

\begin{document}
\maketitle

\begin{abstract}
Minimum divergence problems under integral constraints appear throughout statistics and probability, including sequential inference, bandit theory, and distributionally robust optimization. In many such settings, dual representations are the key step that convert information-theoretic lower bounds into computationally tractable (and often near-optimal) algorithms.  In this paper, we present a general two-stage recipe for deriving dual representations of constrained minimum divergence~(in the second argument) for distributions supported on $[0,1]^K$. The first stage derives a dual representation for finitely-supported distributions using classical finite-dimensional convex duality techniques, while the second establishes an abstract interchange argument that lifts this discretized dual to arbitrary distributions. 

We begin with the simplest case of mean-constrained minimum relative entropy, commonly called $\KLinf$, and generalize an existing argument from multi-armed bandits literature for $K=1$ to arbitrary dimensions. Our main contribution is to significantly expand the scope of this approach  to a broad class of $f$-divergences~(beyond relative entropy) and to general integral constraint functionals~(beyond the mean constraint). Finally, we illustrate the statistical implications of our results by constructing optimal procedures for sequential testing, estimation, and change detection with observations in $[0,1]^K$.  
\end{abstract}

\setcounter{tocdepth}{2}
\tableofcontents
\section{Introduction}
We study constrained minimum-divergence problems for probability measures supported on a compact domain, which for concreteness we set to $\calX = [0,1]^K$ for an integer $K \geq 1$.  Let $\calP(\calX)$ denote the set of all probability measures on $(\calX, \calB_{\calX})$ with $\calB_{\calX}$ denoting the Borel sigma-field on $\calX$, and let $D:\calP(\calX) \times \calP(\calX) \to [0, \infty]$ denote a divergence measure. Consider a continuous constraint function $g:\calX \to \R^J$~(for $J \geq 1$), and let $\calC \subset \R^J$ be a closed and convex set. We are interested in studying the minimum divergence term 
\begin{align}
    I(P, g, \calC) = \inf_{Q \in \calQ(g, \calC)} D(P, Q), \qtext{where} \calQ \equiv \calQ(g, \calC) = \{Q \in \calP(\calX): \mathbb{E}_Q[g(X)] \in \calC \}.  \label{eq:general-min-divergence}
\end{align}
At a high-level, $I(P, g, \calC)$ quantifies how far a distribution $P$ is from a class of distributions satisfying an integral constraint. Such quantities appear as the intrinsic ``hardness'' parameters governing both the instance-dependent information-theoretic lower bounds and achievability results in several sequential decision-making problems, including multi-armed bandits~(MABs) \citep{lai1985asymptotically,burnetas1996optimal}, sequential testing \citep{agrawal2025stopping}, estimation, change detection, and distributionally robust optimization~\citep{bayraksan2015data, hu2013kullback}. %

The simplest instance of $I(P, g, \calC)$ is the mean-constrained minimum relative entropy functional, often referred to as $\KLinf$. For $P \in \calP([0,1])$, and a target mean value $\mu \in [0,1]$, it is defined as 
\begin{align}
    \KLinf(P, \mu) = \inf \{\KL(P, Q): Q \in \calP([0,1]), \; \mathbb{E}_{Q}[X] = \mu \}, \qtext{where} \KL(P, Q) = \int \log \lrp{\frac{dP}{dQ}} dP  
\end{align}
for $P \ll Q$, and $\KL(P, Q) = + \infty$ otherwise. 
This quantity governs the fundamental limits of sequential inference problems with bounded observations. 
Concretely, let $\{X_i: i \geq 1\}$ denote an \iid sequence drawn from a distribution $P$ supported on $\calX = [0,1]$ with an unknown mean $\mu_P$. Fix a candidate mean value $\mu \in (0,1)$, and consider the null hypothesis $H_0: \mu_P=\mu$ for a given $\mu \in [0,1]$. For a prespecified $\alpha \in (0, 1)$, suppose our goal is to construct a level-$\alpha$ power one test $\tau_\alpha$ for this problem~\citep{darling1968some},  which is a random stopping time satisfying $\mathbb{P}_{H_0}(\tau_\alpha < \infty) \leq \alpha$~(the ``level-$\alpha$ property under the null) and $\mathbb{P}_{H_1}(\tau_\alpha < \infty) = 1$~(the ``power-one'' property under the alternative). 
 A standard argument based on the data processing inequality for randomly stopped processes shows that any such level-$\alpha$ test $\tau_\alpha$~(uniformly over all null distributions) for this problem must satisfy the following fundamental lower bound \citep{garivier2019explore}
\begin{align}
    \mathbb{E}[\tau_\alpha] \geq \frac{\log(1/\alpha)}{\KLinf(P, \mu)}, \qtext{whenever} \mu_P \neq \mu. \label{eq:testing-lower-bound-1d}
\end{align}
This result has an intuitive interpretation. When the null is false, the maximum rate at which any level-$\alpha$ test can collect evidence against the null is controlled by the relative entropy between $P$ and the closest candidate in the class of null distributions $\calQ_{\mu} = \{Q: \mathbb{E}_Q[X] = \mu\}$. 
Moreover, the lower bound in~\eqref{eq:testing-lower-bound-1d} also suggests a constructive strategy that matches this lower bound asymptotically as $\alpha\downarrow 0$. In particular, the test 
\begin{align}
    \tau_\alpha = \inf \{n \geq 1: \KLinf(\widehat{P}_n, \mu) \geq f(n, \alpha) \},  \qtext{for a function} f(n, \alpha) \asymp \frac{\log(n/\alpha)}{n},
\end{align}
can be shown to satisfy $\lim_{\alpha\rightarrow 0}\frac{\mathbb{E}[\tau_{\alpha}]}{\log(1/\alpha)} = \frac{1}{\KLinf(P,\mu)}$, which matches the lower bound in the limit of $\alpha \to 0$ \citep{agrawal2020optimal,jourdan2022top}. A crucial reason such procedures are computationally feasible is that $\KLinf$ admits an explicit dual representation. For instance, in the bounded one-dimensional setting,~\citet{honda2010asymptotically} derived the following dual: 
\begin{align}
\KLinf(P, \mu) = \inf_{Q \in \calQ_\mu} \KL(P, Q) = \sup_{\lambda:  \sup_{x \in [0,1]}\lambda(x-\mu) \leq 1} \mathbb{E}_P[\log(1 - \lambda(X-\mu)]. \label{eq:klinf-one-dim}
\end{align}
Since this is a convex program over a compact one-dimensional domain~(even though the primal problem involves minimization over an infinite-dimensional space of probability measures with mean $\mu$), it can be solved efficiently using off-the-shelf solvers. This brief discussion illustrates how $\KLinf$ characterizes the fundamental lower bound, and its dual representation enables the construction of a computationally feasible method that nearly matches it. A parallel argument also applies to several other problems, including testing statistics beyond the mean (e.g., quantiles, CVaR) and beyond bounded-support distributions \citep{agrawal2021regret,agrawal2021optimal}, as well as to the problems mentioned above.

\subsection{Related Work and Overview of Our Results}
Duality for divergence minimization under integral constraints is a classical topic in convex analysis and information theory. A substantial body of work studies  the dual representation of general $f$-divergence  projection problems, typically in the first argument, under finitely many moment-type constraints. These works use  Fenchel duality ideas for convex integral functionals; see, for example, \citet{broniatowski2006minimization, broniatowski2012divergences, borwein1991duality} and follow-up works.  In this setting, one typically considers the minimum divergence between a reference measure and a class of measures defined through finitely-many integral constraints, with the aim of  identifying conditions ensuring equality of the primal and dual problems. Another important objective is to provide a characterization of the optimizers (i.e., the divergence projection). A central issue in their approach is that of \emph{constraint qualification}. Classical strong-duality results often require the primal feasible set to have a nonempty relative interior. However, as noted by~\citet{borwein1992partially} for measure spaces endowed with weak topologies, the relative interiors can often be empty even for standard integral constraints. To address this,~\citet[Definition 2.3]{borwein1992partially} introduced the notion of quasi-relative interior, and~\citet[Theorem 1.1]{broniatowski2006minimization} used this framework to give sufficient conditions for ensuring strong duality and dual attainment for $f$-divergence minimization~(in the first argument) under integral constraints, along with an explicit form of the optimal projection under additional regularity assumptions. 

Our objective in this paper is complementary to the this line of work. First, we study  divergence minimization in the second argument motivated by some applications in anytime-valid inference. 
Second, while the prior results provide dual representations for the minimum divergence problems they consider, they are often implicit. In contrast, 
we develop an elementary and constructive discretization-based strategy for distributions supported on compact domains, which results in more explicit dual formulations. Specifically, we  approximate the original problem (over an infinite dimensional domain) with a sequence of discretized problems with finite-dimensional domains. The sufficient conditions  for strong duality in these intermediate problems can then be easily verified. We then transfer this intermediate dual to the original continuous problem via a careful limiting argument along a sequence of discretizations. Interestingly, our limiting argument has an information-theoretic flavor as it explicitly relies on  the data processing inequality~(DPI) and lower semicontinuity (\lsc)
in the weak topology of $f$-divergences. 
This two-stage discretization-based methodology is motivated by the argument developed by~\citet{honda2010asymptotically} for minimum relative entropy under mean constraints~(i.e., $\KLinf$) for distributions supported on $[0,1]$. 
We build upon it to develop a modular two-stage approach,  that (i) applies naturally to the case of distributions on higher-dimensional supports $[0,1]^K$, and (ii) extends to a much broader class of divergence measures~(beyond relative entropy) and integral constraints~(beyond the mean constraint).

\paragraph{Overview of our results.} As mentioned above, our general method for obtaining the dual representation follows the template of~\citet[\S~4.1]{honda2010asymptotically}. First, we derive the dual for distributions with finite support on $\calX = [0,1]^K$. In this case, we show that we can restrict our attention to a finite-dimensional subset of the primal domain, and the dual then follows by standard Lagrangian arguments. More specifically, this gives us an explicit concave dual objective function  over a compact finite-dimensional dual domain. The second step in our pipeline is to transfer this dual from finitely supported distributions to general distributions $P \in \calP(\calX)$ by constructing a sequence of discretization channels~(or stochastic transformations) $\{\calK_k:k \geq 1\}$. For each discretized problem, we can use the finitely supported dual, and we show that the limiting value of this is the dual of the original problem.  Our argument relies on DPI for relative entropy, its lower semicontinuity in the weak topology, and the concavity of the finitely-supported dual objective. Since these properties are also satisfied by a larger class of $f$-divergence measures, we show that the same pipeline also allows a direct derivation of their duals as well under mean constraint. Such quantities arise in important applications such as variational Bayesian inference, distributionally robust optimization, and generalized empirical likelihood methods. 

One new component in our derivation of the dual of mean-constrained $\KLinf$~(as compared to~\citet{honda2010asymptotically}) is that we construct discretization channels based on the idea of stochastic rounding~\citep{croci2022stochastic} that exactly preserve the mean constraint. We illustrate that this same technique also extends to a restricted class of constraints beyond the mean-constraint. However, when we move to arbitrary continuous constraint functions, exact preservation on discretization is  generally not possible. To address such situations, we obtain an abstract dual representation theorem that follows the same two-step recipe mentioned above, but now allows for approximately satisfied constraints in the intermediate sequence of discretized problems. This construction, detailed in~\Cref{sec:general-argument-approximation-constraints} in an abstract form, allows for extending the two-stage approach for studying the dual of the general minimum divergence term $I(P, g, \calC)$ introduced in~\eqref{eq:general-min-divergence}.
We show that we can obtain the dual representation of $I(P, g, \calC)$ in terms of the corresponding finite-support duals under some verifiable conditions such as (i) the DPI and \lsc properties of $D$, (ii) the continuity of $g$ and compactness of $\calX =[0,1]^K$, (iii) mild regularity of the finite-support dual objective and domain. As for mean-constrained problems, our proof is transparent and constructive, and relies on elementary tools from analysis.

\paragraph{Organization.}  We derive the dual of mean-constrained minimum relative entropy in~\Cref{sec:Klinf-mean-constrained}, presenting the result for finitely supported distributions in~\Cref{theorem:finite-support}, followed by the limiting argument in~\Cref{theorem:continuous-klinf}. We then discuss how to extend this to general $f$-divergences in~\Cref{subsec:other-divergences}, and go beyond the mean-constraint to general continuous constraints in~\Cref{subsec:other-constraints}. 
\Cref{sec:general-argument-approximation-constraints} contains the key technical result of this paper that extends this two-stage dual derivation to general divergences and constraints, and we apply this to derive the dual of $\KLinf$ with general constraints in~\Cref{subsec:other-constraints}.  
Finally, in~\Cref{sec:statistical-applications}, we apply the dual $\KLinf$ term to construct and analyze optimal procedures for sequential anytime-valid inference with $[0,1]^K$-valued observations. We defer all the proofs to the appendices. 

\section{Warmup: Dual $\KLinf$}
\label{sec:Klinf-mean-constrained}
Throughout this section, we will use $\calB \equiv \calB_\calX$ to denote the Borel sigma-algebra on the unit cube $\calX = [0,1]^K$ with $K \geq 1$ dimensions. For any probability measure $P$ on $(\calX, \calB)$, and a mean vector $\boldmu$ lying in the interior $\mathring{\calX}$, we are interested in obtaining a dual characterization of 
\begin{align}
    \KLinf(P, \boldmu) = \inf \{\KL(P, Q): Q \in \calP(\calX), \; \mathbb{E}_Q[X]=\boldmu\}. 
\end{align}
As mentioned in the introduction, we will use a two-step approach to obtain the dual motivated by~\citet{honda2010asymptotically}. First, we consider the simpler case of $P$ supported on a finite subset of $\calX$. In this simplified setting, we can use standard arguments for finite-dimensional problems to obtain an explicit dual representation. We state this formally below.

\begin{proposition}
    \label{theorem:finite-support} 
        Let $P \in \calP(\calX)$ be a distribution with a finite support $\finitesupport$, and $\mu \in \mathring{\calX}$.  Then, we have 
    \begin{align}
        \KLinf(P, \boldmu) \coloneqq \inf \left\{ \KL(P, Q):\; Q \in \calP(\calX), \; \mathbb{E}_Q[X] = \boldmu \right\} = \sup_{\boldlambda \in \calL_{\boldmu}} \mathbb{E}_P\left[ \log \lrp{1 - \boldlambda^T(X-\boldmu)}\right], \label{eq:klinf-mean-constrained-primal}
    \end{align}
    where $\calL_{\boldmu} = \{\boldlambda \in \mathbb{R}^K: 1 - \boldlambda^T(x - \boldmu) \geq 0, \; \forall x \in \calX\}$.
\end{proposition}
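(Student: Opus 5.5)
Our plan has two parts: weak duality, then a reduction of the primal to a finite-dimensional convex program whose Lagrangian dual we can compute explicitly.

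\emph{Weak duality.} Fix $\boldlambda \in \calL_{\boldmu}$ and any feasible $Q$ with $\KL(P,Q)<\infty$. Since $P \ll Q$, write $p(x)=P(\{x\})$ and $q(x)=Q(\{x\})>0$ for $x \in \finitesupport$. Jensen's inequality (concavity of $\log$) gives
\begin{align}
\mathbb{E}_P\left[\log\lrp{1-\boldlambda^T(X-\boldmu)}\right] - \KL(P,Q) = \mathbb{E}_P\left[\log \frac{q(X)\lrp{1-\boldlambda^T(X-\boldmu)}}{p(X)}\right] \leq \log \sum_{x \in \finitesupport} q(x)\lrp{1-\boldlambda^T(x-\boldmu)}.
\end{align}
Because the integrand $1-\boldlambda^T(x-\boldmu)$ is nonnegative on $\calX$, the sum on the right is at most $\mathbb{E}_Q[1-\boldlambda^T(X-\boldmu)] = 1$, using $\mathbb{E}_Q[X]=\boldmu$. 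Hence $\sup \leq \inf$.

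\emph{Reduction of the primal.} Any feasible $Q$ decomposes as $Q = \sum_{x\in\finitesupport} q(x)\delta_x + (1-\sum_x q(x))\, R$, where $R$ is a probability measure on $\calX$ (the case where this residual mass is zero is trivial). Replacing $R$ by a point mass at its mean $\mathbb{E}_R[X]\in\calX$ (a convex set) leaves the mean of $Q$ unchanged. It also leaves $\KL(P,Q)$ unchanged, since $\KL(P,Q)$ depends only on $q|_{\finitesupport}$; and if that mean lands in $\finitesupport$, merging it with the existing atom only increases $q$ there and so decreases $\KL$. Applying Carathéodory's theorem, the mean of $R$ is a convex combination of the $2^K$ vertices $\Vertices$ of the cube, and splitting that atom accordingly again preserves the mean and does not increase $\KL$. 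The primal is therefore the finite-dimensional convex program
\begin{align}
\min_{q \geq 0 \text{ on } \finitesupport \cup \Vertices}\; -\sum_{x\in\finitesupport} p(x)\log q(x) \quad \text{s.t.} \quad \sum_x q(x)=1,\; \sum_x q(x)(x-\boldmu)=0,
\end{align}
up to the additive constant $\sum_x p(x)\log p(x)$.

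\emph{Finite-dimensional duality.} This is the step we expect to be the main obstacle. Slater's condition holds because $\boldmu \in \mathring{\calX}$: take a strictly positive $q$ on $\finitesupport\cup\Vertices$ whose mean is $\boldmu$, which exists since $\boldmu$ can be written as a convex combination of the vertices with all weights positive and small weights can then be put on $\finitesupport$ with a compensating adjustment. Strong duality therefore holds. We then form the Lagrangian with multiplier $\nu$ for normalization and $\boldlambda$ for the mean constraint. Minimizing over $q(x)$ for $x\in\finitesupport$ yields
\begin{align}
q(x) = \frac{p(x)}{\nu + \boldlambda^T(x-\boldmu)}.
\end{align}
For the vertices and other non-support points, boundedness of the Lagrangian from below requires $\nu + \boldlambda^T(v-\boldmu) \geq 0$ at every vertex $v$, hence on all of $\calX$ by linearity. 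Optimizing out $\nu$ should give $\nu=1$: the dual function takes the form $\sum_x p(x)\log(\nu+\boldlambda^T(x-\boldmu)) + 1 - \nu$, and rescaling $\boldlambda \mapsto \boldlambda/\nu$ shows the optimum is at $\nu=1$. After the sign change $\boldlambda \to -\boldlambda$, this yields exactly $\sup_{\boldlambda\in\calL_{\boldmu}} \mathbb{E}_P\left[\log\lrp{1-\boldlambda^T(X-\boldmu)}\right]$.

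The care needed here lies in handling the boundary of the domain, where $\log 0 = -\infty$, and in tracking the $\nu$ elimination. Weak duality from the first step then closes any remaining gap.
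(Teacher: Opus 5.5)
Your proposal is correct and follows essentially the same route as the paper: a Carath\'eodory reduction to measures on $\finitesupport\cup\{0,1\}^K$, a strictly positive feasible point for Slater, explicit minimization of the Lagrangian (giving $q(x)=p(x)/c_x$ on the support and $c_x\ge 0$ at the vertices, extended to $\calX$ by linearity), and a scaling trick to eliminate the normalization multiplier; your centered multiplier $\nu$ is the paper's $\gamma-\boldlambda^T\boldmu$, and your rescaling $\boldlambda\mapsto\boldlambda/\nu$ is its optimization over $c$. The one step you leave implicit is $\nu>0$ (the paper's \Cref{lemma:finite-positive-constant}), which follows by evaluating the dual constraint at $x=\boldmu\in\calX$ to get $\nu\ge 0$, and noting that $\nu=0$ would force $\boldlambda=\boldsymbol{0}$ by interiority of $\boldmu$, contradicting strict positivity on $\finitesupport$; your Jensen weak-duality argument is a valid addition that the paper does not need, since Slater already gives equality.
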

The proof of this result~(details in~\Cref{proof:finite-support}) proceeds in three steps. The first step is to observe that even though the domain of optimization of the primal problem in~\eqref{eq:klinf-mean-constrained-primal} is infinite dimensional, we can restrict our attention to a smaller finite-dimensional subspace without loss of optimality. Next, we verify that the finite-dimensional optimization problem resulting in Step 1 satisfies the sufficient conditions for strong duality to hold. Finally, to complete the argument, we use the classical Lagrangian duality theory for finite-dimensional convex programs along with a ``scaling trick'' to achieve the simplified form stated in~\Cref{theorem:finite-support}.

The second step in our approach is to consider a general $P \in \calP(\calX)$, and approximate its dual as a limiting value of a sequence of duals associated with discretized versions of $P$, denoted by $\{P_k: k \geq 1\}$. Since~\Cref{theorem:finite-support} is applicable to all such intermediate problems, the main task it essentially to justify an interchange of ``$\lim$'' and ``$\sup$'' as we show below. But, before presenting the formal argument, we need to introduce some notation. 
\begin{definition}
    \label{def:grid-and-vertices}
    For any $k \geq 1$, let $\Delta_k = 2^{-k}$ denote the ``mesh-size'', and let $G_k \subset \calB$ denote a dyadic grid consisting of cubes~(or cells) of sides $\Delta_k$. Formally, we introduce the term 
    \begin{align}
        G_k = \left\{ \prod_{j=1}^K J_{i_j}^k: \ivec= (i_1, \ldots, i_K) \in \{0, 1, \ldots, 2^k-1\}^K\right\}, \; \text{where} \;  
        J_{i}^k = \begin{cases}
            [i \Delta_k, \,(i+1)\Delta_k), & 0 \leq i \leq 2^k - 2, \\
            [1-\Delta_k, 1], & i=2^k - 1. 
        \end{cases}
    \end{align}
    For each $\xvec \in \calX$, we then define $E_k(\xvec)$ as the unique cell in $G_k$ containing $\xvec$ and let $\avec(\xvec) = (a_1(\xvec), \ldots, a_K(\xvec))$ denote the corner of $E_k(\xvec)$ that is minimal in coordinate-wise ordering. In other words, we can write 
    \begin{align}
        E_k(\xvec) = \prod_{j=1}^K [a_j(\xvec), \, a_j(\xvec) + \Delta_k \}, \qtext{where} ``\}" =
        \begin{cases}
            ``)", & \text{ if } a_j(\xvec)  < 1 -  \Delta_k, \\
            ``]", &  \text{ if } a_j(\xvec) = 1 - \Delta_k. 
        \end{cases} 
    \end{align}
    For any cell $E \in G_k$ with corner $\avec \in \{0, \ldots, 1-\Delta_k\}^K$, let $\Vertices(E)$ denote its associated \emph{vertex set} $\{\avec + s \Delta_k: \svec \in \{0,1\}^K\}$, and for any $\xvec \in \calX$, we use $\Vertices(\xvec) = \Vertices(E_k(\xvec))$. Finally, we introduce the term 
    \begin{align}
        V_k = \cup_{E \in G_k} \Vertices(E) = \cup_{\xvec \in \calX} \Vertices(\xvec) =  \{0, \Delta_k, \ldots, 1\}^K, 
    \end{align}
    denote the grid of all vertices of cells in $G_k$.
\end{definition}

 We now introduce a key idea of \emph{mean-preserving channel} that we will use throughout this section. This definition is motivated by the concept of \emph{stochastic rounding} used in finite-precision representation of real numbers~\citep{croci2022stochastic}. 
 
\begin{definition}[Mean-preserving channel]
    \label{def:mean-preserving-channel}
    For any $k \geq 1$, let $G_k$ denote the collection of dyadic cubes of sides $\Delta_k = 2^{-k}$, and let $V_k$ denote the associated set of vertices introduced in~Definition~\ref{def:grid-and-vertices}. 
    We define the mean-preserving discretization Markov kernel~(or channel) $\calK_k:2^{V_k} \times \calX \to [0,1]$ as follows: Consider any $\xvec = (x_1, \ldots, x_K) \in \calX$, and let  $E_k(\xvec)$ denote the unique cube in~$G_k$ containing $\xvec$, with minimal vertex $\avec(\xvec) = (a_1(\xvec),\ldots, a_K(\xvec)) \in \{0, \Delta_k, \ldots, 1-\Delta_k\}^K$. Then, the conditional distribution $\calK_k(\cdot|\xvec)$ is supported on $\Vertices(\xvec)$, and for any $\vvec = \avec(\xvec) + \svec \Delta_k$ with $\svec = (s_1, \ldots, s_K) \in \{0,1\}^K$, we have 
    \begin{align}
        &\calK_k(\{\vvec\}|\xvec)  = \prod_{j=1}^K\lrp{ s_j\lrp{ \frac{x_j - a_j(x_j)}{\Delta_k} }  + (1-s_j) \lrp{ 1 -\frac{x_j - a_j(x_j)}{\Delta_k} } }. 
    \end{align}
\end{definition}
In other words, let $Y$ be any random variable with distribution $Q \in \calP(\calX)$ and $\calX = [0,1]^K$, and let $Y_k$ denote the output after passing $Y$ through the channel $\calK_k$, with distribution $Q_k = Q \calK_k$; that is, $Q_k(E) = \int_{\calX} \calK_k(E \mid \xvec) dQ(\xvec)$. For  a realization of $Y = \yvec = (y_1, \ldots, y_K)$, we can write $Y_k = \avec(\yvec) + \Delta_k B$, where $B = (B_1, \ldots, B_K)$, and 
\begin{align}
    B_i \mid (Y = \yvec) \; \sim \; \mathrm{Bernoulli}\lrp{ \frac{y_j-a_j(y_j)}{\Delta_k}}, \qtext{and} B_i \perp B_j \mid (Y=\yvec). 
\end{align}
As a result, the above construction of $\calK_k$ ensures that 
\begin{align}
    \mathbb{E}[Y_k|Y] = Y \quad \text{almost surely}, \qtext{which implies} \mathbb{E}[Y_k] = \mathbb{E}[Y],  
\end{align}
illustrating the mean-preserving property of $\calK_k$ for every $k \geq 1$. Additionally, we also have the approximation result $\|Y_k - Y\|_\infty \leq \Delta_k$ almost surely.

Throughout this section, we will use $P_k = P \calK_k $ and $Q_k = Q \calK_k $ to represent the distributions obtained after passing $P$ and $Q$ through $\calK_k$. That is, for any $A \in \calB$, we have 
\begin{align}
   P_k(A) = \int_{\calX} \calK_k(A \mid \xvec) dP(\xvec), \qtext{and} Q_k(A) = \int_{\calX} \calK_k(A \mid \xvec) dQ(\xvec),
\end{align}
where $P, Q$ are probability measures on $(\calX, \calB)$. Before proceeding to the main results of this section, we present a simple but interesting consequence of the fact that $\calB = \sigma \lrp{\cup_{k=1}^\infty G_k}$. 
\begin{lemma}
    \label{lemma:convergence-of-KL} 
    Let $P$ and $Q$ denote two probability measures on $(\calX, \calB)$. 
    For any $k \geq 1$, let $\calK_k$ denote the mean preserving channel, and  let $P_k = \calK_k P$ and $Q_k = \calK_k Q$ denote the corresponding pushforward measures. Then, assuming that $P \ll Q$ and $\KL(P, Q) < \infty$,  we have the following: 
    \begin{align}
        \lim_{k \to \infty} \KL(P_k, Q_k) = \KL(P, Q). 
    \end{align}
\end{lemma}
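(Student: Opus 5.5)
The plan is to sandwich $\KL(P_k, Q_k)$ between an upper bound coming from the data processing inequality and a lower bound coming from weak lower semicontinuity of relative entropy. Neither bound uses the particular structure of $\calK_k$ beyond two facts: it is a Markov kernel, and it moves points by at most $\Delta_k$.

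\emph{Upper bound.} Since $P_k = P\calK_k$ and $Q_k = Q\calK_k$ are outputs of the same channel applied to $P$ and $Q$, the data processing inequality for relative entropy gives
\begin{align}
\KL(P_k, Q_k) \leq \KL(P, Q) \quad \text{for every } k \geq 1.
\end{align}
Hence $\limsup_{k\to\infty} \KL(P_k, Q_k) \leq \KL(P,Q)$. This is finite by assumption, and in particular $P_k \ll Q_k$ for every $k$.

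\emph{Weak convergence.} Next I will show $P_k \to P$ and $Q_k \to Q$ weakly. Let $Y \sim P$ and let $Y_k$ be the output of $\calK_k$ on input $Y$. As noted after Definition~\ref{def:mean-preserving-channel}, $\|Y_k - Y\|_\infty \leq \Delta_k = 2^{-k}$ almost surely. Take any bounded continuous $h:\calX\to\R$. Since $\calX$ is compact, $h$ is uniformly continuous, so
\begin{align}
|\mathbb{E}_{P_k}[h] - \mathbb{E}_P[h]| \leq \mathbb{E}\bigl[|h(Y_k) - h(Y)|\bigr] \leq \omega_h(\Delta_k) \to 0,
\end{align}
where $\omega_h$ is the modulus of continuity of $h$. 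The same coupling argument applies to $Q$.

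\emph{Lower bound.} I will use the Donsker--Varadhan representation
\begin{align}
\KL(P,Q) = \sup_{h \in C_b(\calX)} \Bigl\{ \mathbb{E}_P[h] - \log \mathbb{E}_Q[e^{h}] \Bigr\}.
\end{align}
This expresses $\KL$ as a supremum of functionals that are jointly continuous in $(P,Q)$ under the weak topology, so $\KL$ is jointly weakly lower semicontinuous. Fix $h \in C_b(\calX)$. By the representation applied to $(P_k,Q_k)$ and by weak convergence,
\begin{align}
\liminf_{k\to\infty} \KL(P_k, Q_k) \geq \lim_{k\to\infty}\Bigl(\mathbb{E}_{P_k}[h] - \log \mathbb{E}_{Q_k}[e^h]\Bigr) = \mathbb{E}_P[h] - \log\mathbb{E}_Q[e^h].
\end{align}
Taking the supremum over $h$ gives $\liminf_k \KL(P_k,Q_k) \geq \KL(P,Q)$. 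Combined with the upper bound, this proves the claim.

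The step needing the most care is the lower semicontinuity. Its validity hinges on the Donsker--Varadhan supremum being taken over bounded \emph{continuous} test functions, which holds on the compact metric space $\calX$; I will cite it as a standard fact. An alternative route goes through the remark that $\calB = \sigma(\cup_k G_k)$, approximating $\KL(P,Q)$ by relative entropies of cell-wise quantizations via martingale convergence. That route is more delicate here because $\calK_k$ randomizes across neighbouring cells rather than quantizing within a cell, so I would use it only as a fallback.
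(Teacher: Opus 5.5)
Your proof is correct and matches the paper's argument step for step: weak convergence $P_k \Longrightarrow P$ and $Q_k \Longrightarrow Q$ via the coupling $\|Y_k - Y\|_\infty \le \Delta_k$ and uniform continuity on $\calX$, the data processing inequality for the $\limsup$, and joint weak lower semicontinuity of relative entropy for the $\liminf$. The only difference is that you derive the lower semicontinuity from the Donsker--Varadhan representation over $C_b(\calX)$, whereas the paper simply cites it as a known property.
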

\begin{proof}
We first observe that $P_k \Longrightarrow P$ and $Q_k \Longrightarrow Q$  as $k \to \infty$,  where $\Longrightarrow$ denotes weak convergence. To see this, let $f:\calX \to \mathbb{R}$ denote any bounded continuous function. Since the domain $\calX$ is compact, this also means that $f$ is uniformly continuous.  Hence, for every $\epsilon>0$, there exists a $k_\epsilon$, such that for all $k \geq k_\epsilon$, we have $\sup_{\|x-x'\|\leq 2^{-k}}|f(\xvec)-f(\xvec')|\leq \epsilon$. This leads to the following inequalities with $X_k \sim P_k =  P\calK_k$, and with $k \ge k_\epsilon$: 
\begin{align}
    |\mathbb{E}[f(X_k)] - \mathbb{E}[f(X)]| \leq \mathbb{E}\left[ \mathbb{E}\left[ |f(X_k) - f(X)| \mid  X \right] \right] \leq \sup_{\xvec, \xvec': \|\xvec-\xvec'\|_\infty \leq 2^{-k}} |f(\xvec) - f(\xvec')| \leq \epsilon. 
\end{align}
In other words, for every bounded continuous $f$, we have $\lim_{k \to \infty} \mathbb{E}_{P_k}[f(X_k)] = \mathbb{E}_{P}[f(X)]$,
which means that $P_k \Longrightarrow P$. An exact same argument implies that $Q_k \Rightarrow Q$.  

Having established the weak convergence of $P_k$ and $Q_k$ to $P$ and $Q$ respectively, we note that  the joint lower semicontinuity under weak-convergence of relative entropy  implies 
\begin{align}
    \liminf_{k \to \infty} \KL(P_k,  Q_k)  \geq \KL(\lim_{k \to \infty} P_k, \lim_{k \to \infty} Q_k) = \KL(P, Q). 
\end{align}
On the other hand, for any fixed $k$, the data processing inequality~\citep[Theorem 2.17]{polyanskiy2025information} for relative entropy implies that 
\begin{align}
    \KL(P_k, Q_k) \leq \KL(P, Q), \qtext{hence} \limsup_{k \to \infty} \KL(P_k, Q_k) \leq \KL(P, Q). 
\end{align}
Combining the previous two displays, we obtain 
\begin{align}
    \KL(P, Q) \leq \liminf_{k \to \infty} \KL(P_k, Q_k) \leq \limsup_{k \to \infty} \KL(P_k, Q_k) \leq \KL(P, Q). 
\end{align}
This completes the proof. 
\end{proof}

    For any $k \geq 1$, let $H_k$ denote the functional $\boldlambda \mapsto \mathbb{E}_{P_k}[ \log (1 - \boldlambda^T(X_k-\mu))]$,  with $P_k = P \calK_k$.  Then, we know from the dual formulation derived in~\Cref{theorem:finite-support} that 
    \begin{align}
        \KLinf(P_k, \boldmu) = \sup_{\boldlambda \in \calL_{\boldmu}} H_k(\boldlambda), \qtext{where} \calL_{\boldmu} \coloneqq \lrset{\boldlambda \in \R^K: 1 - \boldlambda^T(\xvec-\mu) \geq 0, \; \forall \xvec \in \calX = [0,1]^K}.
    \end{align}
    Our main objective in this section, motivated by~\Cref{lemma:convergence-of-KL}, is to show that a similar dual expression also holds for arbitrary $P$ on $(\calX, \calB)$. That is,  let $H:\calL_\mu \to \R$ denote the functional $\boldlambda \mapsto \mathbb{E}_{P}[ \log (1 - \boldlambda^T(X-\mu))]$, and we want to establish that $\KLinf(P, \boldmu) = \sup_{\boldlambda \in \calL_{\boldmu}} H(\boldlambda)$. This is presented in our next result. 
\begin{theorem}
    \label{theorem:continuous-klinf} The mean-constrained divergence term $\KLinf(P, \boldmu)$ is equal to the limit of $\KLinf$ of a sequence of discretized versions of $P$, denoted by $\{P_k: k \geq 1\}$, constructed by passing $P$ through the sequence of mean-preserving channels $\calK_k$ introduced in~Definition~\ref{def:mean-preserving-channel}. In particular, we have the following chain: 
\begin{align}
    \KLinf(P, \boldmu) &= \inf_{Q: \mathbb{E}_Q[X]=\boldmu} \KL(P, Q) && \text{(by definition})\nonumber \\
   & {=} \lim_{k \to \infty}\;\inf_{Q: \mathbb{E}_{Q}[X]=\boldmu}  \KL(P_k, Q) && (\text{by~\Cref{lemma:interchange-1}}) \label{eq:interchange-1}\\
   &= \lim_{k \to \infty}\;\sup_{\boldlambda \in \calL_{\boldmu}} H_k(\boldlambda) && (\text{by~\Cref{theorem:finite-support}})  \nonumber \\ 
   & {=} \sup_{\boldlambda \in \calL_{\boldmu}} \lim_{k \to \infty} H_k(\boldlambda) = \sup_{\boldlambda \in \calL_{\boldmu}} H(\boldlambda). && (\text{by~\Cref{lemma:interchange-2}}) \label{eq:interchange-2}
\end{align}
Recall that $H_k(\boldlambda) = \mathbb{E}_{P_k}[\log(1-\boldlambda^T(X_k-\boldmu))]$, and $H(\boldlambda) = \mathbb{E}_{P}[\log(1-\boldlambda^T(X-\boldmu))]$. 
\end{theorem}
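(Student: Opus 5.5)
The proof consists of two interchange steps, \eqref{eq:interchange-1} and \eqref{eq:interchange-2}. The middle equality is just \Cref{theorem:finite-support} applied to $P_k$, which is supported on the finite set $V_k$. So the plan is to prove the first interchange (\Cref{lemma:interchange-1}) and the second (\Cref{lemma:interchange-2}) separately.

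\textbf{First interchange.} Write $a_k = \KLinf(P_k,\boldmu)$ and $a = \KLinf(P,\boldmu)$. For the upper bound, take any $Q$ with $\mathbb{E}_Q[X]=\boldmu$. Then $Q_k = Q\calK_k$ still has mean $\boldmu$ because the channel preserves the mean. The DPI gives $\KL(P_k,Q_k)\le \KL(P,Q)$, so $a_k \le a$ for every $k$ and hence $\limsup_k a_k\le a$.

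For the lower bound, assume $\liminf_k a_k<\infty$ and pass to a subsequence attaining it. For each $k$ pick a near-minimizer $Q^{(k)}$ with mean $\boldmu$ and $\KL(P_k,Q^{(k)})\le a_k+1/k$. The space $\calP(\calX)$ is weakly compact since $\calX$ is compact, so along a further subsequence $Q^{(k)}\Rightarrow Q^*$. The map $x\mapsto x$ is bounded and continuous on $\calX$, so the constraint set is weakly closed and $Q^*$ has mean $\boldmu$. We already know $P_k\Rightarrow P$ from the proof of \Cref{lemma:convergence-of-KL}. Joint weak lower semicontinuity of $\KL$ then gives
\[
a \le \KL(P,Q^*)\le \liminf_k \KL(P_k,Q^{(k)}) = \liminf_k a_k .
\]
Thus $\lim_k a_k = a$; if $\liminf_k a_k=\infty$ the bound is trivial.

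\textbf{Second interchange.} Here $\calL_{\boldmu}$ is a compact convex polytope. It is bounded because $\boldmu$ is interior: each coordinate direction is constrained by the vertices of the cube. First fix $\boldlambda$. Then $1-\boldlambda^T(x-\boldmu)$ is affine and nonnegative, so $\log(1-\boldlambda^T(x-\boldmu))$ is concave and bounded above. Two facts give one direction of comparison:
\begin{itemize}
\item $\mathbb{E}[X_k\mid X]=X$;
\item $X_k$ is supported on $\Vertices(X)$.
\end{itemize}
By Jensen, $H_k(\boldlambda)\le H(\boldlambda)$ for all $k$. Together with the middle equality and the first interchange, this yields
\[
\KLinf(P,\boldmu)=\lim_k\sup_{\boldlambda} H_k(\boldlambda)\le \sup_{\boldlambda} H(\boldlambda).
\]

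For the reverse inequality, fix $\boldlambda\in\calL_{\boldmu}$ and set $\boldlambda_t = t\boldlambda$ for $t<1$. Then $1-\boldlambda_t^T(x-\boldmu)\ge 1-t>0$ on $\calX$, so the integrand is bounded and continuous. Weak convergence $P_k\Rightarrow P$ gives $H_k(\boldlambda_t)\to H(\boldlambda_t)$. Hence
\[
\lim_k \sup_{\boldlambda'} H_k(\boldlambda') \ge H(t\boldlambda).
\]
Let $t\uparrow 1$. By concavity of $\log$,
\[
\log(1-t\boldlambda^T(x-\boldmu))\ge t\log(1-\boldlambda^T(x-\boldmu)),
\]
so $H(t\boldlambda)\ge tH(\boldlambda)$ when $H(\boldlambda)$ is finite. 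This gives $\liminf_{t\uparrow1}H(t\boldlambda)\ge H(\boldlambda)$, and the case $H(\boldlambda)=-\infty$ is trivial. Taking the sup over $\boldlambda$ finishes the argument and also identifies the limit as $\sup_{\boldlambda}\lim_k H_k(\boldlambda)$.

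The identity $\lim_k H_k(\boldlambda)=H(\boldlambda)$ holds pointwise in the interior of $\calL_{\boldmu}$ by weak convergence. On the boundary, $H_k(\boldlambda)\le H(\boldlambda)$ gives $\limsup_k H_k(\boldlambda)\le H(\boldlambda)$, so the pointwise limit may be interpreted as a $\limsup$ there. The sup over the domain is unaffected by the boundary, thanks to the radial argument above.

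\textbf{Main obstacle.} I expect the hard part to be this boundary behavior of $\calL_{\boldmu}$, where the integrand $\log(1-\boldlambda^T(x-\boldmu))$ is unbounded below and weak convergence alone does not apply. The radial scaling $t\boldlambda$, combined with concavity, is what I would use to get around it.
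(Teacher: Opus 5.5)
Your proof is correct and follows essentially the same route as the paper: DPI plus Prokhorov and joint weak lower semicontinuity for the first interchange, and conditional Jensen (from $\mathbb{E}[X_k\mid X]=X$), radial shrinking and concavity with $H(\boldsymbol{0})=0$ for the second. The only deviation is that you obtain $H_k(t\boldlambda)\to H(t\boldlambda)$ pointwise from $P_k\Rightarrow P$, whereas the paper proves uniform convergence on $\epsdualdomain$ via a Lipschitz bound; pointwise convergence suffices because the $\liminf$ direction needs only one near-optimal interior point, and your caveat that $\lim_k H_k(\boldlambda)$ can differ from $H(\boldlambda)$ on the boundary of $\calL_{\boldmu}$ is also correct (the paper glosses over it).
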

To prove~\Cref{theorem:continuous-klinf}, we need to justify the two ``interchange operations'' in~\eqref{eq:interchange-1} and~\eqref{eq:interchange-2}. We present the justification of these steps in~\Cref{lemma:interchange-1} and~\Cref{lemma:interchange-2} respectively in~\Cref{proof:continuous-klinf}.  
Together,~Proposition~\ref{theorem:finite-support} and~\Cref{theorem:continuous-klinf} generalize the argument developed by~\citet{honda2010asymptotically} to arbitrary $K \geq 1$. An interesting new component of our proof is the use of the mean-preserving channel~(\Cref{def:mean-preserving-channel}) that allows an exact satisfaction of the equality constraint on discretization. This also allows us to extend this argument to certain constraints beyond the mean-constraint as we discuss briefly in~Appendix~\ref{appendix:box-constraints}, but it breaks down for general continuous constraint functions, necessitating the development of the results of~\Cref{sec:general-argument-approximation-constraints}.  

\begin{remark}
    \label{remark:continuous-klinf-1}
    While stating~\Cref{theorem:continuous-klinf}, we work with a sequence of dyadic partitions $\{G_k: k \geq 1\}$ for simplicity. A similar argument goes through for the case of more general partitions $\{G_k: k \geq 1\}$ whose worst-case diameter $\sup_{E \in G_k} \mathrm{diam}(E) \to 0$, assuming we can define a mean-preserving kernel for the associated vertex sets. A sufficient condition for this is if each $E \in G_k$ can be contained in a prespecified axis-aligned hypercube contained in $\calX$. 
\end{remark}
Before proceeding to the general results in~\Cref{sec:general-argument-approximation-constraints}, we illustrate that our two-stage approach extends beyond relative entropy to a larger class of $f$-divergences. 

\subsection{Beyond Relative Entropy}
\label{subsec:other-divergences}
    The key ingredients in the proof of~\Cref{theorem:continuous-klinf} are (i) the data processing inequality for relative entropy, (ii) the lower semicontinuity of relative entropy under weak convergence on a compact space, and (iii) the uniform continuity of the integrand in the dual objective on $\calX$ on an interior of the domain. Since these conditions can be valid for a larger class of $f$-divergence measures  beyond just relative entropy, our two-stage approach for dual derivation can also be implemented on  this larger class. 

    Let $f:(0, \infty) \to \mathbb{R}$ be a convex, \lsc, function with $f(1)=0$ and $f(0) \coloneqq \lim_{t \to 0} f(t)$, and let $\ftilde$ denote its perspective\citep[\S~2.3.3]{boyd2004convex}; that is
    \begin{align}
        \ftilde(w) = w f(1/w), \quad \text{for } w \geq 0. 
    \end{align}
    Consider any pair of probability measures $P, Q$  on $\calX$, and  decompose $Q$ into $Q_{\mathrm{ac}} + Q_\perp$ with $Q_{\mathrm{ac}} \ll P$~(i.e., $Q_{\mathrm{ac}}$ is absolutely continuous with respect to $P$), and $Q_\perp$ denoting the singular component. Then, the $f$-divergence between $P$ and $Q$ is defined as  
    \begin{align}
        D_f(P \parallel Q) \coloneqq \mathbb{E}_P\left[ \ftilde(W) \right] + f(0) Q_\perp(\calX), \qtext{where} W \coloneqq \frac{dQ_{\mathrm{ac}}}{dP}. 
    \end{align}
    Assume throughout that $\ftilde$ is continuously differentiable and strictly convex, and define the function 
    \begin{align}
        \Phi(r) = \inf_{w \geq 0} \lrp{\ftilde(w) + r w} = -\ftilde^{*}(-r), \qtext{and} U_f = \{r \in \mathbb{R}: \Phi(r)>-\infty\}. 
    \end{align}
    With these definitions, we can state an analog of~\Cref{theorem:continuous-klinf} for the distance of a point $P$ to a set of distributions with mean $\boldmu$ in terms of $f$-divergence $D_f(P \parallel Q)$. Using the perspective $\ftilde$ allows us to write the objective in terms of an expectation in the fixed distribution $P$~(instead of the variable of optimization $Q$). 
    \begin{theorem}
        \label{theorem:f-divergence} Let $P$ denote any distribution on $(\calX, \calB)$, and let $\boldmu \in \mathring{\calX}$ denote any point in the interior of the domain. Then, we have 
        \begin{align}
            D_f^{\inf}(P, \boldmu) \coloneqq \inf_{Q:\mathbb{E}_Q[X]=\boldmu} D_f(P \parallel Q) = \sup_{(\boldlambda, \gamma) \in \calL_{\boldmu, f}} \; \left\{ \mathbb{E}_P [\Phi(\gamma - \boldlambda^T X)] - (\gamma - \boldlambda^T \boldmu) \right\}, 
        \end{align}
        where the dual feasible domain $\calL_{\boldmu, f}$ is defined as 
        \begin{align}
            \fdivdualdomain = \left\{ (\gamma, \boldlambda) \in \mathbb{R}^{K+1} \;:\; \gamma + f(0) \geq \sup_{\boldrho \in \calX} \boldlambda^T \boldrho, \text{ and } \gamma - \boldlambda^T \xvec \in U_f, \;\; \forall \xvec \in \calX \right \}. 
        \end{align}
    \end{theorem}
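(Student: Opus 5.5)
The plan is to follow the same two-stage template as for \Cref{theorem:continuous-klinf}: a finite-support dual first, then a limit along the mean-preserving channels $\calK_k$ of \Cref{def:mean-preserving-channel}.

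\textbf{Stage 1 (finite support).} Let $P$ have finite support $\finitesupport$ and write $W = dQ_{\mathrm{ac}}/dP$. As in the proof of \Cref{theorem:finite-support}, the primal reduces to a finite-dimensional problem with two parts.
\begin{itemize}
\item[(a)] The absolutely continuous part of $Q$ is given by weights $w_x = Q(\{x\})/P(\{x\}) \ge 0$ for $x \in \finitesupport$.
\item[(b)] The singular part is a measure $Q_\perp$ of mass $m = 1 - \sum_x P(x) w_x$ on $\calX \setminus \finitesupport$. Its only roles are through its mass $m$ (cost $f(0)m$) and its mean contribution $\int \rho \, dQ_\perp$, which ranges over $m \cdot \mathrm{conv}(\calX)$.
\end{itemize}
Carathéodory then lets us take $Q_\perp$ supported on at most $K+1$ points, or simply parametrize it by $(m, m\boldrho)$ with $\boldrho \in \calX$.

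For the Lagrangian, attach multiplier $\gamma$ to the normalization $\sum P(x) w_x + m = 1$ and $\boldlambda$ to the mean constraint $\sum P(x) w_x x + m\boldrho = \boldmu$.
\begin{itemize}
\item Minimizing over $w_x$ pointwise gives $P(x)\,\Phi(\gamma - \boldlambda^T x)$. This is finite exactly when $\gamma - \boldlambda^T x \in U_f$, which is the second condition in $\fdivdualdomain$.
\item Minimizing over $m \ge 0$ and $\boldrho \in \calX$ gives $0$ if $f(0) + \gamma - \boldlambda^T\boldrho \ge 0$ for all $\boldrho$, and $-\infty$ otherwise. This is the first condition.
\end{itemize}
The resulting dual function is $\mathbb{E}_P[\Phi(\gamma - \boldlambda^T X)] - (\gamma - \boldlambda^T\boldmu)$, up to the sign convention used for the multipliers.

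Strong duality comes from Slater's condition. Since $\boldmu \in \mathring{\calX}$, one can choose $m > 0$ together with a singular part whose mean is near $\boldmu$, giving a strictly feasible point. Strict convexity and continuity of $\ftilde$ make the program convex with a finite value. When $f(0) = \infty$, singular mass is forbidden and the first constraint becomes vacuous, matching the stated domain. I expect this stage to be routine apart from careful handling of the boundary cases of $U_f$.

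\textbf{Stage 2 (limit).} The chain is the same as in \Cref{theorem:continuous-klinf}:
\begin{equation}
D_f^{\inf}(P,\boldmu) = \lim_k D_f^{\inf}(P_k,\boldmu) = \lim_k \sup_{\fdivdualdomain} H_k = \sup_{\fdivdualdomain} \lim_k H_k,
\end{equation}
where $H_k(\gamma,\boldlambda) = \mathbb{E}_{P_k}[\Phi(\gamma - \boldlambda^TX)] - (\gamma - \boldlambda^T\boldmu)$.

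\textbf{First interchange.} Since $\calK_k$ preserves means, $Q_k$ is feasible whenever $Q$ is. DPI for $D_f$ then gives $D_f^{\inf}(P_k,\boldmu) \le D_f^{\inf}(P,\boldmu)$, so the limsup is bounded above. For the liminf, take near-optimal feasible $Q^{(k)}$ for $P_k$. By compactness of $\calP(\calX)$ (Prokhorov) extract a weakly convergent subsequence; its limit has mean $\boldmu$ because $x \mapsto x$ is continuous and bounded. Joint lower semicontinuity of $D_f$ under weak convergence, together with $P_k \Rightarrow P$, then gives the bound.

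\textbf{Second interchange.} Concavity of $\Phi$ and Jensen's inequality applied through $\mathbb{E}[X_k \mid X] = X$ give $H_k \le H$ pointwise. Hence $\sup H_k \le \sup H$, and combined with the first interchange, $D_f^{\inf}(P,\boldmu) \le \sup H$. Conversely, weak duality for general $P$ gives $\sup H \le D_f^{\inf}(P,\boldmu)$, by the same pointwise inequality $\ftilde(w) + rw \ge \Phi(r)$ used in Stage 1. This direction only needs the inequality, not strong duality.

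If the Jensen route has gaps, the fallback is to mimic \Cref{lemma:interchange-2}. For $(\gamma,\boldlambda)$ in the relative interior of $\fdivdualdomain$, $\Phi$ is continuous on a compact subinterval of $U_f$, so $H_k \to H$ uniformly. One then extends to boundary points by concavity along segments and monotone convergence.

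\textbf{Main obstacle.} I expect the hardest part to be the boundary behaviour of $\Phi$ at the edge of $U_f$, where $\Phi$ may equal $-\infty$ or be discontinuous. Weak duality still holds pointwise there, so the argument will rely on the Jensen and weak-duality sandwich rather than on convergence at the boundary.
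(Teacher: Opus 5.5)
Your proposal is correct. Stage~1 and the first interchange follow the paper's proof closely: pointwise minimization produces $\Phi$, the singular part is parametrized by its mass and mean, and the primal limit uses DPI plus Prokhorov and weak lower semicontinuity.

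The genuine difference is how you close the dual side. The paper proves $\liminf_k \sup H_k \ge \sup H$ in three steps:
\begin{itemize}
\item It restricts to an $\epsilon$-interior $\epsfdivdualdomain$ on which $\xvec \mapsto \Phi(\gamma - \boldlambda^T\xvec)$ is uniformly Lipschitz.
\item It deduces uniform convergence $H_k \to H$ there from $\|X_k - X\|_\infty \le \Delta_k$.
\item It lets $\epsilon \downarrow 0$ by the concavity argument of \Cref{lemma:interchange-2}.
\end{itemize}
That inequality is then combined with the full primal limit, and the DPI half of the primal interchange is needed exactly here.

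You replace all of this with weak duality for arbitrary $P$. Write $r(\xvec) = \gamma - \boldlambda^T\xvec$ and use $\mathbb{E}_Q[X] = \boldmu$:
\begin{align}
D_f(P \parallel Q) - \lrs{\mathbb{E}_P[\Phi(r(X))] - (\gamma - \boldlambda^T\boldmu)} = \mathbb{E}_P\lrs{\ftilde(W) + W r(X) - \Phi(r(X))} + \int \lrp{f(0) + r}\, dQ_\perp \;\ge\; 0 .
\end{align}
This uses both defining conditions of $\fdivdualdomain$. The term $\Phi \circ r$ is bounded because $r(\calX)$ is a compact interval inside the half-line $U_f$.

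Your sandwich is then $D_f^{\inf}(P,\boldmu) \le \liminf_k \sup H_k \le \sup H \le D_f^{\inf}(P,\boldmu)$. It needs only three ingredients:
\begin{itemize}
\item the lower-semicontinuity half of the primal interchange, so DPI becomes redundant;
\item finite-support strong duality;
\item the Jensen step $H_k \le H$.
\end{itemize}
It never needs Lipschitz control of $\Phi$ near the edge of $U_f$, which is exactly the obstacle you anticipated.

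What the paper's route buys is portability. Its uniform-convergence-plus-concavity machinery does not use exact mean preservation, and it does not require a candidate limit $H$ to be known in advance. That is what survives in \Cref{theorem:general-approximate-constraint}. There the channels only approximately preserve the constraint, so $H_k \le H$ is unavailable, and weak duality alone gives only one of the two inequalities.

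Two smaller points, both shared with the paper.
\begin{itemize}
\item The pointwise minimization in Stage~1 imposes $\gamma - \boldlambda^T\xvec \in U_f$ only for $\xvec \in \supp(P_k)$, not for all of $\calX$. Identifying the finite-support dual value with $\sup_{\fdivdualdomain} H_k$ therefore needs a one-line concavity retraction toward a point with $\boldlambda = \boldsymbol{0}$ and $\gamma$ large.
\item Your Slater point uses singular mass $m > 0$, which has infinite cost when $f(0) = \infty$ (e.g.\ $f(t) = -\log t$). In that case $Q \ll P_k$ is forced, and a strictly feasible point exists only if $\boldmu$ lies in the relative interior of $\mathrm{conv}(\supp P_k)$, which can fail. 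So the case you explicitly claim to cover needs a separate strong-duality argument. The paper's corner-mass construction has the same limitation.
\end{itemize}
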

    \emph{Proof outline.}
        The proof of this result follows the exact pipeline we developed for the case of relative entropy: We first establish the result for the case of $P$ with finite support~(\Cref{theorem:finite-support}) using strong duality for finite-dimensional convex programs, and then extend it to arbitrary $P$ using careful limiting arguments, and appealing to data processing and lower semicontinuity properties of $f$-divergences~\citep[Chapter 7]{polyanskiy2025information}, along with the uniform continuity of the dual objective~(\Cref{theorem:continuous-klinf}). We present the details in~\Cref{proof:f-divergence}. \hfill \qedsymbol

    \begin{remark}
        \label{remark:sanity-check-KL} For relative entropy, we have $f(u) = u \log u$, which implies that $\ftilde(w) = -\log w$ for $w>0$~(and equal to $+\infty$ at $w=0$). Then, we have $\ftilde^{*}(t) = - 1 - \log(-t)$ on $t<0$, which implies that 
        \begin{align}
            \Phi(r) = \inf_{w \geq 0} \left\{ \ftilde(w) + rw \right\} = - \ftilde^{*}(-r) = 1 + \log r, \qtext{for} u \in U = (0, \infty). 
        \end{align}
        Thus,~\Cref{theorem:f-divergence} implies the dual form 
        \begin{align}
            \KLinf(P, \boldmu) = \sup_{\gamma - \boldlambda^T \xvec \geq 0, \; \forall x\in \calX} \left\{\mathbb{E}_P[\log(\gamma - \boldlambda^TX)] + 1 - \gamma + \boldlambda^T \boldmu \right\}. 
        \end{align}
        This is exactly the expression for $\KLinf(P, \boldmu)$ we obtained in~\eqref{eq:KLinf-derivation-no-scaling} while proving~\Cref{theorem:finite-support}.  Using a scaling trick, we can eliminate the dual variable $\gamma$, and obtain the familiar expression of~\Cref{theorem:finite-support}. 
    \end{remark}
    We end this section by specializing~\Cref{theorem:f-divergence} for (squared) Hellinger and Chi-squared divergences.
    \begin{corollary}
        \label{corollary:f-divergence} Let $\dualdomain$ denote the set $\{\boldlambda \in \R^K: 1 - \boldlambda^T(\xvec - \boldmu) \geq  0, \; \forall \xvec \in \calX\}$. Then, we have the following: 
        \begin{align}
            D_{\mathrm{Hel}}^{\inf}(P, \boldmu) &= \sup_{\boldlambda \in \dualdomain} \left(2 - 2 \sqrt{\mathbb{E}_P\left[ \frac{1}{1 - \boldlambda^T(X-\boldmu)} \right]}  \right), && \left( f(u) = (\sqrt{u}-1)^2 \right), \\
            D_{\chi^2}^{\inf}(P, \boldmu)&= \sup_{\boldlambda \in \dualdomain}\left[ \lrp{\mathbb{E}_P\left[\sqrt{1-\boldlambda^T(X-\boldmu)}\right]}^2 - 1 \right],  && \left( f(u) = (u-1)^2 \right).
        \end{align}
    \end{corollary}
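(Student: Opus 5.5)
The plan is to specialize \Cref{theorem:f-divergence} to each choice of $f$: compute $\ftilde$, $\Phi$, $U_f$ and $f(0)$ in closed form, and then eliminate the scalar dual variable $\gamma$ by the same scaling trick mentioned in \Cref{remark:sanity-check-KL}. Throughout, write $t(\xvec) = 1 + \gamma - \boldlambda^T \xvec$ and $c = t(\boldmu) = 1 + \gamma - \boldlambda^T\boldmu$. The objective in \Cref{theorem:f-divergence} then contains the term $-(\gamma - \boldlambda^T\boldmu) = 1 - c$. In both examples $f(0) = 1$, so the first constraint in $\fdivdualdomain$ reads $t(\boldrho) \geq 0$ for all $\boldrho \in \calX$.

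\emph{Hellinger.} Here $\ftilde(w) = w(w^{-1/2}-1)^2 = (1-\sqrt{w})^2$. Minimizing $1 - 2s + (1+r)s^2$ over $s = \sqrt{w} \geq 0$ gives the following.
\begin{itemize}
\item For $r > -1$, $\Phi(r) = r/(1+r) = 1 - 1/(1+r)$.
\item For $r \leq -1$, $\Phi(r) = -\infty$, so $U_f = (-1,\infty)$.
\end{itemize}
The dual objective becomes $2 - \mathbb{E}_P[1/t(X)] - c$, subject to $t > 0$ on $\calX$. Since $\boldmu \in \calX$, we have $c > 0$. Set $\boldlambda' = \boldlambda/c$, so that $t(\xvec) = c\,(1 - \boldlambda'^T(\xvec - \boldmu))$. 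With $A = \mathbb{E}_P[1/(1-\boldlambda'^T(X-\boldmu))]$, the objective becomes $2 - A/c - c$. Maximizing over $c > 0$ gives $c = \sqrt{A}$ and the value $2 - 2\sqrt{A}$. The map $(\gamma,\boldlambda) \mapsto (c,\boldlambda')$ is a bijection onto $(0,\infty) \times \{\boldlambda': 1 - \boldlambda'^T(\xvec-\boldmu) > 0\ \forall \xvec\in\calX\}$. This yields the claimed formula, except that the strict inequality appears instead of the one defining $\dualdomain$.

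\emph{Chi-squared.} Here $\ftilde(w) = (1-w)^2/w = 1/w - 2 + w$. This gives $\Phi(r) = 2\sqrt{1+r} - 2$ for $r \geq -1$ (at $r = -1$ the infimum is the limit $-2$), and $U_f = [-1,\infty)$. So the constraints are exactly $t \geq 0$ on $\calX$, and the objective is $2\,\mathbb{E}_P[\sqrt{t(X)}] - 1 - c$. For $c > 0$, the same scaling gives $2\sqrt{c}\,B - 1 - c$ with $B = \mathbb{E}_P[\sqrt{1-\boldlambda'^T(X-\boldmu)}]$. This is maximized at $\sqrt{c} = B$ with value $B^2 - 1$, as claimed. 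The degenerate case $c = 0$ needs a separate check. There $t$ is affine, nonnegative on $\calX$ and vanishes at the interior point $\boldmu$, which forces $\boldlambda = 0$ and $t \equiv 0$. The resulting value is $-1$, which is dominated by the value $0$ obtained at $\boldlambda' = 0$. So the degenerate case does not change the supremum.

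\emph{Main obstacle: boundary of the domain.} The remaining step, which I expect to be the only delicate one, is passing from the open constraint $1 - \boldlambda'^T(\xvec-\boldmu) > 0$ to the closed set $\dualdomain$ in the Hellinger case. I would argue as follows.
\begin{itemize}
\item Extend the objective to $\dualdomain$, taking the value $-\infty$ where $A = \infty$; since $2 - 2\sqrt{A} \leq 2$, the supremum is never increased by these points.
\item Fix $\boldlambda'$ on the boundary of $\dualdomain$ with $A < \infty$. Along the segment $\eta\boldlambda'$ with $\eta \uparrow 1$, the points are strictly feasible, because $\boldlambda' = 0$ is strictly feasible and $\dualdomain$ is convex.
\item Along this segment, $1/(1 - \eta\boldlambda'^T(X-\boldmu))$ is monotone in $\eta$ on the event where $\boldlambda'^T(X-\boldmu) \geq 0$ and bounded on its complement. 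Monotone and dominated convergence then give convergence of $A$, so boundary points add nothing beyond the supremum over the interior.
\end{itemize}
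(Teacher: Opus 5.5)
Your proposal is correct and follows the same route as the paper. You compute $\ftilde$, $\Phi$ and $U_f$, substitute into \Cref{theorem:f-divergence}, rescale $\boldlambda$ by $t(\boldmu)=1+\gamma-\boldlambda^T\boldmu$ (the paper's $1+c$), and optimize out that scalar to obtain $2-2\sqrt{A}$ and $B^2-1$. Your two extra steps fill in points the paper's proof passes over silently: passing from the strict constraint to the closed set $\dualdomain$ in the Hellinger case via monotone and dominated convergence along $\eta\boldlambda'$, and disposing of the degenerate case $t(\boldmu)=0$ for $\chi^2$.
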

    In both instances considered in~\Cref{corollary:f-divergence}, we use~\Cref{theorem:f-divergence} to derive the dual expression and then employ the ``scaling trick'' as in the proof of~\Cref{theorem:finite-support} to obtain the final forms stated above. The details are in~\Cref{proof:corollary-f-divergence}. 

\section{The General Limiting Argument}
\label{sec:general-argument-approximation-constraints}
The extension from finitely supported distributions to arbitrary distributions on $\calX = [0,1]^K$ in~\Cref{theorem:continuous-klinf} and~\Cref{theorem:f-divergence} strongly rely on the exact constraint satisfaction for discretized problems, which was ensured due to the properties of the mean-preserving discretization channel~(\Cref{def:mean-preserving-channel}). In general, as we go beyond mean constraints, the earlier arguments are no longer applicable, and instead we have to also account for approximate constraint satisfaction for discretized problems. We present the details of this argument in an abstract setting with a general divergence measure and constraint function, and then apply it to obtain a dual of $\KLinf$ with integral constraints in~\Cref{subsec:other-constraints}. 

As before, we work with the compact domain $\calX = [0,1]^K$ and consider an arbitrary divergence measure $D:\calP(\calX) \times \calP(\calX) \to [0, \infty]$. For some continuous constraint function $g:\calX \to \R^J$ for $J \geq 1$, and a closed and convex subset $\calC \subset g(\calX) \subset \R^J$, our goal is to obtain a dual representation of the following: 
\begin{align}
    I(P, g, \calC) = \inf\{D(P, Q): Q \in \calP(\calX), \; \mathbb{E}_Q[g(X)] \in \calC \}. \label{eq:general-divergence-to-C}
\end{align}
In order to state and prove the abstract limiting argument, we present a set of assumptions on the various components starting with the divergence measure $D$.
\begin{assumption}[Properties of Divergence Measure]
    \label{assump:general-divergence} We assume that the divergence measure $D:\calP(\calX) \times \calP(\calX) \to [0,\infty]$ satisfies the following two conditions: 
    \begin{align}
        &\text{If } P_k \Longrightarrow P, \; Q_k \Longrightarrow Q, \qtext{then} \liminf_{k \to \infty} D(P_k, Q_k) \geq D(P, Q). && \text{(Weak \lsc)} \\
        &\text{If } \calK \text{ is a Markov kernel}, \quad \text{then} \; D(P\calK, Q\calK) \leq D(P, Q). && \text{(DPI)} 
    \end{align}
\end{assumption}
Next, we formally state the conditions on $(g, \calC)$ that characterize the general constraints in~\eqref{eq:general-divergence-to-C}. 
\begin{assumption}[Continuity of Constraint]
    \label{assump:general-constraint} The constraint function $g:\calX \to \R^J$ for $J \geq 1$ is (uniformly) continuous with a modulus of continuity 
    \begin{align}
        \omega_g(\Delta) = \sup_{\xvec, \xvec' \in \calX: \|\xvec-\xvec'\|_\infty \leq \Delta} \, \|g(\xvec) - g(\xvec')\|_\infty, \qtext{with} \lim_{\Delta \downarrow 0} \omega_g(\Delta) = 0.  
    \end{align}
    The constraint set $\calC$ is a closed and convex subset of $\R^J$. 
\end{assumption}

In the case of $\KLinf$ in the previous section, we worked with a particular class of ``mean-preserving channels'' introduced in~\Cref{def:mean-preserving-channel}. However, that particular constraint-preserving property breaks down when we move to the integral constraints represented by $(g,\calC)$. Instead, we work with arbitrary discretization channels that approximately preserve the constraints as we discuss below. 
\begin{assumption}[Discretization-Channels]
    \label{assump:general-discretization} Let $\{\Delta_k: k \geq 1\}$ denote a positive sequence converging to $0$, and for each $k \geq 1$, let $V_k$ denote a $\Delta_k$-cover of the domain $\calX$ in terms of $\|\cdot\|_\infty$-norm. Let $\calK_k$ denote a discretization-channel, such that for any $\xvec \in \calX$, the distribution $\calK_k(\cdot \mid \xvec)$ is supported on $V_k \cap B_\infty(\xvec, \Delta_k)$. 
    For any $\calX$ valued random variable $X \sim P$, we then use $X_k \sim P_k$ to denote the output after passing $X$ through the channel $\calK_k$. 
    By construction, we have $\|X_k - X\|_\infty \leq \Delta_k$ almost surely, which means that 
    \begin{align}
        \|g(X_k) - g(X)\|_\infty \stackrel{a.s.}{\leq} \sup_{\xvec, \xvec': \|\xvec-\xvec'\|_\infty \leq \Delta} \|g(\xvec) - g(\xvec')\|_\infty = w_g(\Delta_k) \eqcolon \eta_k \downarrow 0, \qtext{as} k \to \infty. 
    \end{align}
\end{assumption}
This assumption says that, unlike our discussion of $\KLinf$ in~\Cref{sec:Klinf-mean-constrained}, a feasible $Q$ in the definition of~$I(P, g, \calC)$ in~\eqref{eq:general-divergence-to-C} may not necessarily remain feasible for the same constraint set after passing through a discretization channel $\calK_k$. Instead, we must enlarge the constraint set $\calC$ appropriately; that is, we define 
\begin{align}
    I_k \equiv I_k(P, g, \calC_k) = \inf \{D(P_k, Q): Q \in \calP(V_k), \; \mathbb{E}_Q[g(X)] \in \calC_k\}, \qtext{with} \calC_k = \calC + B_\infty(\boldsymbol{0}, \eta_k). \label{eq:general-divergence-to-C-k}
\end{align}
Thus, $I_k$ is the relaxed finite-support approximation of the original problem, with support restricted to $V_k$ and the constraint set enlarged to $\calC_k$. 
Now, let $(\R^{\dvec}, \|\cdot\|_2)$ denote the ambient space in which all our dual variables live. For any $k \geq 1$, let $\Theta_k \subset \R^{\dvec}$ denote the dual domain associated with~\eqref{eq:general-divergence-to-C-k}, and $H_k: \Theta_k \times \calP(V_k) \to \R \cup \{-\infty\}$ denote the dual objective, such that 
\begin{align}
        I_k =   \sup_{\theta \in \Theta_k}  H_k(\theta, P_k) = \sup_{\theta \in \Theta_k} \lrp{\mathbb{E}_{P_k}[\psi_k(X, \theta)] + b_k(\theta) }, 
\end{align}
for some measurable $\psi_k:\calX \times \Theta_k \to \R$ and $b_k: \Theta_k \to \R \cup \{-\infty\}$. In practice, $\psi_k$ is the $\xvec$-dependent part of the objective while $b_k$ collects the remaining $\theta$-dependent terms. 
In order for our argument to work, we need certain regularity conditions on the dual domains and objective functions, that we state next. 
\begin{assumption}[Dual Objective Functions]
    \label{assump:general-dual-function} 
    For each $k \geq 1$, we assume that the domain $\Theta_k$ is nonempty, convex, and compact, with a point $\theta_0 \in \mathring{\Theta}_k$ with $H_k(\theta_0, P_k) > -\infty$ for all $k \geq 1$. 
    For any $t \in (0, 1)$, let $\Theta_k^{(t)}$ denote the ``retraction'' of the domain $\Theta_k$; that is, $\Theta_k^{(t)} = t \theta_0 + (1-t) \Theta_k$, let $L_t < \infty$ denote a positive constant, and $\omega_t: [0, \infty) \to [0, \infty)$ denote a non-decreasing function. With these terms, suppose the following statements are true: 
    \begin{align}
        &|\psi_k(\xvec, \theta) - \psi_k(\xvec', \theta)| \leq L_t \|\xvec - \xvec'\|_\infty, \quad \forall \xvec, \xvec', \; \forall \theta \in \Theta_k^{(t)}, \; \forall k \geq 1. && (\text{Uniform Lipschitz in $\xvec$}) \\
        &|H_k(\theta, P_k) - H_k(\theta', P_k)| \leq \omega_t(\|\theta-\theta'\|_2), \quad \forall \theta, \theta' \in \Theta_k^{(t)}. && (\text{Uniform Continuity in $\theta$}) \\
        & \max \bigg\{ \sup_{k} \sup_{\theta \in \Theta_k^{(t)}} |b_k(\theta)|, \, \sup_{k} \sup_{\xvec, \theta \in \calX \times \Theta_k^{(t)}} |\psi_k(\xvec, \theta)| \bigg\}< \infty. && (\text{Uniform Boundedness}) 
    \end{align}
    Additionally, we assume throughout that $H_k$ is concave over its domain $\Theta_k$, for all $k \geq 1$.  
\end{assumption}

The sets $\Theta_k^{(t)}$ play the role of the ``interior-dual-domain'' $\epsdualdomain$ that we introduced while proving~\Cref{theorem:continuous-klinf} for relative entropy, and restrict the analysis to the points away from the boundary where the dual objective may diverge. The three conditions above then formalize the requirement that the family of dual objectives are uniformly ``well-behaved'' for all $k \geq 1$.

The final assumption, that is new for this particular result, arises from the fact that we are now working with approximate constraints in the intermediate problems. 
\begin{assumption}[Dual Limits]
    \label{assump:general-dual-limits}
    Suppose that there exists a nonempty, convex, and compact $\Theta \subset \R^{\dvec}$ with $\theta_0 \in \mathring{\Theta}$, such that  for some vanishing positive sequence $\{s_k: k \geq 1\}$, we have 
    \begin{align}
        d_H(\Theta_k, \Theta) \coloneqq \max \big\{ \sup_{\theta \in \Theta_k} \|\theta - \Theta\|_2, \, \sup_{\theta \in \Theta}  \|\theta - \Theta_k\|_2 \big\} = s_k \downarrow 0. 
    \end{align}
    As before,  for any $t\in (0,1)$, we can define the ``retraction'' $\Theta^{(t)} = t \theta_0 + (1 - t) \Theta$, which we also assume is compact. 
    Next, let $\Pi_{\Theta_k}$%
    denote the (Euclidean) projection from $\R^{\dvec}$ to $\Theta_k$, and introduce the following ``identification maps'', $\tau_{k,t}$,
    \begin{align}
        &\tau_{k,t}: \Theta^{(t)} \to \Theta_k^{(t)}, \quad \text{such that } \tau_{k,t}(\theta) = t \theta_0 + (1-t) \Pi_{\Theta_k}\lrp{\frac{\theta - t \theta_0}{1-t} }, \quad \forall \theta \in \Theta^{(t)}. \label{eq:tau-k-t-def} %
    \end{align}
     Finally, we assume that for every $t \in (0,1)$, there exists a countable dense $\calD_t \subset \Theta^{(t)}$, such that  
     \begin{align}
         \text{for every } \theta \in \calD_t, \quad \lim_{k \to \infty} {F_{k,t}(\theta)}  \quad \text{exists, and is finite}, \qtext{where} F_{k,t}(\theta) \coloneqq H_k(\tau_{k,t}(\theta), P_k). 
     \end{align}
\end{assumption}
The conditions ensure that the sequence of discretized dual problems asymptotically approach a well-defined limit. In particular, the Hausdorff convergence $d_H(\Theta_k, \Theta) \to 0$ says that the dual  domains are `stable' and converge to a fixed compact limit $\Theta$. For each $t \in (0,1)$, the identification map $\tau_{k,t}$ maps elements of the retracted limiting domain $\Theta^{(t)}$ to the closest point in $\Theta_k^{(t)}$, allowing for an analysis of the varying dual objectives over a common domain, while avoiding boundary effects. Finally, the existence of the dense subset $\calD_t$ is sufficient for the existence of a unique limiting dual objective function.

With all these assumptions available, we can now state the main result of this section.

\begin{theorem}
    \label{theorem:general-approximate-constraint} For $\calX = [0,1]^K$, fix a distribution $P \in \calP(\calX)$, and a divergence $D: \calP(\calX) \times \calP(\calX) \to [0, \infty]$ satisfying~\Cref{assump:general-divergence}. For $\{\Delta_k\}_{k \geq 1} \downarrow 0$, and with $\{V_k\}_{k \geq 1}$ denoting $\Delta_k$-covers of $\calX$ under $\|\cdot\|_\infty$, let $\{\calK_k\}_{k \geq 1}$ denote a sequence of discretization channels as in~\Cref{assump:general-discretization}, producing $P_k = \calK_k P$ supported on $V_k$. For the continuous constraint function $g:\calX \to \R^J$, let $\calC_k = \calC + B_\infty(\boldsymbol{0}, \eta_k)$ denote the enlarged constraint set with $\eta_k = \omega_g(\Delta_k)$ from~\Cref{assump:general-constraint}. For any $k \geq 1$, assume that the following dual representation is valid.  
    \begin{align}
        I_k = \sup_{\theta \in \Theta_k} H_k(\theta, P_k)  = \sup_{\theta \in \Theta_k} \lrp{ \mathbb{E}_{P_k}[\psi_k(X, \theta)] + b_k(\theta)}, 
    \end{align}
    and the dual domains converge in Hausdorff metric to a limiting set $\Theta$, and the  technical conditions stated in~\Cref{assump:general-dual-function} and~\Cref{assump:general-dual-limits} hold. Then, we have the following conclusions: 
    \begin{enumerate}
        \item  $\lim_{k \to \infty} I_k = I(P, g, \calC)$. 
        \item For every $t \in (0, 1)$, there exists a unique continuous function $H^{(t)}(\cdot, P): \Theta^{(t)} \to \R$, such that 
        \begin{align}
            \sup_{\theta \in \Theta^{(t)}} \left\lvert F_{k,t}(\theta) - H^{(t)}(\theta, P) \right\rvert \stackrel{k \to \infty}{\longrightarrow} 0, \qtext{where recall that} F_{k,t}(\theta) = H_k(\tau_{k,t}(\theta), P_k).  
        \end{align}
        \item We get the following limiting dual representation 
        \begin{align}
            I(P, g, \calC) = \lim_{k \to \infty} \sup_{\theta \in \Theta_k} H_k(\theta, P_k) = \sup_{t \in (0,1)} \sup_{\theta \in \Theta^{(t)}} H^{(t)}(\theta, P). 
        \end{align}
        \item If, in addition, there exists an $H(\cdot, P): \Theta \to \R$, such that for all $t \in (0, 1)$, we have $H(\theta, P) = H^{(t)}(\theta, P)$ for all $\theta \in \Theta^{(t)}$, then we have 
        \begin{align}
            I(P, g, \calC) = \sup_{\theta \in \Theta} H(\theta, P). 
        \end{align}
    \end{enumerate}
\end{theorem}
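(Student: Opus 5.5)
We prove the four conclusions in order, since each builds on the previous one.

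\emph{Conclusion 1.} We establish $\liminf_k I_k \geq I$ and $\limsup_k I_k \leq I$ separately.

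For the upper bound, fix any feasible $Q$ with $\mathbb{E}_Q[g(X)] \in \calC$ and $D(P,Q)<\infty$. Set $Q_k = Q\calK_k$; it is supported on $V_k$. By \Cref{assump:general-discretization} we have $\|\mathbb{E}_{Q_k}[g] - \mathbb{E}_Q[g]\|_\infty \leq \eta_k$, so $\mathbb{E}_{Q_k}[g] \in \calC_k$ and $Q_k$ is feasible for $I_k$. The DPI then gives $I_k \leq D(P_k, Q_k) \leq D(P,Q)$. Taking the infimum over $Q$ yields $\limsup_k I_k \leq I$.

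For the lower bound, choose $Q^{(k)} \in \calP(V_k)$ with $\mathbb{E}_{Q^{(k)}}[g] \in \calC_k$ and $D(P_k,Q^{(k)}) \leq I_k + 1/k$. Pass to a subsequence that realizes $\liminf_k I_k$. By compactness of $\calX$ (Prokhorov), extract a further subsequence with $Q^{(k)} \Rightarrow Q^\star$. Continuity of $g$ gives $\mathbb{E}_{Q^{(k)}}[g] \to \mathbb{E}_{Q^\star}[g]$. Since $\calC_k \downarrow \calC$ with $\calC$ closed and $\eta_k \to 0$, we get $\mathbb{E}_{Q^\star}[g]\in \calC$. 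As $P_k \Rightarrow P$ (the argument of \Cref{lemma:convergence-of-KL}), weak \lsc gives $I \leq D(P,Q^\star) \leq \liminf_k I_k$.

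\emph{Conclusion 2.} Fix $t$. We aim for equicontinuity plus uniform boundedness of $\{F_{k,t}\}_k$ on the compact set $\Theta^{(t)}$, and then apply an Arzel\`a--Ascoli/diagonal argument.

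Boundedness follows from the uniform boundedness of $\psi_k$ and $b_k$ on $\Theta_k^{(t)}$, because $\tau_{k,t}$ maps into $\Theta_k^{(t)}$.

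For equicontinuity, $\Pi_{\Theta_k}$ is $1$-Lipschitz, so $\tau_{k,t}$ is $1$-Lipschitz. The uniform continuity modulus $\omega_t$ then gives $|F_{k,t}(\theta)-F_{k,t}(\theta')|\leq \omega_t(\|\theta-\theta'\|_2)$. Here I will need $\omega_t$ to vanish at $0$, which I read as implicit in the assumption.

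By assumption, $F_{k,t}$ converges on the countable dense set $\calD_t$. Equicontinuity then upgrades this to pointwise convergence everywhere, via a standard $3\epsilon$ argument, and compactness upgrades it to uniform convergence. The limit $H^{(t)}$ is continuous with modulus $\omega_t$. Uniqueness is immediate, since it is a uniform limit.

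\emph{Conclusion 3.} This is the main obstacle: we must compare $\sup_{\Theta_k} H_k$ with $\sup_{\Theta_k^{(t)}} H_k$ uniformly in $k$, and compare $\Theta_k^{(t)}$ with $\tau_{k,t}(\Theta^{(t)})$.

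For the first comparison, use concavity. Any $\theta\in\Theta_k$ has the retraction $t\theta_0+(1-t)\theta \in \Theta_k^{(t)}$, and
\begin{align}
H_k(t\theta_0+(1-t)\theta) \geq tH_k(\theta_0)+(1-t)H_k(\theta).
\end{align}
Since $H_k(\theta_0)$ is uniformly bounded (take $\theta_0$ in every retraction) and $\sup_{\Theta_k} H_k = I_k$ is bounded (it converges by conclusion 1, assuming $I<\infty$; the case $I=\infty$ is handled separately by the same inequalities), we obtain
\begin{align}
\sup_{\Theta_k^{(t)}}H_k \geq \sup_{\Theta_k} H_k - t\,C
\end{align}
for a constant $C$ independent of $k$.

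For the second comparison, I claim $\tau_{k,t}(\Theta^{(t)})$ is $(1-t)s_k$-dense in $\Theta_k^{(t)}$. Indeed, for $\theta'=t\theta_0+(1-t)\phi$ with $\phi\in\Theta_k$, pick $\psi\in\Theta$ with $\|\psi-\phi\|\leq s_k$. Then $\tau_{k,t}(t\theta_0+(1-t)\psi)=t\theta_0+(1-t)\Pi_{\Theta_k}\psi$, which lies within $2(1-t)s_k$ of $\theta'$. Combined with $\omega_t$, this gives
\begin{align}
\Big|\sup_{\Theta^{(t)}}F_{k,t} - \sup_{\Theta_k^{(t)}}H_k\Big| \leq \omega_t(2s_k).
\end{align}

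Letting $k\to\infty$ and using conclusion 2, we get
\begin{align}
\sup_{\Theta^{(t)}}H^{(t)} \leq I \leq \sup_{\Theta^{(t)}}H^{(t)} + tC.
\end{align}
Taking $\sup_t$ (or $t\downarrow 0$) gives conclusion 3.

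\emph{Conclusion 4.} When a common $H$ exists, the union of the retractions $\bigcup_t \Theta^{(t)}$ is contained in $\Theta$, so the right-hand side of conclusion 3 is at most $\sup_\Theta H$. The reverse inequality needs care at boundary points. I would argue it via the bound $\sup_{\Theta^{(t)}}H^{(t)}\geq I-tC$, which, combined with concavity of $H$ inherited as a pointwise limit of the concave functions $F_{k,t}$ up to an $O(s_k)$ error, shows that boundary values of $H$ are at most $I$. More precisely, I would show $H(\theta)\leq \liminf_{t\to0}H(t\theta_0+(1-t)\theta)$; failing that, I would use the identity $\sup_\Theta H=\sup_t\sup_{\Theta^{(t)}}H$, which holds whenever $H$ is concave and finite at $\theta_0$.
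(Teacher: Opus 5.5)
Your treatment of conclusions 1--3 is correct and follows essentially the paper's route. For the primal limit you use DPI together with Prokhorov and weak \lsc. For the uniform limit $H^{(t)}$ you use the $1$-Lipschitz identification map, the modulus $\omega_t$, and convergence on $\calD_t$. For conclusion 3 you combine the concavity bound $A_k(t)\ge tH_k(\theta_0,P_k)+(1-t)I_k$, which is uniform in $k$, with the $O(s_k)$-density of $\tau_{k,t}(\Theta^{(t)})$ in $\Theta_k^{(t)}$. You are right that $\omega_t(r)\to 0$ as $r\downarrow 0$ has to be read into \Cref{assump:general-dual-function}; the paper relies on this implicitly as well.

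The gap is in conclusion 4, where your primary route cannot work. Since $\theta_0\in\mathring{\Theta}$ and $\Theta$ is convex, every retraction $\Theta^{(t)}=t\theta_0+(1-t)\Theta$ lies in $\mathring{\Theta}$. In fact $\bigcup_{t\in(0,1)}\Theta^{(t)}=\mathring{\Theta}$, so the compatibility hypothesis determines $H$ only on the interior. The concavity you inherit from the $F_{k,t}$ is genuine: it does hold for $H^{(t)}$ on $\Theta^{(t)}$, up to $\omega_t(2s_k)$ errors that vanish. But it is concavity on $\mathring{\Theta}$ only and says nothing about $H$ on $\partial\Theta$. Concretely, you can redefine $H$ to equal $I(P,g,\calC)+1$ at a single boundary point. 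This keeps every hypothesis of item 4 and breaks its conclusion, so $H(\theta)\le\liminf_{t\downarrow 0}H(t\theta_0+(1-t)\theta)$ cannot be derived from what is given.

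What is missing is an extra hypothesis on $H$ over all of $\Theta$, for example concavity with $H(\theta_0,P)>-\infty$, or upper semicontinuity along segments toward $\theta_0$. With that hypothesis, your fallback $\sup_{\Theta}H=\sup_t\sup_{\Theta^{(t)}}H$ is exactly the paper's argument via \Cref{lemma:approx-constraint-interchange-1}. The paper's proof inserts precisely this assumption (``there exists a concave $H(\cdot,P)$''), even though the statement omits it. You should state it as a hypothesis rather than try to derive it. In the application, \Cref{theorem:general-constraint-KL}, it holds because the limiting $H$ is concave on $\Theta$.
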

The proof of this result proceeds broadly in two stages analogous to the proof of~\Cref{theorem:continuous-klinf}. We first show that the sequence of primal discretized problems converge to the original value; that is, $I_k \to I(P, g, \calC)$, by observing that for any feasible $Q$ for the original problem is transformed by the discretization channel to $Q_k$ that is feasible with the enlarged $\calC_k$. The desired convergence then follows from the DPI and weak lower semicontinuity conditions of~\Cref{assump:general-divergence}. We then turn to the more delicate part of the argument, and show the convergence of the dual of these discretized problems. One crucial complication, in comparison to~\Cref{theorem:continuous-klinf}, is that both the dual objectives $H_k$ and dual domains $\Theta_k$ may vary with $k$. To address this, for any fixed $t \in (0,1)$, we first retract or shrink the domain $\Theta_k$ to $\Theta_k^{(t)}$, pull back the corresponding objectives to a fixed limiting set $\Theta^{(t)}$ via the identification map $\tau_{k,t}$, and study the behavior of the resulting functions $F_{k,t}(\theta) = H_k(\tau_{k,t}(\theta), P_k)$. The regularity assumptions of~\Cref{assump:general-dual-function} imply that for each $t \in (0,1)$, this family of functions $\{F_{k,t}: k \geq 1\}$ is uniformly bounded and equicontinuous, which yields a continuous limit $H^{(t)}(\cdot, P)$ on $\Theta^{(t)}$. Finally, we use the concavity of each $H_k$ to pass from the retracted to the full domain, allowing us to identify $\lim_{k} \sup_{\theta \in \Theta_k} H_k(\theta, P_k)$ with $\sup_{t \in (0,1)} \sup_{\theta \in \Theta^{(t)}} H^{(t)}(\theta, P)$.
We present the details in~\Cref{proof:general-limiting-argument}.

\subsection{$\KLinf$ with General Constraints}
\label{subsec:other-constraints}
In this section, we discuss the extension of the constrained minimum relative entropy term beyond mean constraints. 
In particular, we will show how~\Cref{theorem:general-approximate-constraint} can be used to obtain a dual of  
\begin{align}
    \KLinf(P, g, \calC) \coloneqq \inf_{Q \in \calQ_{g,\calC}} \KL(P, Q), \qtext{where} \calQ_{g, \calC} = \left\{ Q: \mathbb{E}_Q[g(X)]  \in \calC \right\},  \label{eq:general-constraint-1}
\end{align}
where $g: \calX \to \R^J$ for some $J \geq 1$ is a continuous constraint function,  $\calC \subset \R^J$ is a closed and convex constraint set. 
The simplest generalization beyond mean constraint is when $g$ is an affine function of the form $g(\xvec) = A \xvec +\mathbf{b}$. Due to the linearity of expectation, for any distribution $Q$ on $(\calX, \calB)$ satisfying this constraint, for any $k \geq 1$ and $\calK_k$ denoting the mean-preserving channel from~\Cref{def:mean-preserving-channel}, we have $\mathbb{E}_{Q}[A X] = \mathbb{E}_{Q_k}[A X_k]$, where $Q_k = \calK_k Q$. Thus, the existing argument in~\Cref{theorem:continuous-klinf} generalizes naturally to such constraints. We build  upon this in~\Cref{appendix:box-constraints} and identify a broader class of nonlinear functions for which we can still build such channels to obtain the dual for $\KLinf(P, g, \calC)$. 

However, this constraint-preserving approach breaks down for more general constraints in which the coordinates of $X$ are coupled; for example, if $g(\xvec) = (\|\xvec\|, \|\xvec\|^2)$. This motivates returning to the abstract framework of~\Cref{sec:general-argument-approximation-constraints} where we allow approximate constraint satisfaction in the discretized problems, rather than seeking to construct an exact constraint-preserving discretization channel. The limiting argument developed in~\Cref{theorem:general-approximate-constraint} then allows us to establish the required duality. 
\begin{theorem}
    \label{theorem:general-constraint-KL} Consider a  Lipschitz continuous $g:\calX \to \R^J$, and let $\calC \subset \R^J$ denote a compact and convex constraint set. For a sequence $(\Delta_k)_{k \geq 1}$ converging to $0$, let $(V_k)_{k \geq 1}$ denote the $\Delta_k$-covers of $\calX$ and let $(\calK_k)_{k \geq 1}$ denote arbitrary discretization channels such that the distribution $\calK_k(\cdot \mid \xvec)$ is supported on $V_k \cap B_\infty(\xvec, \Delta_k)$ for all $k \geq 1$, $\xvec \in \calX$. Assume that $\operatorname{Conv}(g(\calX))$ has a nonempty interior in $\R^J$ and $\operatorname{int}\left(\operatorname{Conv}(g(\calX))\right) \cap \calC \not =  \emptyset$. Then, we have 
    \begin{align}
        \KLinf(P, g, \calC)  = \sup_{\substack{(\boldlambda, \gamma) \in \R^J \times \R \\ \gamma - \langle \boldlambda, g(\xvec)\rangle \geq 0, \, \forall \xvec \in \calX}} 
        \lrp{
        \mathbb{E}_P[\log \lrp{\gamma - \langle \boldlambda, g(X) \rangle }] + 1 - \gamma + \inf_{\cvec \in \calC} \langle \cvec, \boldlambda \rangle
        }. 
    \end{align}
\end{theorem}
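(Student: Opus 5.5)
We will obtain the statement as an instance of \Cref{theorem:general-approximate-constraint} with $D = \KL$. \Cref{assump:general-divergence} holds by the DPI and weak lower semicontinuity of relative entropy, as already used in \Cref{lemma:convergence-of-KL}. \Cref{assump:general-constraint} holds because a Lipschitz $g$ (with constant $L_g$) has $\omega_g(\Delta)\le L_g\Delta$ and $\calC$ is compact and convex. \Cref{assump:general-discretization} is exactly the hypothesis on the channels $\calK_k$. What remains is to derive the finite-support dual $I_k$ explicitly, verify \Cref{assump:general-dual-function}--\Cref{assump:general-dual-limits}, and identify the limit objective $H$ so that conclusion~4 applies.

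\textbf{Step 1: the finite-support dual of $I_k$.} Here $P_k$ is supported on the finite grid $V_k$, and the constraint is $\mathbb{E}_Q[g(X)]\in\calC_k=\calC+B_\infty(\boldsymbol 0,\eta_k)$. We mimic the proof of \Cref{theorem:finite-support}. First restrict to $Q\in\calP(V_k)$, which is finite dimensional. Next write the constraint as $\mathbb{E}_Q[g]=\cvec$ with $\cvec\in\calC_k$. Then form the Lagrangian with multiplier $\boldlambda$ for the moment constraint and $\gamma$ for normalization, keeping $\cvec$ as a primal variable. Minimizing over $Q$ and $\cvec$ yields
\[
I_k=\sup_{(\boldlambda,\gamma)\in\Theta_k}\Bigl(\mathbb{E}_{P_k}[\log(\gamma-\langle\boldlambda,g(X)\rangle)]+1-\gamma+\inf_{\cvec\in\calC_k}\langle\cvec,\boldlambda\rangle\Bigr),
\]
where $\gamma-\langle\boldlambda,g(v)\rangle\ge0$ for all $v\in V_k$.

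Strong duality (Slater) follows from the hypothesis $\operatorname{int}(\operatorname{Conv} g(\calX))\cap\calC\neq\emptyset$. Since $V_k$ is a $\Delta_k$-cover and $g$ is continuous, $\operatorname{Conv} g(V_k)$ contains a fixed interior point $\cvec_0\in\calC$ for all large $k$. Hence there is a full-support $Q$ on $V_k$ with $\mathbb{E}_Q g=\cvec_0$, and it lies in the interior of $\calC_k$.

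This dual domain is not compact, so we compactify it. Using the same slackness point, we bound optimal multipliers uniformly: $\|\boldlambda\|\le R$ and $\gamma\in[\gamma_{\min},R]$ for some $R$ independent of $k$. For this we compare the objective against the value at $(\boldsymbol 0,1)$, which equals $0\le I_k$, and use $\cvec_0$ being strictly interior. We then define $\Theta_k$ as the intersection of the feasible set with this ball, taking $\theta_0=(\boldsymbol 0,1)$. Replacing the constraint over $v\in V_k$ by the constraint over $x\in\calX$ with a slack of $L_g R\Delta_k$ gives the Hausdorff convergence $\Theta_k\to\Theta$. Here $\Theta=\{\gamma-\langle\boldlambda,g(x)\rangle\ge0\ \forall x\}\cap\text{ball}$, and both sets contain $\theta_0$ in their interior.

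\textbf{Step 2: regularity.} On the retraction $\Theta_k^{(t)}$, every $\theta$ satisfies $\gamma-\langle\boldlambda,g(x)\rangle\ge t\cdot 1 - o(1)\ge t/2$ on $V_k$ and its $\Delta_k$-neighbourhood. Therefore $\psi_k(x,\theta)=\log(\gamma-\langle\boldlambda,g(x)\rangle)$ is bounded and Lipschitz in $x$ with constant $2RL_g/t$. It is also uniformly Lipschitz in $\theta$ with a constant depending only on $t$, $R$ and $\sup\|g\|$. The term $b_k(\theta)=1-\gamma+\inf_{\calC_k}\langle\cvec,\boldlambda\rangle$ is concave, bounded and Lipschitz, with $|b_k-b|\le \eta_k\|\boldlambda\|_1\to0$, where $b$ uses $\calC$. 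Concavity of $H_k$ is immediate.

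For \Cref{assump:general-dual-limits} we show the pointwise limit directly for every $\theta\in\Theta^{(t)}$. First, $\tau_{k,t}(\theta)\to\theta$ because $\Pi_{\Theta_k}$ moves points by at most $s_k$. Second, $P_k\Rightarrow P$ and $\psi$ is continuous and bounded on the relevant region. Together these give $F_{k,t}(\theta)\to\mathbb{E}_P[\log(\gamma-\langle\boldlambda,g(X)\rangle)]+1-\gamma+\inf_{\calC}\langle\cvec,\boldlambda\rangle=:H(\theta,P)$. This limit is independent of $t$, so conclusion~4 applies.

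\textbf{Step 3: conclusion, and the expected obstacle.} Conclusion~4 gives $\KLinf(P,g,\calC)=\sup_{\Theta}H$. To remove the artificial ball, we note that the a priori bound $R$ is also valid for the limiting problem. Alternatively, enlarging $R$ can only increase the supremum, yet by conclusion~1 that supremum always equals the same primal value. Either way the supremum over the full feasible set is the claimed expression. The main obstacle is the uniform-in-$k$ compactness bound on the dual multipliers from Step 1. It needs a quantitative Slater argument: the fixed interior point $\cvec_0$ must remain at distance $\ge r>0$ from the boundary of $\operatorname{Conv} g(V_k)$ for all large $k$. It also needs a coercivity estimate of the form $H_k(\theta)\le C-r\|\boldlambda\|$. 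We would try first to derive this by evaluating the Lagrangian at the full-support $Q$ constructed above.
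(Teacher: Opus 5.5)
Your outline follows the paper's proof almost step for step. You apply \Cref{theorem:general-approximate-constraint} with $\theta_0=(\boldsymbol 0,1)$ and derive the finite-support Lagrangian dual, getting Slater's condition from an interior point of $\operatorname{Conv} g(\calX)$ that lies in $\calC$. You then intersect with a ball of radius $R$, use the $t/2$ positivity margin on the retractions, and get $d_H(\Theta_k,\Theta)\to 0$ from an $O(\Delta_k)$ slack in $\gamma$. Finally you identify the limit via $P_k\Rightarrow P$ and $\inf_{\calC_k}\langle\boldlambda,\cvec\rangle=\inf_{\calC}\langle\boldlambda,\cvec\rangle-\eta_k\|\boldlambda\|_1$. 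Your final step is correct: enlarging $R$ leaves the value equal to $I(P,g,\calC)$, so the ball can be dropped. The paper's proof leaves this implicit.

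The genuine gap is the uniform-in-$k$ bound on the optimal multipliers (the paper's \Cref{lemma:KL-general-limit-1}). Every uniform constant in \Cref{assump:general-dual-function} depends on it, yet you only plan it, and the plan as stated does not work.

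\emph{Why the plan fails.} Weak duality at a single $Q$ with $\mathbb{E}_Q[g]=\cvec_0$ gives
\[
H_k(\boldlambda,\gamma)\le \KL(P_k,Q)+\inf_{\cvec\in\calC_k}\langle\boldlambda,\cvec\rangle-\langle\boldlambda,\cvec_0\rangle .
\]
Because $\cvec_0\in\calC$, the only guaranteed decay is $-\eta_k\|\boldlambda\|_1$, and this is exact when $\calC=\{\cvec_0\}$. That slope vanishes as $k\to\infty$. Moreover, a generic full-support $Q$ on $V_k$ can put mass of order $1/|V_k|$ on $\supp(P_k)$, so $\KL(P_k,Q)$ need not stay bounded in $k$. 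The interiority of $\cvec_0$ in $\operatorname{Conv} g(\calX)$ has to be exploited in the direction of $\boldlambda$.

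\emph{The paper's fix.} Set $r=\|\boldlambda\|_1$, $\uvec=\boldlambda/r$, $M_k(\uvec)=\max_{\vvec\in V_k}\langle\uvec,g(\vvec)\rangle$, and $\gamma=\gamma'+rM_k(\uvec)$ with $\gamma'>0$. The paper bounds the objective by
\[
\log(1+2G_\infty r)+1-r\bigl(M_k(\uvec)-\inf_{\calC_k}\langle\uvec,\cvec\rangle\bigr).
\]
Using $M_k(\uvec)\ge M(\uvec)-\eta_k$ and $\inf_{\calC_k}\langle\uvec,\cvec\rangle\le\langle\uvec,\mvec^\circ\rangle$, the bracket is at least $\nu-\eta_k\ge\nu/2$. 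Here $\nu=\inf_{\|\uvec\|_1=1}\bigl(\max_{\mvec\in\calM}\langle\uvec,\mvec\rangle-\langle\uvec,\mvec^\circ\rangle\bigr)>0$.

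\emph{A repair of your own idea.} Weak duality also works with a direction-dependent family of primal points. Suppose $B_2(\cvec_0,r_0)\subseteq\operatorname{Conv} g(V_k)$ for all large $k$; this is your quantitative Slater claim, and it follows from comparing support functions with an $\eta_k$ error. Take $Q_{\boldlambda}=(1-\epsilon)Q'_{\boldlambda}+\epsilon P_k$ with $\mathbb{E}_{Q_{\boldlambda}}[g]=\cvec_0+\rho\,\boldlambda/\|\boldlambda\|_2$, where $\rho,\epsilon$ are small relative to $r_0$. Then $\KL(P_k,Q_{\boldlambda})\le\log(1/\epsilon)$. Weak duality with $\cvec=\cvec_0$ yields $H_k(\boldlambda,\gamma)\le\log(1/\epsilon)-\rho\|\boldlambda\|_2$ uniformly in $k$. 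Since the optimal value $I_k$ is at least $0$, this bounds $\boldlambda_k^\star$, and the bound on $\gamma_k^\star$ then follows as you indicate.
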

The proof of this result is presented in~Appendix~\ref{proof:general-constraint-KL}, and it proceeds by verifying the five assumptions used by the general limiting argument of~\Cref{theorem:general-approximate-constraint}. The first three assumptions~(Assumptions~\ref{assump:general-divergence}--\ref{assump:general-discretization}) are easy to verify, so the nontrivial part of the proof involves identifying the dual objective and domain for the discretized problem, and verifying that they satisfy~\Cref{assump:general-dual-function} and~\Cref{assump:general-dual-limits}.  

\begin{remark}
    \label{remark:lipschitz-constaint} The statement of~\Cref{theorem:general-constraint-KL} requires the constraint function $g$ to be Lipschitz continuous. This additional condition  is  placed  mainly to simplify the verification of~\Cref{assump:general-dual-function} in the proof. Since $\calX =[0,1]^K$ is compact, we know from Stone-Weierstrass theorem, that every continuous $g$ can be uniformly approximated by a polynomials, which are Lipschitz continuous on compact sets. This suggests that the dual representation of~\Cref{theorem:general-constraint-KL} should extend to continuous $g$, but we do not pursue that extra approximation argument here. 
\end{remark}
\begin{remark}
    \label{remark:nonempty-interior} The extra assumption that $\operatorname{int}\operatorname{Conv}(g(\calX)) \cap \calC$ is nonempty in the statement of~\Cref{theorem:general-constraint-KL} serves two purposes: first, it immediately justifies the strong duality for the discretized problems, and second, it aids the verification of~\Cref{assump:general-dual-function} and~\Cref{assump:general-dual-limits} by allowing us to identify compact subsets of the dual domains that contain the optimizers.  As with the Lipschitz assumption on $g$, this nonempty interior condition is also not strictly necessary and can potentially be weakened at the cost of additional technical steps that we do not pursue here.  
\end{remark}

\section{Statistical Applications}
\label{sec:statistical-applications}

In this section, we show how the dual representation of $\KLinf$ can be used to construct optimal sequential anytime-valid inference  procedures for observations supported on $\calX = [0,1]^K$ for $K \geq 1$. 

\paragraph{Sequential Testing:} Suppose $\{X_n: n \geq 1\}$ denotes a sequence of \iid\ observations drawn from a distribution $P_X \in \calP(\calX)$, where $\calX = [0,1]^K$. Let $\boldmu_X$ denote the unknown mean $\mathbb{E}_{X \sim P_X}[X]$. For some fixed $\boldmu \in \mathring{\calX}$, we are interested in deciding between 
\begin{align}
    H_0: \boldmu_X = \boldmu, \qtext{versus} H_1: \boldmu_X \neq \boldmu. \label{eq:mean-testing-problem-def}
\end{align}
We want to design a power one, level-$\alpha$ sequential test, which is a specification of stopping time $\tau_\alpha$ satisfying 
\begin{align}
     \mathbb{P}_{H_0}\lrp{ \tau_\alpha < \infty }\leq \alpha, \qtext{and} 
    \mathbb{P}_{H_1}\lrp{\tau_\alpha < \infty } = 1. 
\end{align}
With $\calT(\boldmu, \alpha)$ denoting the class of all level-$\alpha$, power-one tests for the null stated in~\eqref{eq:mean-testing-problem-def}, we are interested in characterizing the term $\inf_{\tau_\alpha \in \calT(\boldmu, \alpha)} \; {\mathbb{E}_{P}[\tau_\alpha]}$ in the limit as $\alpha \to 0$, for every $P\in \mathcal P(\mathcal X)$ with mean not equal to $\boldmu$. First, we describe the construction of a level-$\alpha$ test based on the empirical $\KLinf$ term and then analyze its performance in~\Cref{theorem:sequential-testing}. 
\begin{definition}
    \label{def:sequential-test} Fix an $\alpha \in (0,1)$, and $\boldmu \in \mathring{\calX}$. Let $\{X_n: n \geq 1\} \overset{\iid}{\sim} P_X$. For any $n \geq 2$, let $\widehat{P}_n$ denote the empirical distribution based on $(X_1, \ldots, X_{n-1})$, and define the stopping time 
    \begin{align}
        \tau_{\alpha} \equiv \tau_\alpha(\boldmu) = \inf \{n \geq 2: (n-1)\KLinf(\widehat{P}_n, \boldmu) \geq  K \log(n) + \log(1/\alpha) + 1\}. 
    \end{align}
    Due to the dual representation of $\KLinf$, we have 
        $\KLinf(\widehat{P}_n, \boldmu) = \sup_{\boldlambda \in \dualdomain} \frac{1}{n-1} \sum_{i=1}^{n-1} \log (1 - \boldlambda^T(X_i - \boldmu))$, 
    which can be computed using off-the-shelf convex solvers in a computationally feasible manner. 
\end{definition}
We now state the main result of this section. 
\begin{proposition}
    \label{theorem:sequential-testing} 
    For any $P_X \in \calP(\calX)$ with mean $\boldmu_X \neq \boldmu \in (\mathring\calX)$, we have the following: 
    \begin{align}
      \lim_{\alpha \downarrow 0}\;  \inf_{\tau_\alpha \in \calT(\boldmu, \alpha)} \;  \frac{\mathbb{E}_{P_X}[\tau_\alpha]}{\log(1/\alpha)} = \frac{1}{\KLinf(P_X, \boldmu)}. 
    \end{align}
    Furthermore, the test introduced in~\Cref{def:sequential-test} lies in $\calT(\boldmu, \alpha)$, and achieves the infimum in the above expression for every $\alpha \in (0,1)$. 
\end{proposition}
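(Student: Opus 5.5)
We prove the statement in three parts: the lower bound valid for every test in $\calT(\boldmu,\alpha)$, the level-$\alpha$ property of the test in Definition~\ref{def:sequential-test}, and a matching upper bound on its expected stopping time.

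\emph{Lower bound.} Fix $\tau_\alpha\in\calT(\boldmu,\alpha)$ and $P_X$ with $\boldmu_X\neq\boldmu$. We may assume $\mathbb{E}_{P_X}[\tau_\alpha]<\infty$. For any $Q$ with $\mathbb{E}_Q[X]=\boldmu$ and $\KL(P_X,Q)<\infty$, the data processing inequality applied to the stopped sequence $(X_1,\ldots,X_{\tau_\alpha})$ and the event $\{\tau_\alpha<\infty\}$ gives
\[
\mathbb{E}_{P_X}[\tau_\alpha]\,\KL(P_X,Q)\;\geq\; d\bigl(P_X(\tau_\alpha<\infty),\,Q(\tau_\alpha<\infty)\bigr)\;\geq\; d(1,\alpha)=\log(1/\alpha).
\]
Here the left-hand side is computed by Wald's identity, and $d$ denotes the binary relative entropy. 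Taking the infimum over $Q$ yields $\mathbb{E}_{P_X}[\tau_\alpha]/\log(1/\alpha)\geq 1/\KLinf(P_X,\boldmu)$ for every $\alpha$. This is the standard argument behind~\eqref{eq:testing-lower-bound-1d}, carried out in dimension $K$.

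\emph{Level $\alpha$.} We use the dual representation (Proposition~\ref{theorem:finite-support}, or Theorem~\ref{theorem:continuous-klinf}). Under any null $Q$, for each fixed $\boldlambda\in\dualdomain$ the process $M_n(\boldlambda)=\prod_{i\le n}(1-\boldlambda^T(X_i-\boldmu))$ is a nonnegative martingale with mean one. Mix $M_n(\boldlambda)$ over a prior $\pi$ on the compact convex set $\dualdomain$, taking $\pi$ uniform on $\dualdomain$ or a Dirichlet-type prior on the vertex parametrization. Ville's inequality then gives $Q(\exists n:\int M_n\,d\pi\geq 1/\alpha)\le\alpha$.

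The key step is a regret bound for this mixture:
\[
\log\int M_{n-1}\,d\pi\;\ge\;\sup_{\boldlambda\in\dualdomain}\log M_{n-1}(\boldlambda)-K\log n-1,
\]
valid deterministically. To prove it, we restrict attention to the shrunk set $\{(1-1/n)\boldlambda^*+\boldlambda/n:\boldlambda\in\dualdomain\}$ around the maximizer $\boldlambda^*$. By concavity of $\log$, on this set each factor satisfies $1-\boldlambda^T(x-\boldmu)\ge(1-1/n)(1-\boldlambda^{*T}(x-\boldmu))$. The set has $\pi$-mass $n^{-K}$, and $(1-1/n)^{n-1}\ge e^{-1}$. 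This regret bound shows that the stopping rule of Definition~\ref{def:sequential-test} is implied by the mixture crossing $1/\alpha$, so $Q(\tau_\alpha<\infty)\le\alpha$.

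\emph{Upper bound.} Let $\kappa=\KLinf(P_X,\boldmu)>0$. We need to show $\limsup_{\alpha\to0}\mathbb{E}_{P_X}[\tau_\alpha]/\log(1/\alpha)\le 1/\kappa$. For $\epsilon>0$, pick $\boldlambda_\epsilon$ in the interior-retracted domain $\epsdualdomain$ with $H(\boldlambda_\epsilon)\ge\kappa-\epsilon$. This is possible by Theorem~\ref{theorem:continuous-klinf} together with concavity of $H$ along segments toward $0$.

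On $\epsdualdomain$ the summands $\log(1-\boldlambda_\epsilon^T(X_i-\boldmu))$ are bounded. Since $\KLinf(\widehat P_n,\boldmu)\ge\frac{1}{n-1}\sum_{i<n}\log(1-\boldlambda_\epsilon^T(X_i-\boldmu))$, we get
\[
\tau_\alpha\le\inf\Bigl\{n:\textstyle\sum_{i<n}Y_i\ge K\log n+\log(1/\alpha)+1\Bigr\},
\]
where the $Y_i$ are i.i.d., bounded, and have mean at least $\kappa-\epsilon$. A Hoeffding bound plus the tail-sum identity $\mathbb{E}[\tau]=\sum_n P(\tau>n)$ then gives $\mathbb{E}[\tau_\alpha]\le\frac{\log(1/\alpha)}{\kappa-2\epsilon}+o(\log(1/\alpha))$. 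Letting $\epsilon\to0$ gives the upper bound. Combined with the lower bound, this shows the limit equals $1/\kappa$ and that the test attains it; $\tau_\alpha<\infty$ almost surely under $P_X$ follows by the same argument.

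The main obstacle is the deterministic regret bound that links the plug-in $\KLinf$ threshold to the mixture martingale. It requires the shrinkage argument on the polytope $\dualdomain$ and control of the prior mass of the shrunk set as $n^{-K}$; I would first try exactly this scaling-toward-the-maximizer argument with the uniform prior.
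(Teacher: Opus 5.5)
Your proposal is correct. The lower bound is the same data-processing argument the paper invokes; the paper simply cites \citet[Theorem 3.1]{agrawal2025stopping}. The level-$\alpha$ part follows the same mixture-martingale-plus-dual route. The difference there is that the paper only cites the one-dimensional \citet[Lemma F.1]{agrawal2021optimal}, whereas you actually prove the $K$-dimensional regret bound. Your computation explains exactly why the threshold is $K\log n+\log(1/\alpha)+1$: a uniform prior on the polytope $\dualdomain$, homothetic shrinkage by $1/n$ toward the empirical maximizer, prior mass exactly $n^{-K}$, and $(1-1/n)^{n-1}\ge e^{-1}$. One correction: the pointwise bound $1-(\boldlambda')^T(x-\boldmu)\ge(1-1/n)(1-\boldlambda^{*T}(x-\boldmu))$ does not come from concavity of $\log$. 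It comes from linearity in $\boldlambda$ together with nonnegativity of $1-\boldlambda^T(x-\boldmu)$ on $\dualdomain$.

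The genuine difference is the upper bound. The paper takes the exact population maximizer $\boldlambda^*$ and adds an assumption not in the statement: that $\boldlambda^*$ lies in the interior of $\dualdomain$. This makes the increments $\log(1-\boldlambda^{*T}(X_i-\boldmu))$ bounded, and the paper then applies Wald's identity to get $\KLinf(P_X,\boldmu)\,\mathbb{E}[\tau^*_\alpha]\le K\,\mathbb{E}[\log\tau^*_\alpha]+\log(1/\alpha)+C$. That is a clean nonasymptotic inequality with the exact constant. It silently needs $\mathbb{E}[\tau^*_\alpha]<\infty$, which the paper does not verify; truncating at $\tau^*_\alpha\wedge N$ would fix this.

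You instead take a near-optimal point of a retracted domain $\calL_{\boldmu}^{(\delta)}$, where increments are bounded below by $\log\delta$. You then bound $\mathbb{E}[\tau_\alpha]$ by Hoeffding and the tail-sum formula and send the tolerance to zero. This loses second-order sharpness and needs a double limit. In exchange, it removes the interior-maximizer assumption and covers every alternative. It also gives $\mathbb{E}[\tau_\alpha]<\infty$, and hence the power-one property, directly. And it reuses exactly the retraction identity~\eqref{eq:interchange-2-proof-3} from the proof of \Cref{theorem:continuous-klinf}.

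Two bookkeeping points:
\begin{itemize}
\item Decouple the retraction level from the suboptimality tolerance. Since $H((1-\delta)\boldlambda)\ge(1-\delta)H(\boldlambda)$, you need $\delta\kappa\le\epsilon$ (with your $\kappa=\KLinf(P_X,\boldmu)$), not $\delta=\epsilon$.
\item Both you and the paper establish attainment of the infimum only asymptotically as $\alpha\downarrow0$. Neither shows it exactly at each fixed $\alpha$, as the statement's wording suggests.
\end{itemize}
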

The proof of this result is in~\Cref{proof:sequential-testing}. The lower bound follows from an application of the data processing inequality for randomly stopped processes. For the $\alpha$-correctness of the test from Definition~\ref{def:sequential-test}, we first show using the dual form of $\KLinf$ that, after adjusting it with a small cost, it is dominated by the log of a mixture martingale \citep[Lemma F.1]{agrawal2021optimal}, which exceeds the threshold $\log(1/\alpha)$ with probability at most $\alpha$. The proof for the sample complexity upper bound expresses the stopping event in terms of the hitting time of a simple random walk with positive drift, and also explicitly relies on the dual formulation of $\KLinf$.

\paragraph{Sequential Estimation via Confidence Sequences~(CSs):} Now suppose instead of testing whether the mean $\boldmu_X$ is equal to a given value $\boldmu$, we wish to construct a confidence sequence~(CS) for the unknown mean $\boldmu_X$. Formally, for an $\alpha \in (0,1)$, a level-$(1-\alpha)$ confidence sequence for $\mu_X$ is a sequence of subsets $\{C_n \subset \calX: n \geq 1\}$, such that each $C_n$  is $\sigma(X_1, \ldots, X_n)$ measurable, $C_0 = \calX$, and 
\begin{align}
    \mathbb{P}(\exists n \geq 1: \mu_X \not \in C_n) \leq \alpha \quad \iff \quad \mathbb{P}(\forall n \geq 1: \mu_X \in C_n) \geq 1-\alpha. 
\end{align}
Since we have designed an optimal sequential test for any $\boldmu$ in~\Cref{def:sequential-test},  we can construct a level-$(1-\alpha)$ confidence sequence~(CS) for the unknown mean via the usual inversion as  
\begin{align}
    &C_0 = \calX, \quad C_n= \left\{ \boldmu \in \mathring{\calX}: \; (n-1)\KLinf(\widehat{P}_n, \boldmu) < \log\lrp{\frac{n^K}{\alpha}} \right\} \quad \text{for } n \geq 2. \label{eq:conf-seq-def}
\end{align}
As an immediate consequence of the level-$\alpha$ property of the test introduced in~\Cref{def:sequential-test}, we have the following result. 
\begin{corollary}
    \label{corollary:conf-seq} Suppose the true mean $\boldmu_X$ lies in the interior $\mathring{\calX}$. Then, the CS defined in~\eqref{eq:conf-seq-def} satisfies the level-$(1-\alpha)$ property:~ 
        $\mathbb{P}(\exists n \geq 1: \boldmu_X \not \in C_n) \leq \alpha$. 
\end{corollary}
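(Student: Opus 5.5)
The plan is to reduce \Cref{corollary:conf-seq} directly to the level-$\alpha$ guarantee of the test in~\Cref{def:sequential-test}, as established in~\Cref{theorem:sequential-testing}, applied at the true mean. Since $\boldmu_X \in \mathring{\calX}$ by assumption, the test $\tau_\alpha(\boldmu_X)$ is well defined. The i.i.d. sequence drawn from $P_X$ satisfies the null $H_0: \boldmu_X = \boldmu$ with $\boldmu = \boldmu_X$. Hence~\Cref{theorem:sequential-testing}, which asserts $\tau_\alpha(\boldmu) \in \calT(\boldmu, \alpha)$ for every interior $\boldmu$, gives
\begin{align}
\mathbb{P}\lrp{\tau_\alpha(\boldmu_X) < \infty} \leq \alpha.
\end{align}

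The main step is to identify the miscoverage event with the stopping event. First, $\boldmu_X \in C_0 = \calX$ trivially. For $n \geq 2$, the definition~\eqref{eq:conf-seq-def} gives $\boldmu_X \notin C_n$ exactly when $(n-1)\KLinf(\widehat{P}_n, \boldmu_X) \geq K\log n + \log(1/\alpha)$. This threshold is at most the test threshold $K \log n + \log(1/\alpha) + 1$, so the CS boundary is slightly tighter than the test's. A naive inclusion therefore fails by the additive constant $1$.

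I see two ways to handle this. The first is to redefine or compare using the exact threshold used in the proof of~\Cref{theorem:sequential-testing}. That proof bounds $(n-1)\KLinf(\widehat{P}_n,\boldmu) - K\log n - c$ by the log of a nonnegative supermartingale (a mixture martingale, via the dual form) and then applies Ville's inequality. I would reopen that argument to check whether the constant needed there is actually $\le 0$, possibly with a mixture over $\dualdomain$ whose volume normalization is absorbed into $K\log n$. If it is, the same Ville bound gives
\begin{align}
\mathbb{P}\lrp{\exists n \geq 2: (n-1)\KLinf(\widehat{P}_n,\boldmu_X) \geq \log(n^K/\alpha)} \leq \alpha.
\end{align}
The second way is the union of events $\{\exists n: \boldmu_X \notin C_n\} \subseteq \{\tau_\alpha(\boldmu_X)<\infty\}$, which holds only if the thresholds agree. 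So the essential task is reconciling the constant. I expect this to be the main obstacle, and I would first try to show that the mixture-martingale bound from~\citet[Lemma F.1]{agrawal2021optimal} holds with threshold $\log(n^K/\alpha)$ for the $[0,1]^K$ dual with $n-1$ samples. Failing that, I would note that the intended result holds with $\log(e n^K/\alpha)$.

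Finally, the stated equivalence with $\mathbb{P}(\forall n: \boldmu_X \in C_n) \geq 1-\alpha$ follows by taking complements. Measurability of each $C_n$ with respect to $\sigma(X_1,\ldots,X_n)$ is immediate, since $\widehat{P}_n$ depends only on $X_1,\ldots,X_{n-1}$.
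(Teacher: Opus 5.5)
Your reduction is the same as the paper's: miscoverage at $\boldmu_X$ is compared with the test of \Cref{def:sequential-test} stopping under the null at $\boldmu=\boldmu_X$. Your diagnosis is also correct. The paper's one-line proof simply identifies $\{\exists n:(n-1)\KLinf(\widehat{P}_n,\boldmu_X)\ge\log(n^K/\alpha)\}$ with the stopping event. But the test's threshold is $K\log n+\log(1/\alpha)+1$, so the miscoverage event contains, and is in general larger than, the stopping event. The level-$\alpha$ guarantee of \Cref{theorem:sequential-testing} therefore does not transfer as claimed.

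The gap in your proposal is that you stop there. The argument is conditional on a check you never carry out, and the fallback proves a weaker statement (threshold $\log(en^K/\alpha)$). Moreover, the check fails if done the standard way. Set $Z_m(\boldlambda)=\prod_{i\le m}(1-\boldlambda^T(X_i-\boldmu_X))$ and let $M_m=\int Z_m\,d\pi$ with $\pi$ uniform on $\calL_{\boldmu_X}$. Then $M$ is a nonnegative martingale with $M_0=1$ under the null, and $\max_{\calL_{\boldmu_X}}Z_{n-1}=\exp((n-1)\KLinf(\widehat P_n,\boldmu_X))$ by \Cref{theorem:finite-support}. Restricting to $\boldlambda\in(1-\tfrac1n)\boldlambda^*+\tfrac1n\calL_{\boldmu_X}$ only gives $M_{n-1}\ge n^{-K}(1-\tfrac1n)^{n-1}Z_{n-1}(\boldlambda^*)\ge e^{-1}n^{-K}Z_{n-1}(\boldlambda^*)$, and this is exactly where the $+1$ comes from.

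The missing idea is a sharper mixture-versus-maximum bound. Because $Z_m^{1/m}$ is a geometric mean of nonnegative affine functions of $\boldlambda$, it is concave on $\calL_{\boldmu_X}$. Hence, with $\boldlambda^*$ a maximizer of $Z_m$,
\[
Z_m(\boldlambda^*+s(\yvec-\boldlambda^*))\ge(1-s)^m Z_m(\boldlambda^*)\qquad\text{for all } \yvec\in\calL_{\boldmu_X},\ s\in[0,1].
\]
Since $\boldmu_X\in\mathring{\calX}$, the set $\calL_{\boldmu_X}$ is a compact convex body. Its nested homothets $\boldlambda^*+s(\calL_{\boldmu_X}-\boldlambda^*)$ therefore have $\pi$-mass $s^K$, which gives
\begin{align}
M_m\;\ge\;Z_m(\boldlambda^*)\int_0^1(1-s)^m\,K s^{K-1}\,ds\;=\;\frac{Z_m(\boldlambda^*)}{\binom{m+K}{K}},\qquad \binom{n-1+K}{K}=\prod_{j=1}^K\frac{n-1+j}{j}\;\le\;n^K .
\end{align}
Taking $m=n-1$, the event $\boldmu_X\notin C_n$ forces $M_{n-1}\ge 1/\alpha$. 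Ville's inequality then yields $\mathbb{P}(\exists n\ge 2:\boldmu_X\notin C_n)\le\alpha$ with the threshold $\log(n^K/\alpha)$ exactly as stated. Incidentally, this also shows the $+1$ in \Cref{def:sequential-test} is unnecessary.
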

\begin{proof}
This result  follows naturally from~\Cref{theorem:sequential-testing}. In particular, we have 
\begin{align}
    \mathbb{P}\left( \exists n \geq 1: \boldmu_X \not \in C_n \right)  & =     \mathbb{P}\left( \exists n \geq 1: (n-1) \KLinf(\widehat{P}_n, \boldmu_X) \geq \log(n^K/\alpha) \right). 
\end{align}
The second term is simply the probability that the test introduced in~\Cref{def:sequential-test} stops at a finite time under the null. As we have proved in~\Cref{theorem:sequential-testing}, this is upper bounded by $\alpha$, which establishes the required level-$(1-\alpha)$ property of the CS constructed in~\eqref{eq:conf-seq-def}. 
\end{proof}
Note that computing each confidence set $C_n$ involves finding the level-set of a convex function $f(\boldmu) = \KLinf(\widehat{P}_n, \boldmu)$, which in general can be computationally infeasible. We leave the thorough exploration of methods for approximating these sets in a computationally feasible manner for future work. 

\paragraph{Sequential Change Detection:} The final application we consider is that of detecting changes in the mean vector of a stream of observations. Formally, let $P, Q$ denote two elements in $\calP(\calX)$, such that $\mathbb{E}_P[X] = \boldmu_0$ and $\mathbb{E}_Q[X] = \boldmu_1 \neq \boldmu_0$. Throughout, we assume that $\boldmu_0$ is known to us, and at some unknown time $T \in \mathbb{N} \cup \{\infty\}$, there is an abrupt change in distribution from $P$ to $Q$~(both $\boldmu_1$ and $Q$ are unknown). The goal then is to design a stopping time $N_\alpha$, such that  the average run length~(ARL)
$\mathbb{E}_{\infty, P}[N_\alpha] \geq \frac{1}{\alpha}$, where $\mathbb{E}_{\infty, P}[\cdot]$ denotes the expectation when there is no change and the observations are drawn according to $P$. Further, given this constraint, we also want to minimize the  ``detection delay'' when a change occurs, formally given by  
\begin{align}
J_L(N_\alpha, P, Q) = \sup_{T \in \mathbb{N}} \mathrm{ess}\sup \mathbb{E}_{T, P, Q}[(N_\alpha-T)^+ \mid X_1, \ldots, X_T]. 
\end{align}
In these expressions we use $\mathbb{E}_{T, P, Q}[\cdot]$ to denote the expectation when there is a change in distribution from $P$ to $Q$ at time $T \in \mathbb{N}$. 
We now use design a change detection scheme using  the test in~\Cref{def:sequential-test},  following the construction of~\citet{lorden1971procedures}. 
\begin{definition}
    \label{def:change-detection} Given a stream of $\calX = [0,1]^K$-valued observations $\{X_n: n \geq 1\}$,  a mean vector $\boldmu_0 \in \mathring{\calX}$, and a parameter $\alpha \in (0,1)$, let $\tau_\alpha^{(k)}$ denote the level-$\alpha$ test for the null $H_0: \mathbb{E}[X] = \boldmu_0$ based on the observations $\{X_n: n \geq k\}$ for $k \in \mathbb{N}$. Then, define the change-detection procedure $N_\alpha = \inf\limits_{k \geq 1} \tau_\alpha^{(k)}$, where 
    \begin{align}
        \tau^{(k)}_{\alpha} &\coloneqq \inf \left\{n \geq k:  (n-k+1) \KLinf\left( \frac{1}{n-k+1} \sum_{i=k}^n \delta_{X_i}, \, \boldmu_0\right) \geq \log \left( \frac{(n-k+1)^K}{\alpha} \right) \right\} 
    \end{align}
    denotes the level-$\alpha$ test from the previous discussion for the null $H_0: \mathbb{E}[X]=\boldmu_0$, based on the observations $\{X_n: n \geq k\}$ for $k \in \mathbb{N}$. 
\end{definition}
In words, the procedure defined above initiates a new power-one test, $\tau_k^{(\alpha)}$, in every round, and stops and declares a detection as soon as any of the initiated tests rejects the null. 
We now present the main result characterizing the behavior of this change detection scheme.  

\begin{proposition}
\label{prop:change-detection}    
Consider the change detection problem described above, where for some unknown $T \in \mathbb{N} \cup \{\infty\}$, we have $\{X_n: 1 \leq n \leq T\} \overset{\iid}{\sim} P \in \calP_0 \coloneqq \{P' \in \calP(\calX): \mathbb{E}_{P'}[X] = \boldmu_0\}$ for a known  $\boldmu_0$, and $\{X_n: n > T\} \overset{\iid}{\sim} Q \in \calP_1 \coloneqq \{P' \in \calP(\calX): \mathbb{E}_{P'}[X] \neq \boldmu_0\}$ with an unknown mean $\boldmu_1 \neq \boldmu_0$. 
Then, the change-detection procedure described in~Definition~\ref{def:change-detection} satisfies the following: 
\begin{align}
    \inf_{P \in \calP_0} \mathbb{E}_{\infty, P}[N_\alpha] \geq \frac{1}{\alpha}, \qtext{and} \limsup_{\alpha \downarrow 0}  \sup_{P\in \calP_0}\frac{J_L(N_\alpha, P, Q)}{\log(1/\alpha)} \leq \frac{1}{\KLinf(Q, \boldmu_0)}. 
\end{align}
Furthermore, let $\calC(\boldmu_0, \alpha)$ denote the class of all change detection procedures $N'_\alpha$ with $\inf_{P \in \calP_0} \mathbb{E}_{\infty, P}[N'_\alpha] \geq 1/\alpha$. Then, we have the following: 
\begin{align}
    \liminf_{\alpha \downarrow 0} \frac{L(Q, \boldmu_0, \alpha)}{\log(1/\alpha)} \geq \frac{1}{\KLinf(Q, \boldmu_0)}, \qtext{where} L(Q, \boldmu_0, \alpha) = \inf_{N'_\alpha \in \calC(\boldmu_0, \alpha)} \; \sup_{P\in \calP_0} J_L(N'_\alpha, P, Q). 
\end{align}
\end{proposition}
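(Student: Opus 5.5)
The proposition has three parts: the ARL bound, the detection-delay upper bound, and the minimax lower bound. I would treat each separately. Throughout, the guiding idea is Lorden's reduction of change detection to a family of one-sided tests, with \Cref{theorem:sequential-testing} and \Cref{theorem:continuous-klinf} supplying the $[0,1]^K$-specific ingredients.

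\emph{ARL bound.} Fix $P \in \calP_0$. Each $\tau^{(k)}_\alpha$ uses only $\{X_n : n \geq k\}$, which is i.i.d.\ from $P$ under $\mathbb{E}_{\infty,P}$. By the level-$\alpha$ property from \Cref{theorem:sequential-testing}, applied uniformly over null distributions, $\mathbb{P}_{\infty,P}(\tau^{(k)}_\alpha < \infty) \leq \alpha$. The threshold $\log((n-k+1)^K/\alpha)$ matches the one used for the CS in \Cref{corollary:conf-seq}, so I will reuse the mixture-martingale bound behind that result. The reduction is \citet[Theorem 2]{lorden1971procedures}: if each $\tau^{(k)}$ is a stopping time for the shifted filtration with $\mathbb{P}_{\infty}(\tau^{(k)}<\infty)\leq\alpha$, then $N=\inf_k \tau^{(k)}$ satisfies $\mathbb{E}_\infty[N]\geq 1/\alpha$. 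It applies directly and gives the first claim uniformly over $P \in \calP_0$.

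\emph{Delay upper bound.} The same Lorden theorem bounds $J_L(N_\alpha,P,Q)$ by $\mathbb{E}_Q[\tau^{(1)}_\alpha]$, where the test is run on data drawn i.e.\ i.i.d.\ from $Q$. The reason is that at change time $T$ we have $N_\alpha \leq \tau^{(T+1)}_\alpha$, and $\tau^{(T+1)}_\alpha - T$ is independent of $X_1,\dots,X_T$ and distributed as $\tau^{(1)}_\alpha$ under $Q$. This bound does not depend on $P$, which handles the supremum over $\calP_0$. It then remains to show $\limsup_{\alpha\downarrow0}\mathbb{E}_Q[\tau^{(1)}_\alpha]/\log(1/\alpha)\leq 1/\KLinf(Q,\boldmu_0)$. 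This is the sample-complexity half of \Cref{theorem:sequential-testing}, with threshold $K\log n+\log(1/\alpha)$ in place of the original one (the off-by-one sample index and the constant $+1$ only change lower-order terms).

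If that analysis has to be redone, I would use the dual form from \Cref{theorem:continuous-klinf}. Pick $\boldlambda_\epsilon$ in the retracted domain $\epsdualdomain$ with $\mathbb{E}_Q[\log(1-\boldlambda_\epsilon^T(X-\boldmu_0))]\geq \KLinf(Q,\boldmu_0)-\epsilon$. Since $n\KLinf(\widehat Q_n,\boldmu_0) \ge S_n \coloneqq \sum_{i\le n}\log(1-\boldlambda_\epsilon^T(X_i-\boldmu_0))$, the stopping time is dominated by the first passage of the random walk $S_n$, which has bounded increments, above $K\log n+\log(1/\alpha)$. The renewal/Wald argument gives $\mathbb{E}[\cdot]\le (1+o(1))\log(1/\alpha)/(\KLinf-\epsilon)$. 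Letting $\epsilon \downarrow 0$ finishes this part.

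\emph{Lower bound.} This is the step I expect to be the main obstacle. I would invoke Lai's (1998) information bound for change detection: for any $P$ with $\KL(Q,P)<\infty$ and any procedure with $\mathbb{E}_{\infty,P}[N]\ge1/\alpha$, one has $J_L(N,P,Q)\ge (1+o(1))\log(1/\alpha)/\KL(Q,P)$. Hence $\sup_{P\in\calP_0}J_L \ge J_L(N,P,Q)$ for every $P \in \calP_0$, so $\liminf L/\log(1/\alpha)\ge 1/\KL(Q,P)$ for each $P\in\calP_0$. Taking the infimum of $\KL(Q,P)$ over $P\in\calP_0$ yields $1/\KLinf(Q,\boldmu_0)$.

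The delicate points are the following. The $o(1)$ term in Lai's bound depends on $P$, but that is harmless: I fix $P$ first and only then send $\alpha\to0$. If $\KLinf(Q,\boldmu_0)=\infty$ the claim is trivial. The remaining issue is that Lai's theorem assumes $\KL(Q,P)<\infty$, which holds for near-optimal $P$ whenever $\KLinf(Q,\boldmu_0)<\infty$. Should Lai's result not be directly citable, I would instead reduce to the testing lower bound of \Cref{theorem:sequential-testing}. Conditioning on a pre-change window of length of order $1/\alpha$, a detector with ARL at least $1/\alpha$ yields a test that stops with probability $O(\alpha\log(1/\alpha))$ under $P$, and the change-of-measure/DPI argument then gives the same rate.
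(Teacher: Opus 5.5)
Your proposal is correct and follows essentially the same route as the paper. The ARL bound comes from Lorden's restart construction: the paper phrases it via the Shiryaev--Roberts e-detector $M_n=\sum_{k\le n}\alpha^{-1}\boldsymbol{1}\{\tau_\alpha^{(k)}\le n\}$, which is the same argument as Lorden's Theorem~2. The delay bound comes from $N_\alpha\le\tau_\alpha^{(T+1)}$ plus the sample-complexity half of \Cref{theorem:sequential-testing}. The lower bound comes from fixing a near-optimal $P\in\calP_0$ and invoking the classical simple pre/post-change bound (Lorden's Theorem~3 in the paper, Lai's bound in yours).

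Your choice of $\boldlambda_\epsilon\in\epsdualdomain$ to get bounded increments is more careful than the paper, which simply assumes without proof that the dual optimizer is interior. One small inaccuracy, shared with the paper: the threshold $\log((n-k+1)^K/\alpha)$ on $n-k+1$ samples does not literally match the CS threshold, which is $\log((m+1)^K/\alpha)$ on $m$ samples. So the level-$\alpha$ property of each $\tau_\alpha^{(k)}$ rests on the paper's assertion rather than on an exactly matching earlier result.
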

The proof of this result is in Appendix~\ref{proof:change-detection}. The implementation of the optimum change detection scheme $N_\alpha$ again relies on the ability to efficiently compute $\KLinf(\widehat{P}_n, \boldmu_0)$, which is facilitated by our dual result.

\section{Conclusion and Future Work}
\label{sec:conclusion}
    $\KLinf$ and more general constrained minimum divergences have emerged as objects of fundamental importance in  many statistical decision problems, including sequential inference and bandit learning.
    The primal definition of these terms often infinite-dimensional optimization problems over probability measures, making them ill-suited for direct computations and algorithm design. 
    The main contribution of this paper is an elementary two-stage recipe for deriving tractable dual representations of such constrained minimum-divergence problems for distributions supported on compact domains $\calX \subset \R^K$~(for concreteness, we worked with $\calX = [0,1]^K$). 
    
    In the first step, we derive the dual for the case of distributions with finite support in $\calX$, which can be achieved via classical convex duality theory for finite-dimensional problems. In the second step, we show how to pass from a discretized dual to the dual for arbitrary distributions by developing a continuity argument along a sequence of increasingly fine discretizations. The proof relies on standard information-theoretic arguments, appealing to the data-processing inequality and the weak lower-semicontinuity of relative entropy. This observation allows us to extend the same pipeline to a more general class of $f$-divergences, and then to general continuous constraint functionals. 
    Finally, we  illustrated how these dual representations allow for constructing optimal statistical procedures in sequential testing, estimation, and change-detection problems.

    Since our results rely strongly on the compactness of the domain $\calX = [0,1]^K$, a natural direction for future work is to extend our two-stage approach to the case of distributions supported on more general domains, such as $\mathbb{R}^K$. Another interesting direction is exploring the role of our dual minimum divergence terms in areas such as distributionally robust optimization, and in constructing nonasymptotically valid empirical likelihood confidence sets.

\bibliography{ref_arxiv}
\bibliographystyle{abbrvnat}

\appendix

\section{Additional Background}
In this section, we recall some key results that are used to prove our main results. 
\begin{fact}[Caratheodary's Theorem]
    \label{fact:caratheodary}
    Let $\calX \subset \mathbb{R}^K$ be a convex hull of the a set $S$. Then, any element of $\calX$ can be written as a convex combination of at most $K+1$ elements of $S$. 
\end{fact}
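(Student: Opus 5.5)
We prove \Cref{fact:caratheodary} by the classical reduction argument: start from an arbitrary finite convex representation and remove points one at a time using an affine dependence. By definition of the convex hull, any $\xvec \in \calX = \operatorname{Conv}(S)$ can be written as $\xvec = \sum_{i=1}^m \alpha_i \svec_i$ with $\svec_i \in S$, $\alpha_i \geq 0$, and $\sum_i \alpha_i = 1$, for some finite $m$. Among all such representations we fix one with the smallest $m$. Minimality gives $\alpha_i > 0$ for every $i$, since otherwise we could drop a term. It then suffices to show that $m \leq K+1$.

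We argue by contradiction and suppose $m \geq K+2$. The $m-1 \geq K+1$ vectors $\svec_2 - \svec_1, \ldots, \svec_m - \svec_1$ lie in $\R^K$, so they are linearly dependent. This yields coefficients $\beta_1, \ldots, \beta_m$, not all zero, with
\begin{align}
    \sum_{i=1}^m \beta_i \svec_i = \boldsymbol{0} \qtext{and} \sum_{i=1}^m \beta_i = 0,
\end{align}
where we set $\beta_1 = -\sum_{i \geq 2} \beta_i$. Because the $\beta_i$ sum to zero and are not all zero, at least one of them is strictly positive.

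Next we set $t^\ast = \min\{\alpha_i/\beta_i : \beta_i > 0\}$, which is strictly positive, and define $\alpha_i' = \alpha_i - t^\ast \beta_i$. These new weights have three properties:
\begin{enumerate}
    \item $\sum_i \alpha_i' = 1$, because $\sum_i \beta_i = 0$.
    \item $\sum_i \alpha_i' \svec_i = \xvec$, because $\sum_i \beta_i \svec_i = \boldsymbol{0}$.
    \item Each $\alpha_i' \geq 0$. For $\beta_i \leq 0$ this holds trivially. For $\beta_i > 0$ it holds by the choice of $t^\ast$.
\end{enumerate}
Moreover, $\alpha_j' = 0$ for any index $j$ attaining the minimum in $t^\ast$. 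Dropping that index gives a convex representation of $\xvec$ using only $m-1$ points of $S$, which contradicts the minimality of $m$. Hence $m \leq K+1$.

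There is no real obstacle in this argument. The only step needing care is the construction of the affine dependence: the linear dependence must be taken among the differences $\svec_i - \svec_1$ rather than among the $\svec_i$ themselves. This is what guarantees $\sum_i \beta_i = 0$, and that condition keeps the total weight equal to one after the shift.
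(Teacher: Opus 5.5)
Your proof is correct. It is the standard minimal-representation argument via an affine dependence, and each step checks out: the dependence among the differences $\svec_i-\svec_1$ gives $\sum_i\beta_i\svec_i=\boldsymbol{0}$ with $\sum_i\beta_i=0$, which forces some $\beta_i>0$; strict positivity of every $\alpha_i$ gives $t^\ast>0$; and the shifted weights form a valid convex representation with at least one zero weight, contradicting the minimality of $m$.

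The paper gives no proof of this statement. It recalls it in the appendix as a classical background fact and uses it only with $S=\{0,1\}^K$, to write points of $[0,1]^K$ as mixtures of cube vertices. Your argument therefore supplies the textbook proof that the paper omits.
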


\begin{fact}[Prohorov's Theorem]
    \label{fact:prohorov} Let $(\calX, d)$  denote a complete separable metric space, and let $\calP(\calX)$ denote the collection of Borel probability  measures on $\calX$. We say that a family $\Pi \subset \calP(\calX)$ is tight, if for every $\epsilon >0$, there exists a compact set $K_\epsilon \subset \calX$, such that $\inf_{P \in \Pi} P(K_\epsilon) \geq 1-\epsilon$. Then, the following are equivalent: 
    \begin{enumerate}
        \item $\Pi \subset \calP(\calX)$ is tight. 
        \item $\Pi$ is relatively compact in the topology of weak convergence; that is, every sequence in $\Pi$ has a weakly convergent subsequence. 
    \end{enumerate}
\end{fact}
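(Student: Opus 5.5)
Since $\calX$ is itself compact in every application in this paper (where $\calX = [0,1]^K$), the family $\calP(\calX)$ is trivially tight, and the direction actually needed is (1)$\Rightarrow$(2). I would nevertheless prove both directions for a general complete separable metric space $(\calX,d)$, working throughout with the Portmanteau characterization of weak convergence: $\limsup_n P_n(F) \le P(F)$ for closed $F$, equivalently $\liminf_n P_n(G)\ge P(G)$ for open $G$.

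\emph{Tight $\Rightarrow$ relatively compact.} Fix a sequence $(P_n)\subset \Pi$.
\begin{enumerate}
\item Since $\calX$ is separable metric, embed it homeomorphically into a compact metrizable space $\overline{\calX}$ (for instance the Hilbert cube), and regard each $P_n$ as a Borel measure on $\overline{\calX}$.
\item The space $C(\overline{\calX})$ is separable. Take a countable dense family $\{h_j\}$ and use a diagonal argument to extract a subsequence along which $\int h_j\,dP_n$ converges for every $j$.
\item By density, $\Lambda(h)=\lim \int h\,dP_n$ then exists for all $h\in C(\overline{\calX})$. It is a positive normalized linear functional, so Riesz representation gives a probability measure $\mu$ on $\overline{\calX}$ with $P_n\Rightarrow\mu$ on $\overline{\calX}$.
\item Tightness shows that $\mu$ lives on $\calX$. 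Each $K_\epsilon$ is compact, hence closed in $\overline{\calX}$, so Portmanteau gives $\mu(K_\epsilon)\ge \limsup_n P_n(K_\epsilon)\ge 1-\epsilon$. Therefore $\mu(\calX)=1$, and $\mu$ restricts to some $P\in\calP(\calX)$.
\end{enumerate}

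The main obstacle is transferring the convergence back to $\calX$, because bounded continuous functions on $\calX$ need not extend continuously to $\overline{\calX}$. I would avoid extending functions and argue with closed sets instead. For $F$ closed in $\calX$, the set $F\cap K_\epsilon$ is compact, hence closed in $\overline{\calX}$. This yields
\[
\limsup_n P_n(F)\le \limsup_n P_n(F\cap K_\epsilon)+\epsilon\le \mu(F\cap K_\epsilon)+\epsilon\le P(F)+\epsilon .
\]
Letting $\epsilon\downarrow0$ and applying Portmanteau on $\calX$ gives $P_n\Rightarrow P$.

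\emph{Relatively compact $\Rightarrow$ tight.} Fix $\epsilon>0$ and a dense sequence $(x_i)$ in $\calX$. For each $m$, let $G_{m,N}=\cup_{i\le N}B(x_i,1/m)$, which increases to $\calX$ as $N\to\infty$.
\begin{enumerate}
\item I claim that for each $m$ there is $N_m$ with $\inf_{P\in\Pi}P(G_{m,N_m})\ge 1-\epsilon 2^{-m}$.
\item Suppose not. Then there are $P_N\in\Pi$ and $\delta=\epsilon 2^{-m}>0$ with $P_N(G_{m,N})<1-\delta$ for every $N$.
\item Extract a weakly convergent subsequence $P_{N_j}\Rightarrow Q$. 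For fixed $N$ and all $N_j\ge N$, monotonicity of $G_{m,\cdot}$ and Portmanteau for open sets give
\[
Q(G_{m,N})\le\liminf_j P_{N_j}(G_{m,N})\le \liminf_j P_{N_j}(G_{m,N_j})\le 1-\delta .
\]
\item Letting $N\to\infty$ gives $Q(\calX)\le1-\delta$, a contradiction.
\end{enumerate}
Now set $K=\bigcap_m \cup_{i\le N_m}\overline{B}(x_i,1/m)$. This set is closed and totally bounded, so it is compact by completeness. A union bound gives $P(K)\ge 1-\sum_m\epsilon 2^{-m}=1-\epsilon$ for every $P\in\Pi$, which is tightness.
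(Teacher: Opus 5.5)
The paper gives no proof of this statement. It records Prohorov's theorem as a background fact and invokes it in the proof of \Cref{lemma:interchange-1} with a citation to Billingsley's Theorem~5.1, so the relevant comparison is with that standard proof. Your argument is correct in both directions.

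\textbf{Converse half.} Your proof that relative compactness implies tightness is essentially Billingsley's own proof of the converse (his Theorem~5.2). You exhaust $\calX$ by finite unions of $1/m$-balls. You obtain uniform mass bounds by contradiction, using a weakly convergent subsequence and the open-set Portmanteau inequality. You then intersect over $m$ to get a closed, totally bounded, hence compact set. This is exactly where completeness and separability are used.

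\textbf{Direct half.} Here you take a genuinely different route. You compactify $\calX$ into the Hilbert cube and diagonalize over a countable dense subset of the separable space $C(\overline{\calX})$. You then obtain the limit from the Riesz representation theorem. Tightness is used twice: first to show the limit $\mu$ gives full mass to $\calX$, and second, through the compact sets $F\cap K_\epsilon$, to pull the closed-set Portmanteau inequality back to $\calX$ without extending test functions.

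Billingsley instead builds the limit measure directly on $\calX$. He diagonalizes over a countable class of sets of the form $\overline{A}\cap K_u$ and then runs a Carath\'eodory outer-measure construction. His route is self-contained and holds in an arbitrary metric space, since separability is not needed for this half. Yours relies on the embedding and Riesz theorems but is shorter and more transparent.

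One point deserves to be made explicit. You need $\iota(\calX)$ to be Borel in $\overline{\calX}$; this holds because a Polish subspace is a $G_\delta$. Alternatively, work with the $\sigma$-compact set $\bigcup_m \iota(K_{1/m})$, which already has $\mu$-measure one. Either way, $A\mapsto \mu(\iota(A))$ is then a well-defined Borel probability measure on $\calX$.

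Finally, for the paper's only use, $\calX=[0,1]^K$ is already compact. In that case your steps 2--3 alone (separability of $C(\calX)$, diagonalization, Riesz) already give the required weakly convergent subsequence.
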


\begin{fact}[Arzel\'a-Ascoli]
    \label{fact:arzela-ascoli}
    Let $(\calX, d)$ be a compact metric space, and let $\{f_n: n \geq 1\}$ denote a sequence of real-valued continuous functions on $\calX$; that is, $f_n \in C(\calX, \mathbb{R})$. Suppose the following conditions hold: 
    \begin{itemize}
        \item Uniform Boundedness: $\sup_{n \geq 1} \, \sup_{ \xvec \in \calX} |f_n(\xvec)| < \infty$
        \item Equicontinuity: For every $\epsilon>0$, there exists a $\delta>0$, such that for all $n \geq 1$ and $\xvec, \xvec' \in \calX$, 
        \begin{align}
            d(\xvec, \xvec') < \delta \quad \implies \quad 
            |f_n(\xvec) - f_n(\xvec')| < \epsilon. 
        \end{align}
    \end{itemize}
    Then, there exists a subsequence $\{f_{n_j}: j \geq 1\}$ and a function $f \in C(\calX, \R)$ such that 
    \begin{align}
        \sup_{\xvec \in \calX} |f_{n_j}(\xvec) - f(\xvec)| \; \overset{n \to \infty}{\longrightarrow} \; 0. 
    \end{align}
    In other words, uniformly bounded and equicontinuous collection of functions contain a convergent subsequence. 
\end{fact}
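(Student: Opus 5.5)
The plan is the classical diagonal-extraction argument. It has three stages: pointwise convergence on a countable dense set, an upgrade to a uniformly Cauchy subsequence, and identification of a continuous limit.

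\emph{Step 1: countable dense set and diagonal extraction.} Since $(\calX, d)$ is a compact metric space, it is totally bounded. For each $m \geq 1$, pick a finite $1/m$-net $N_m \subset \calX$, and let $\calD = \cup_m N_m = \{\xvec_1, \xvec_2, \ldots\}$, which is countable and dense. Set $M = \sup_n \sup_{\xvec} |f_n(\xvec)| < \infty$. For each fixed $\xvec_i$, the real sequence $\{f_n(\xvec_i)\}$ lies in $[-M, M]$, so by Bolzano--Weierstrass it has a convergent subsequence. Extract nested subsequences as follows: first along which $f_n(\xvec_1)$ converges, then a further one along which $f_n(\xvec_2)$ converges, and so on. Taking the diagonal subsequence $\{f_{n_j}\}$ gives a single subsequence such that $\lim_j f_{n_j}(\xvec_i)$ exists for every $i$.

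\emph{Step 2: uniform Cauchy property (the main step).} Fix $\epsilon > 0$ and let $\delta$ be given by equicontinuity for $\epsilon/3$. Choose $m$ with $1/m < \delta$, so that $N_m \subset \calD$ is a finite $\delta$-net. Since $N_m$ is finite and $f_{n_j}$ converges at each of its points, there is $j_0$ with $|f_{n_j}(\yvec) - f_{n_l}(\yvec)| < \epsilon/3$ for all $\yvec \in N_m$ and all $j, l \geq j_0$. For an arbitrary $\xvec \in \calX$, pick $\yvec \in N_m$ with $d(\xvec, \yvec) < \delta$. The triangle inequality then gives
\begin{align}
|f_{n_j}(\xvec) - f_{n_l}(\xvec)| \leq |f_{n_j}(\xvec) - f_{n_j}(\yvec)| + |f_{n_j}(\yvec) - f_{n_l}(\yvec)| + |f_{n_l}(\yvec) - f_{n_l}(\xvec)| < \epsilon
\end{align}
for all $j, l \geq j_0$, uniformly in $\xvec$. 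The key point is that equicontinuity (a single $\delta$ for all $n$) and the finiteness of the net together turn pointwise convergence on $\calD$ into a uniform statement. I expect this to be the crux of the proof.

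\emph{Step 3: continuous limit.} By Step 2, $\{f_{n_j}\}$ is Cauchy in the sup-norm. Hence for each $\xvec$ the limit $f(\xvec) = \lim_j f_{n_j}(\xvec)$ exists, and the convergence is uniform: let $l \to \infty$ in the bound of Step 2. A uniform limit of continuous functions is continuous, and this is the standard completeness of $(C(\calX, \R), \|\cdot\|_\infty)$. Therefore $f \in C(\calX, \R)$ and $\sup_{\xvec} |f_{n_j}(\xvec) - f(\xvec)| \to 0$, which is the conclusion of Fact~\ref{fact:arzela-ascoli}.
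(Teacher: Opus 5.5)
Your proof is correct. It is the standard diagonal-extraction proof, and each step goes through as written:
\begin{itemize}
\item a countable dense set built from finite $1/m$-nets, using total boundedness;
\item Bolzano--Weierstrass at each point, followed by a diagonal subsequence;
\item the $\epsilon/3$ argument over a finite $\delta$-net, using the single $\delta$ supplied by equicontinuity;
\item completeness of $C(\calX,\R)$ under the sup norm.
\end{itemize}
The paper does not prove this statement; it is quoted as classical background, so there is no competing route to compare with.

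One remark. Your Steps 2--3 actually prove a stronger fact: an equicontinuous sequence on a compact metric space that converges pointwise on a dense subset converges uniformly. This works because a finite $\delta$-net can always be chosen inside any dense subset of a compact space. That is exactly the situation in the proof of \Cref{lemma:approx-constraint-limiting-1}, where pointwise convergence on the dense set $\calD_t$ is assumed. Applied there directly, your argument gives uniform convergence of the whole sequence $F_{k,t}$. It would remove the detour through a subsequence and the identification of subsequential limits.
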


\section{Deferred Proofs from~\Cref{sec:Klinf-mean-constrained}}

\subsection{Proof of~\Cref{theorem:finite-support}}
\label{proof:finite-support}

\paragraph{Step 1: Restriction to a finite-dimensional domain.} Let us first introduce some notation. Let $\calP_{\boldmu} \subset \calP(\calX)$ denote the collection of distributions on $\calX$ with mean $\boldmu$; that is, $\calP_{\boldmu} = \{Q \in \calP(\calX): \mathbb{E}_Q[X] = \boldmu\}$. 
\begin{lemma}
    \label{lemma:discrete-proof-1} For every $Q \in \calP_{\boldmu}$, there exists another $R \in \calP_{\boldmu}$, such that $\KL(P, R) \leq \KL(P, Q)$, and $\supp(R) \subset \finitesupport \cup \{0, 1\}^K$, where $\finitesupport = \supp(P)$. In other words, we can restrict our attention to a finite-dimensional subspace of the domain of the primal problem without losing optimality. 
\end{lemma}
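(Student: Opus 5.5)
Given $Q \in \calP_{\boldmu}$ with $\KL(P,Q) < \infty$ (otherwise any $R \in \calP_{\boldmu}$ supported on $\finitesupport \cup \{0,1\}^K$ works, e.g.\ a product of Bernoulli laws with marginal means $\mu_j$, since $\boldmu \in \calX$), I would build $R$ by keeping the part of $Q$ on $\finitesupport$ and replacing the rest by a distribution on the corners $\{0,1\}^K$ with the same mass and the same first moment. Concretely, decompose $Q = Q|_{\finitesupport} + Q|_{\finitesupport^c}$. The first piece is $\sum_{\xvec \in \finitesupport} Q(\{\xvec\}) \delta_{\xvec}$, and the second has total mass $m = Q(\finitesupport^c)$ and, if $m > 0$, barycenter $\mathbf{c} = m^{-1}\int_{\finitesupport^c} \xvec \, dQ(\xvec) \in \calX$ (since $\calX$ is convex and closed).

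Next I would push the second piece onto $\{0,1\}^K$ in a barycenter-preserving way. The natural tool is the stochastic-rounding kernel with $k=0$ (one cell, the whole cube): each $\xvec$ is sent to $\svec \in \{0,1\}^K$ with probability $\prod_j x_j^{s_j}(1-x_j)^{1-s_j}$, which preserves means exactly, as already noted after Definition~\ref{def:mean-preserving-channel}. Alternatively one could invoke Fact~\ref{fact:caratheodary} for the convex hull of $\{0,1\}^K$, which is $\calX$, to write $\mathbf{c}$ as a convex combination of corners. Either way we obtain a measure $S$ on $\{0,1\}^K$ with mass $m$ and $\int \xvec\, dS = m\mathbf{c}$. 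Set $R = Q|_{\finitesupport} + S$. Then $R$ is a probability measure, $\mathbb{E}_R[X] = \mathbb{E}_Q[X] = \boldmu$, and $\supp(R) \subset \finitesupport \cup \{0,1\}^K$.

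Then I would compare the divergences. Because $P$ is supported on $\finitesupport$, we have
\[
\KL(P,Q) = \sum_{\xvec \in \finitesupport} P(\{\xvec\}) \log \frac{P(\{\xvec\})}{Q(\{\xvec\})},
\]
which depends only on the atoms of $Q$ on $\finitesupport$. For $R$, each atom satisfies $R(\{\xvec\}) = Q(\{\xvec\}) + S(\{\xvec\}) \geq Q(\{\xvec\})$, with the extra mass $S(\{\xvec\})$ arising when $\xvec \in \{0,1\}^K \cap \finitesupport$. So each log-ratio can only decrease, and $\KL(P,R) \leq \KL(P,Q)$. Absolute continuity $P \ll R$ is inherited from $P \ll Q$.

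The main (mild) obstacle is the bookkeeping for overlap between $\finitesupport$ and $\{0,1\}^K$ and the degenerate case $m=0$. The monotonicity of the atoms handles the overlap, and when $m = 0$ we simply take $S = 0$, so no genuine difficulty arises. The finite-dimensionality claim then follows because $R$ is determined by at most $|\finitesupport| + 2^K$ weights.
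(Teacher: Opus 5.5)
Your proposal is correct and follows the paper's proof essentially step for step. Both keep the part of $Q$ on $\finitesupport$ and replace the remaining mass $\rho = Q(\calX\setminus\finitesupport)$ by a measure on $\{0,1\}^K$ with the same barycenter; the paper does this via Carath\'eodory's theorem, your alternative to the one-cell stochastic rounding. Both then observe that $\KL(P,\cdot)$ depends only on the atoms on $\finitesupport$, which can only grow. Your atom-wise monotonicity argument is in fact slightly more careful than the paper's: the paper claims a strict inequality whenever $\finitesupport \cap \{0,1\}^K \neq \emptyset$, although the corner measure need not charge those overlapping points.
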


\begin{proof}
    The idea is simple: given a $Q$ with mean vector $\boldmu$, we can decompose it into $Q_a + Q_s$, where $Q_a$ is the ``absolutely continuous'' part of $Q$ w.r.t. $P$, and $Q_s$ is the singular part of $Q$ supported on $\calX \setminus \finitesupport$. Suppose $\finitesupport = \{x_1, \ldots, x_m\}$ and let us denote $Q_a \equiv (q_1, \ldots, q_m)$, $P \equiv (p_1, \ldots, p_m)$, and $\rho = 1- \sum_{i=1}^m q_i = \mathbb{E}_{Q_s}[1] \geq 0$. If $\rho=0$, then $Q$ is supported on $\finitesupport$ and we can set $R=Q$, so for the rest of the proof, we consider the case of $\rho>0$. Now, observe that the objective function in this case only depends on $Q_a \equiv (q_1, \ldots, q_m)$: 
    \begin{align}
        \KL(P, Q) = \sum_{i=1}^m p_i \log (p_i/q_i). 
    \end{align}
    Since $Q$ is feasible, we have 
    \begin{align}
        \mathbb{E}_Q[X] = \sum_{i=1}^m q_i x_i + \mathbb{E}_{Q_s}[X] = \boldmu. 
    \end{align}
    Let us denote by $A$ the term $(\boldmu - \sum_{i=1}^m q_i x_i)/\rho$, and we can verify that $A \in \calX$,  since it is the mean value associated with a distribution supported on $\calX$. Hence, by Caratheodary's theorem, $A$ can be represented as a convex combination of $K+1$ corner points of the cube. Formally, let $\{v_1, \ldots, v_{2^K}\}$ denote the corner points of $\{0, 1\}^K$ of $\calX = [0,1]^K$. Then, there exists a probability distribution $R_s \in \calP(\{0,1\}^K)$ such that 
    \begin{align}
        A = \frac{1}{\rho}\lrp{\boldmu - \sum_{i=1}^m q_i x_i} = \sum_{j=1}^{2^K} r_j v_j, \qtext{with} R_s \equiv (r_1, \ldots, r_{2^K}). 
    \end{align}
    Finally, to conclude the proof, we define $R = Q_a + \rho R_s$, and observe that $\KL(P, R) = \KL(P, Q)$ if $\finitesupport \cap \{0, 1\}^K = \emptyset$, and otherwise $\KL(P, R) < \KL(P, Q)$.  In other words, for every feasible $Q \in \calP_{\boldmu}$, there exists another feasible $R$ supported on $\finitesupport \cup\{0,1\}^K$ whose objective value is no larger than $\KL(P, Q)$. This completes the proof. 
\end{proof}

\paragraph{Step 2: Verification of Strong Duality.}  
Assume that the vector $\boldmu$ lies in the interior of the domain $\calX$, denoted by $\in \mathring{\calX} = (0,1)^K$. Then, we need to show the existence of a point in the relative interior of the feasible set for the primal problem. More specifically, we need to show the existence of a point $Q$, such that 
\begin{align}
    Q(\xvec)>0, \; \forall \xvec \in S \coloneqq \finitesupport \cup \{0, 1\}^K, \quad \mathbb{E}_Q[X] = \boldmu, \qtext{and} \mathbb{E}_Q[1] = 1. 
\end{align}
To construct such a $Q$, we first consider $\mathbf{V} = (V_1, V_2, \ldots, V_K)$ with $V_i \sim \mathrm{Bernoulli}(\mu_i)$ for $i \in [K]$, and $V_i \perp V_j$ for all $i \neq j$. Let $R_{\boldmu}$ denote the distribution of $\bf{V}$, and observe that $\mathbb{E}[\bf{V}] = \boldmu$. Furthermore,  as $\boldmu \in \mathring{\calX}$, each $\mu_i \in (0, 1)$ and hence $R_{\boldmu}$ is supported on $\{0,1\}^K$. Now, for every $\xvec_i \in \finitesupport \setminus \{0,1\}^K$, by Caratheodary's theorem, there exists a distribution $R_i$, supported on at most $K+1$ points in $\{0,1\}^K$, such that $\mathbb{E}_{R_i}[X] = \xvec_i$. Then, we define 
\begin{align}
    Q = R_{\boldmu} + \sum_{\xvec_i \in \finitesupport \setminus \{0,1\}^K} \epsilon_i (\delta_{\xvec_i} - R_i). 
\end{align}
Here, $\epsilon_i>0$ are constants that are small enough to ensure that each coordinate of $Q$ is strictly positive at all $\xvec \in \finitesupport \cup \{0,1\}^K$. A sufficient condition is if for all $i$, we have $\epsilon_i < (\min_{\xvec \in \{0,1\}^K} R_{\boldmu}(\{\xvec\}))/|\finitesupport|$.

\paragraph{Step 3: Obtaining the dual via KKT conditions.} 
As before, we use  $S$ to denote the support set $\finitesupport \cup \{0, 1\}^K$, and let $\calM^+ \equiv \calM^+(S)$ denote the collection of non-negative measures supported  on $S$. Then,  
\begin{align}
    \KLinf(P,\boldmu) = \quad &\min\limits_{Q\in \mathcal M^+} \quad \KL(P,Q) \\ 
    & \text{s.t.} \qquad \boldmu - \Exp{Q}{X} = 0  &&:\quad \boldlambda \in \R^K \\
    & \text{~~~~} \qquad \Exp{Q}{\boldsymbol{1}} - 1 = 0 \ &&:\quad \gamma \in \R
\end{align}
where the mean equality constraint represents a coordinate-wise equality, $\boldsymbol{1}:\calX \to \{1\}$ denotes the function that maps every $\xvec\in \calX$ to $1$,  and the variables $\boldlambda, \gamma$ denote the dual variables that will be associated with the constraints in the sequel. For any $j \in [K]$, we will denote the $j^{th}$ coordinates of $\boldlambda$, $\boldmu$, and $\xvec \in \calX= [0,1]^K$ with $\lambda_j$, $\mu_j$ and $x_j$ respectively. Then, for any unsigned measure $Q\in \calM^+(S)$, the Lagrangian equals
\begin{align}
    L({\bm\lambda}, \gamma, P, Q) &\coloneqq  \sum\limits_{\xvec \in\finitesupport} P(\xvec)\log\frac{P(\xvec)}{Q(\xvec)} + \sum\limits_{j=1}^K \lambda_j\lrp{ \mu_j - \sum\limits_{\xvec} Q(\xvec) x_j  } + \gamma \lrp{ \sum_{\xvec} Q(\xvec) - 1 }\\
    &= \sum\limits_{\xvec \in\finitesupport} P(\xvec )\log\frac{P(\xvec)}{Q(\xvec)} + \sum\limits_{j=1}^K \lambda_j\lrp{ \mu_j - \sum\limits_{\xvec \in\finitesupport} Q(\xvec ) x_j  } + \gamma \lrp{ \sum_{\xvec \in\finitesupport} Q(\xvec) - 1 } \\
    &\qquad\qquad\qquad\qquad +  \sum\limits_{\xvec \notin\finitesupport} Q(\xvec) \lrp{ \gamma  - \sum\limits_{j=1}^K \lambda_j  x_j  },
\end{align}
with the understanding that $0\log 0 = 0$, and $\log \infty = \infty$.
The Lagrangian dual for $\KLinf$ then becomes 
\[ \mathcal D(P, \boldmu) = \max\limits_{\substack{\boldlambda \in \R^K \\ \gamma \in \R}} ~~ \min\limits_{\substack{Q\in \mathcal M^+}} ~~
 L({\bm \lambda}, \gamma, P,Q).\]
We now observe that we can restrict our attention to a smaller class of dual variables. 
\begin{lemma}
    \label{lemma:finite-dual-support}
    Let $\Lambda_P$ denote the set of dual variables for which the inner minimization in the display above is non-trivial; that is, $\Lambda_P \coloneqq \{(\boldlambda, \gamma): \min_{Q \in \calM^+} L(\boldlambda, \gamma, P, Q) > -\infty\}$. Then, we have 
\begin{align}
    \Lambda_P \subset \left\{ (\boldlambda, \gamma) \in R^K \times \R \;:\;  \min_{\xvec \in \calX}   \gamma -  \boldlambda^T \xvec \geq 0,  \;\text{and}\; \gamma - \boldlambda^T \xvec >0, \;\forall \xvec \in \finitesupport  \right\}. \label{eq:dual-feasible-1}
\end{align}
\end{lemma}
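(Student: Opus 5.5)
We prove the inclusion by contraposition: if $(\boldlambda,\gamma)$ violates either condition in~\eqref{eq:dual-feasible-1}, we exhibit a sequence of measures $Q^{(n)} \in \calM^+(S)$ along which $L(\boldlambda,\gamma,P,Q^{(n)}) \to -\infty$. Throughout, we use the decomposition of the Lagrangian displayed above. It splits into a constant term $\boldlambda^T\boldmu - \gamma$, a sum over atoms $\xvec \in \finitesupport$ of the separable terms
\[
P(\xvec)\log\frac{P(\xvec)}{Q(\xvec)} + Q(\xvec)\lrp{\gamma - \boldlambda^T\xvec},
\]
and a sum over $\xvec \in S\setminus\finitesupport$ of the linear terms $Q(\xvec)(\gamma - \boldlambda^T\xvec)$.

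\textbf{First condition.} Since $\xvec \mapsto \gamma - \boldlambda^T \xvec$ is affine, its minimum over the cube $\calX=[0,1]^K$ is attained at a vertex $\vvec \in \{0,1\}^K \subset S$. Suppose this minimum is negative, i.e.\ $\gamma - \boldlambda^T\vvec < 0$. We keep $Q$ fixed and positive on $\finitesupport$ (say $Q = P$ there), and put mass $t$ at $\vvec$.
\begin{itemize}
\item If $\vvec \notin \finitesupport$, the term $t(\gamma - \boldlambda^T\vvec)$ tends to $-\infty$ as $t\to\infty$, while all other terms are unchanged.
\item If $\vvec \in \finitesupport$, the atom contributes $P(\vvec)\log(P(\vvec)/t) + t(\gamma-\boldlambda^T\vvec)$. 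The linear term dominates the logarithmic one, so this still tends to $-\infty$.
\end{itemize}
In either case, $(\boldlambda,\gamma) \notin \Lambda_P$.

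\textbf{Second condition.} Now suppose $\gamma - \boldlambda^T\xvec \le 0$ for some $\xvec \in \finitesupport$. Note that $P(\xvec) > 0$ because $\finitesupport$ is the support of $P$. Put mass $t$ at $\xvec$ and leave the other coordinates fixed. The contribution of this atom is
\[
-P(\xvec)\log t + t(\gamma - \boldlambda^T\xvec) + \text{const},
\]
and both non-constant terms are nonpositive and the first tends to $-\infty$ as $t\to\infty$. Hence again $\min_Q L = -\infty$, so $(\boldlambda,\gamma) \notin \Lambda_P$.

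\textbf{Ordering and main subtlety.} Combining the two cases gives the inclusion. The first condition must hold over all of $\calX$, not only over $S$. The vertex argument is what closes this gap, since only vertices of the cube (which lie in $S$) are needed to detect negativity of an affine function on $\calX$. This is the only mildly delicate step. The remaining care is bookkeeping when $\vvec$ happens to lie in $\finitesupport$, which the logarithmic-versus-linear comparison above handles.
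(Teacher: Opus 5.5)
Your proof is correct and takes essentially the same route as the paper. Both arguments drive the Lagrangian to $-\infty$ by placing growing mass at a single atom of $S$, using the linear term at vertices outside $\finitesupport$ and the $-P(\xvec)\log t$ term at support points, and then pass from the vertices $\{0,1\}^K$ to all of $\calX$. The paper does this last step via convex combinations of vertices, and you use the equivalent fact that an affine function attains its minimum over the cube at a vertex. Your choice $Q = P$ on the remaining atoms, which keeps the other terms finite, is bookkeeping the paper leaves implicit.
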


\begin{proof}
We prove this result by separately considering two cases. The first is of $\xvec \in \{0,1\}^K \setminus \finitesupport$. For such an $\xvec$, note that $L(\boldlambda, \gamma, P, \cdot)$ is linear in $Q(\xvec)$ for  fixed $(\boldlambda, \gamma, P)$.  In particular, let $Q(\xvec) = q_\xvec$, then then part of $L(\boldlambda, \gamma, P, Q)$ depending on $q_\xvec$ is $q_\xvec(\gamma - \boldlambda^T \xvec)$. If $c_\xvec = \gamma - \boldlambda^T \xvec < 0$, then we can make the inner minimization $\min_{Q} L(\boldlambda, \gamma, P, Q) = -\infty$, by taking a sequence $Q'$s assigning increasingly larger mass at $\xvec$. 

For all $x\in \finitesupport$, it turns out that the part of $L(\boldlambda, \gamma, P, Q)$ that depends on $q_\xvec = Q(\xvec)$ is $\phi(q_\xvec) = p_\xvec \log \lrp{ \tfrac{p_\xvec}{q_\xvec} } + (\gamma - \boldlambda^T \xvec) q_\xvec$. So, if $c_\xvec = \gamma - \boldlambda^T \xvec \leq 0$, then we can again make $L(\boldlambda, \gamma, P, \cdot)$ go to $-\infty$ by choosing a sequence of $Q'$s assigning increasing values of $q_\xvec$. So for such $\xvec \in \finitesupport$, we need the stronger condition that $\gamma - \boldlambda^T \xvec > 0$. 

Together, these two conditions imply that 
\begin{align}
    \gamma - \boldlambda^T \xvec \geq 0, \qtext{for all} \xvec \in \{0,1\}^K. 
\end{align}
Since every $\xvec \in \calX$ can be written as a convex combination of the corner points $\{0,1\}^K$, the above condition also implies that $\gamma - \boldlambda^T \xvec \geq 0$ for all $\xvec \in \calX$. This concludes the proof. 
\end{proof}

Now restricting our attention to $(\boldlambda, \gamma) \in \Lambda_P$, we can now characterize the value of the optimal $Q^* \equiv Q^*(\boldlambda, \gamma)$ for any such pair of dual variables. 
\begin{lemma}
    \label{lemma:finite-optimal-Q} For any $(\boldlambda, \gamma) \in \Lambda_P$ defined in~\eqref{eq:dual-feasible-1}, there exists an optimal $Q^* \equiv Q^*(\boldlambda, \gamma)$ satisfying the following conditions. 
    \begin{itemize}
        \item For any $\xvec \in \finitesupport$, we have 
        \begin{align}
            q^*_\xvec = \frac{p_\xvec}{c_\xvec}, \qtext{where} q^*_\xvec = Q^*(\xvec), \qtext{and} c_\xvec = \gamma - \boldlambda^T \xvec > 0. 
        \end{align}
        \item For any $\xvec \in \{0,1\}^K \setminus \finitesupport$, we must have $q^*_\xvec (\gamma - \boldlambda^T \xvec) = 0$. 
    \end{itemize}
\end{lemma}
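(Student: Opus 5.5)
The plan is to exploit that, for fixed $(\boldlambda,\gamma)\in\Lambda_P$, the Lagrangian $L(\boldlambda,\gamma,P,\cdot)$ is \emph{separable} across the atoms of $S=\finitesupport\cup\{0,1\}^K$. After discarding the constant $\sum_j \lambda_j\mu_j - \gamma$, it reads
\begin{align}
L(\boldlambda,\gamma,P,Q) = \sum_{\xvec\in\finitesupport}\phi_\xvec(q_\xvec) + \sum_{\xvec\in\{0,1\}^K\setminus\finitesupport} c_\xvec\, q_\xvec, \qtext{where} \phi_\xvec(q) = p_\xvec\log(p_\xvec/q) + c_\xvec q,
\end{align}
with $c_\xvec=\gamma-\boldlambda^T\xvec$. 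The minimization over $\calM^+(S)$ is therefore a product of independent one-dimensional minimizations over $q_\xvec\ge 0$. I will solve each of these separately and then define $Q^*$ coordinatewise. There is no coupling, because the normalization and mean constraints have been moved into the multipliers.

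For $\xvec\in\finitesupport$, \Cref{lemma:finite-dual-support} gives $c_\xvec>0$, and $p_\xvec>0$ since $\xvec$ lies in the support. The function $\phi_\xvec$ is strictly convex and differentiable on $(0,\infty)$, with $\phi_\xvec(q)\to+\infty$ both as $q\downarrow 0$ (from the $-p_\xvec\log q$ term) and as $q\to\infty$ (from the linear term with positive slope). Its unique minimizer is thus the stationary point: setting $-p_\xvec/q + c_\xvec=0$ gives $q^*_\xvec=p_\xvec/c_\xvec$. Points of $\finitesupport\cap\{0,1\}^K$ are covered here, since their term also carries the logarithm.

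For $\xvec\in\{0,1\}^K\setminus\finitesupport$, the term $c_\xvec q_\xvec$ is linear, and \Cref{lemma:finite-dual-support} (through the condition $\gamma-\boldlambda^T\xvec\ge 0$ on all of $\calX$) gives $c_\xvec\ge 0$. If $c_\xvec>0$ the minimum is attained only at $q^*_\xvec=0$. If $c_\xvec=0$ every $q_\xvec\ge 0$ is optimal, and we may pick any value, for instance $0$, or whatever value is later fixed by primal feasibility. In either case $q^*_\xvec c_\xvec=0$, which is the complementary slackness statement.

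Assembling these coordinates yields a minimizer $Q^*\in\calM^+(S)$ with finite minimum value, which is consistent with $(\boldlambda,\gamma)\in\Lambda_P$. The only real subtlety is existence at the boundary. One must ensure that no minimizing sequence escapes to $q_\xvec=0$ on $\finitesupport$ (the logarithm rules this out) or to $q_\xvec=\infty$ anywhere (ruled out by $c_\xvec>0$ on $\finitesupport$ and $c_\xvec\ge 0$ off it). Both cases are handled by the sign information in \Cref{lemma:finite-dual-support}, so I do not expect a genuine obstacle. I would also record that the non-uniqueness when $c_\xvec=0$ is what later allows the dual optimum to be matched with a feasible primal $Q^*$.
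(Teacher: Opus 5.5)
Your proposal is correct and follows essentially the same route as the paper: you exploit the separability of the Lagrangian across the atoms of $S$. On the support of $P$, you solve the strictly convex one-dimensional problem through the stationarity condition $q^*_\xvec = p_\xvec/c_\xvec$. Off the support, you use the sign $c_\xvec \geq 0$ from \Cref{lemma:finite-dual-support} to obtain $q^*_\xvec c_\xvec = 0$. Your explicit coercivity check (the objective blows up as $q \downarrow 0$ and as $q \to \infty$) makes existence of the minimizer slightly more careful than the paper's convexity plus first-order-condition argument, but it is not a different approach.
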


\begin{proof}
For the first condition, recall that for $\xvec \in \finitesupport$, the objective depends on $q^*_\xvec$ only through the function $\phi(q^*_\xvec) = p_\xvec \log(p_\xvec/q^*_\xvec) + c_\xvec q^*_\xvec$, with $c_\xvec > 0$~(by the definition of   $\Lambda_P$). For $q^*_\xvec > 0$, we can check that 
\begin{align}
    \phi''(q^*_\xvec) =  \frac{p_\xvec}{(q^*_\xvec)^2} > 0  \quad \implies \quad 
    \phi(\cdot) \text{ is convex on } (0, \infty). 
\end{align}
Thus, the minimizer is attained at $q^*_\xvec$ such that $\phi'(q^*_\xvec) = -p_\xvec/q^*_\xvec +  c_\xvec = 0$; or $q^*_\xvec = p_\xvec/c_\xvec$ as claimed. 

For the second statement, we have already proved that for all $\xvec \in \calX$, we must have $c_\xvec = \gamma -  \boldlambda^T \xvec \geq 0$.  If $\xvec \not \in \finitesupport$, then the only dependence of $L(\boldlambda, \gamma, P, Q)$ on $q_\xvec = Q(\xvec)$ is through the linear function $c_\xvec q_\xvec$. If $c_\xvec = 0$, then $c_\xvec q_\xvec$ is trivially equal to $0$ for any feasible $Q$, while if $c_\xvec>0$, then the optimizing $Q^*$ must assign zero mass at all such $\xvec$. In either case, we must have $c_\xvec q^*_\xvec = 0$ as required. 
\end{proof}

On substituting the optimal form of $Q^* \equiv Q^*(\boldlambda, \gamma)$ for $\boldlambda$ and $\gamma$ in $\Lambda_P$, the dual becomes 
\begin{align}
    \mathcal D(P,\boldmu) = \max\limits_{(\boldlambda, \gamma) \in \Lambda_P} ~ \sum\limits_{\xvec \in\finitesupport} P(\xvec) \log\lrp{ \gamma - \boldlambda^T \xvec } + 1 - \lrp{ \gamma  - \boldlambda^T \boldmu}. \label{eq:KLinf-derivation-no-scaling}
\end{align}
Before proceeding, we note a simple fact about the term $\gamma - \boldlambda^T \boldmu$ in~\eqref{eq:KLinf-derivation-no-scaling}. 
\begin{lemma}
    \label{lemma:finite-positive-constant} For any $(\boldlambda, \gamma) \in \Lambda_P$, and $\boldmu \in \mathring{\calX}$, we must have $\gamma - \boldlambda^T \boldmu >0$. 
\end{lemma}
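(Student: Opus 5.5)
The plan is to express $\boldmu$ as a strict convex combination of points of the cube and then use that $\xvec \mapsto \gamma - \boldlambda^T \xvec$ is affine. By~\Cref{lemma:finite-dual-support}, every $(\boldlambda,\gamma) \in \Lambda_P$ satisfies $c(\xvec) \coloneqq \gamma - \boldlambda^T \xvec \geq 0$ for all $\xvec \in \calX$, and $c(\xvec) > 0$ for all $\xvec \in \finitesupport$. So $c(\boldmu) \geq 0$ holds for free, and the real task is to exclude $c(\boldmu) = 0$.

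I would argue by contradiction and suppose $c(\boldmu) = 0$. Because $\boldmu \in \mathring{\calX} = (0,1)^K$, for any direction $\vvec \in \R^K$ there is $\epsilon > 0$ small enough that $\boldmu \pm \epsilon \vvec \in \calX$. Affinity of $c$ then gives
\begin{align}
    0 = c(\boldmu) = \tfrac{1}{2}\lrp{c(\boldmu + \epsilon \vvec) + c(\boldmu - \epsilon \vvec)}.
\end{align}
Both terms on the right are nonnegative, so both must vanish. This yields $\boldlambda^T \vvec = 0$ for every $\vvec$, hence $\boldlambda = \boldsymbol{0}$, and then $\gamma = c(\boldmu) = 0$. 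In words, a nonnegative affine function on the cube that vanishes at an interior point is identically zero.

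This contradicts the strict positivity of $c$ on the support. Since $P$ is a probability measure, $\finitesupport$ is nonempty, so we can pick $\xvec \in \finitesupport$. For that point, $c(\xvec) = \gamma - \boldlambda^T \xvec = 0$, which is incompatible with $c(\xvec) > 0$ from~\Cref{lemma:finite-dual-support}. Hence $\gamma - \boldlambda^T\boldmu > 0$.

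The only point that needs any care is the middle step, where the interior assumption $\boldmu \in \mathring{\calX}$ is used to get a small two-sided perturbation in every direction inside the cube. Everything else follows directly from the definition of $\Lambda_P$.
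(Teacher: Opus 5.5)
Your proof is correct and uses the same idea as the paper: a nonnegative affine function on the cube that vanishes at an interior point vanishes identically, which contradicts strict positivity on $\finitesupport$. The paper writes $\boldmu$ as a convex combination of all $2^K$ vertices with strictly positive weights, forcing $\gamma - \boldlambda^T \vvec_i = 0$ at every vertex and hence on $\finitesupport$; your symmetric perturbations $\boldmu \pm \epsilon \vvec$ instead give $\boldlambda = \boldsymbol{0}$ and $\gamma = 0$ directly, which avoids having to justify those strictly positive vertex weights.
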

\begin{proof}
    Let us enumerate the elements of $\{0,1\}^K$ by $\{\vvec_1, \ldots, \vvec_{2^K}\}$, and observe that since $\boldmu \in \mathring{\calX}$, there exist strictly positive $\{w_i: i \in [2^K]\}$ with $\sum_{i=1}^{2^K} w_i = 1$, such that $\boldmu = \sum_{i=1}^{2^K} w_i \vvec_i$. This implies that 
    \begin{align}
        \gamma - \boldlambda^T \boldmu = \gamma - \boldlambda^T \lrp{\sum_{i=1}^{2^K}  w_i \vvec_i}  = \sum_{i=1}^{2^K} w_i \lrp{\gamma - \boldlambda^T \vvec_i} \geq 0,  
    \end{align}
    where the inequality is due to the fact that $(\boldlambda, \gamma) \in \Lambda_P$ defined in~\eqref{eq:dual-feasible-1}. Now, consider the possibility that $\gamma - \boldlambda^T \boldmu = 0$. Since each $w_i > 0$, it must mean that $\gamma - \boldlambda^T \vvec_i = 0$ for all $i \in [2^K]$. Since every $\xvec \in \finitesupport$ can be written as a convex combination of $\{\vvec_i: i \in [2^K]\}$, this leads to the conclusion that $\gamma - \boldlambda^T \xvec = 0$ for all $\xvec \in \finitesupport$. This contradicts the result of~Lemma~\ref{lemma:finite-dual-support}. Hence, we must have $\gamma - \boldlambda^T \boldmu > 0$. 
\end{proof}

To conclude the proof, observe that the constraints in~\eqref{eq:KLinf-derivation-no-scaling} are scale-invariant. In other words if $(\boldlambda, \gamma) \in \Lambda_P$, then so does $(c \boldlambda, c \gamma)$ for all $c >0$, and dual optimal does not change with scaling the dual variables.
\begin{align}
    \mathcal D(P,\boldmu) = \max\limits_{\substack{(\boldlambda, \gamma) \in \Lambda_P \\ c>0}} ~ \sum\limits_{\xvec \in\finitesupport} P(\xvec) \log\lrp{ \gamma -  \boldlambda^T\xvec } + 1 + \log c - c\lrp{ \gamma  -  \boldlambda^T \boldmu}. 
\end{align}
Optimizing over the variable $c$, with a fixed $(\boldlambda, \gamma)$, we see that 
\[ c^* \equiv c^*(\boldlambda, \gamma, \boldmu, P) = \frac{1}{\gamma - \boldlambda^T \boldmu } > 0. \]
The strict inequality follows form Lemma~\Cref{lemma:finite-positive-constant}.  On substituting $c^*$, we get
\begin{align}
    &\mathcal D(P,\boldmu) = \max\limits_{(\boldlambda, \gamma) \in \Lambda_P} ~ \sum\limits_{\xvec\in\finitesupport} P(\xvec) \log\lrp{\frac{ \gamma - \boldlambda^T \xvec }{\gamma - \boldlambda^T \boldmu }},\\
    \text{which implies}\quad 
    &\mathcal D(P,\boldmu) = \max\limits_{(\boldlambda, \gamma) \in \Lambda_P} ~ \sum\limits_{\xvec \in\finitesupport} P(\xvec) \log\lrp{1 - \frac{ \boldlambda^T (\xvec-\boldmu) }{\gamma - \boldlambda^T \boldmu }}. 
\end{align}

Renaming the dual variable as $\boldlambda \leftarrow \frac{\boldlambda}{\gamma - \boldlambda^T \boldmu}$, we get the required final expression: 
\begin{align}
\mathcal D(P,\boldmu) 
&= \max\limits_{\boldlambda \in \calL_{\boldmu}} ~~\sum\limits_{\xvec \in\finitesupport} P(\xvec) \log\lrp{ 1- \boldlambda^T(\xvec-\boldmu) }, \quad\text{where} \\
\calL_{\boldmu} &\coloneqq \{\boldlambda \in \R^K: \min_{\xvec \in \calX} 1- \boldlambda^T(\xvec-\boldmu) \geq 0 \}. 
\end{align}
This completes the proof. \hfill \qedsymbol

\subsection{Proof of~\Cref{theorem:continuous-klinf}}
\label{proof:continuous-klinf}
We will break down the proof of this result into several steps. The starting point is to establish a continuity property of $\KLinf$, and this is achieved by establishing the lower and upper semicontinuity separately. The \lsc property is inherited from the \lsc of relative entropy, while for showing the usc\ property, we rely on the data processing inequality.    
\begin{lemma}
    \label{lemma:interchange-1}
    For any two distributions $P$ and $Q$  in $\calP(\calX)$, let $P_k = P \calK_k $ and $Q_k = Q \calK_k$ denote their push-forward measures associated with the mean-preserving discretization channel $\calK_k$ from~\Cref{def:mean-preserving-channel}. Let $\boldmu \in \mathring{\calX}$ denote the mean vector associated with $Q$ lying in the interior of the domain $\calX$, and define 
    \begin{align}
        I_k \equiv I_k(P_k, \boldmu) = \inf_{Q: \mathbb{E}_Q[X]=\boldmu} \KL(P_k, Q_k). 
    \end{align}
    Then, we have $\lim_{k \to \infty} I_k = \KLinf(P, \boldmu)$. 
\end{lemma}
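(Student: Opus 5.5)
The plan is to prove matching upper and lower bounds on $I_k$, using the data processing inequality for the first and weak lower semicontinuity together with compactness of $\calP(\calX)$ for the second. I will also treat the variant $\widetilde I_k \coloneqq \inf_{R:\mathbb{E}_R[X]=\boldmu} \KL(P_k, R)$, since that is the quantity appearing in~\eqref{eq:interchange-1}. The mean-preserving property of $\calK_k$ gives $\mathbb{E}_{Q_k}[X]=\mathbb{E}_Q[X]=\boldmu$ for every feasible $Q$. So every $Q_k$ is itself feasible, and hence $\widetilde I_k \le I_k$. It therefore suffices to show
\begin{align}
\limsup_{k\to\infty} I_k \le \KLinf(P,\boldmu) \le \liminf_{k\to\infty} \widetilde I_k .
\end{align}

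\emph{Upper bound.} Fix any $Q$ with $\mathbb{E}_Q[X]=\boldmu$. By the data processing inequality for relative entropy (as in the proof of~\Cref{lemma:convergence-of-KL}), $\KL(P_k,Q_k)=\KL(P\calK_k,Q\calK_k)\le \KL(P,Q)$. Hence $I_k \le \KL(P,Q)$ for every $k$. Taking the infimum over feasible $Q$ gives $\sup_k I_k \le \KLinf(P,\boldmu)$. This step is immediate and also covers the case $\KLinf(P,\boldmu)=\infty$.

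\emph{Lower bound.} I may assume $L\coloneqq\liminf_k \widetilde I_k<\infty$, and I pass to a subsequence along which $\widetilde I_k\to L$. For each such $k$, I choose $R^{(k)}$ with mean $\boldmu$ and $\KL(P_k,R^{(k)})\le \widetilde I_k+1/k$. Since $\calX=[0,1]^K$ is compact, the family $\{R^{(k)}\}$ is tight. By Prohorov's theorem (\Cref{fact:prohorov}) there is a further subsequence with $R^{(k)}\Longrightarrow R$ for some $R\in\calP(\calX)$.

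Each coordinate map $\xvec\mapsto x_j$ is bounded and continuous on $\calX$, so $\mathbb{E}_R[X]=\lim_k\mathbb{E}_{R^{(k)}}[X]=\boldmu$, and $R$ is feasible. We already know $P_k\Longrightarrow P$ from the proof of~\Cref{lemma:convergence-of-KL}. Joint weak lower semicontinuity of $\KL$ then gives
\begin{align}
\KLinf(P,\boldmu)\le \KL(P,R)\le \liminf_k \KL(P_k,R^{(k)})\le \lim_k\lrp{\widetilde I_k + 1/k} = L .
\end{align}

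Combining the two bounds gives $\lim_k I_k=\lim_k\widetilde I_k=\KLinf(P,\boldmu)$. The only step needing care is the lower bound, where the near-optimizers vary with $k$. Two facts make it go through: compactness of $\calX$ yields a convergent subsequence of near-optimizers, and continuity of the mean functional under weak convergence on $\calX$ ensures the limit stays in the constraint set. I expect this to be the main obstacle in the general setting. On $[0,1]^K$ with the mean constraint it is easy, but once constraints are only approximately preserved it requires the enlarged sets $\calC_k$ used later.
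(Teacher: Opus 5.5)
Your proof is correct and follows the paper's argument. The $\limsup$ direction uses data processing, and the $\liminf$ direction uses Prokhorov together with joint weak lower semicontinuity of $\KL$, with the mean constraint preserved under weak limits. The one difference is that you bound $I_k$ below by the relaxed $\widetilde I_k$ (infimum over all of $\calP(\calX)$), whereas the paper identifies the discretized feasible set with $\{Q\in\calP(V_k):\mathbb{E}_Q[X]=\boldmu\}$. Your route avoids having to check that $\calK_k$ fixes the points of $V_k$, and it also gives directly the form $\lim_k \widetilde I_k=\KLinf(P,\boldmu)$ that is actually used in~\eqref{eq:interchange-1}.
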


\emph{Proof of~\Cref{lemma:interchange-1}.} 
First we observe that the set of distributions $\calQ_1 = \{Q_k = Q \calK_k: \, Q \in \calP(\calX), \; \mathbb{E}_Q[X]=\boldmu\}$ coincides with $\calQ_2 = \{Q'_k \in \calP(V_k): \mathbb{E}_{Q'_k}[X]=\boldmu\}$. The inclusion $\calQ_1 \subset \calQ_2$ holds by definition, as each $Q_k \in \calP(V_k)$ with mean $\boldmu$ due to the mean-preserving property of $\calK_k$, while the other direction $\calQ_2 \subset \calQ_1$ holds trivially since $\calP(V_k) \subset \calP(\calX)$.  Hence, we have 
\begin{align}
    I_k = \inf_{Q \in \calQ_1} \KL(P \calK_k, Q \calK_k) = \inf_{Q_k \in \calQ_2} \KL(P \calK_k, Q_k). 
\end{align}
\paragraph{$\boldsymbol{\limsup_{k \to \infty} I_k \leq I}$.} This direction of the proof relies on the data processing inequality~(DPI) for relative entropy. In particular, consider any $P \in \calP(\calX)$, and $Q \in \calP(\calX)$ with $\mathbb{E}_Q[X]=\boldmu$. Then, for any $k \geq 1$, we have 
\begin{align}
\KL(P, Q) & \geq \KL( P \calK_k, Q \calK_k) && (\text{DPI for relative entropy}) \\
\implies \inf_{Q: \mathbb{E}_{Q}[X] = \boldmu} \KL(P, Q) & \geq \inf_{Q:\mathbb{E}_{Q}[X]=\boldmu} \KL(P \calK_k, Q \calK_k) = I_k. && (\text{Definition of $I_k$}) 
\end{align}
This leads us to the bound:  
\begin{align}
    I = \KLinf(P, \boldmu) = \inf_{Q: \mathbb{E}_Q[X]=\boldmu} \KL(P, Q) \geq  I_k. 
\end{align}
Since this inequality is true for all $k \geq 1$, it is preserved on taking a limsup over all $k$: that is, $I \geq \limsup_{k} I_k$. 

\paragraph{$\boldsymbol{\liminf_{k \to \infty} I_k \geq I}$.} To show the other direction, we start by considering a subsequence of $\mathbb{N}$ that achieves the $\liminf$: that is, let $\{k^\ell: \ell \geq 1\}$ denote a subsequence such that 
\begin{align}
    \liminf_{k \to \infty} I_k = \lim_{\ell \to \infty} I_{k^\ell}. 
\end{align}
Choose an arbitrary $\epsilon > 0$, and for every $\ell \geq 1$,  {select an $\epsilon$-suboptimal distribution $\Qepsilon_{k^\ell} \gg P_k$, such that} 
\begin{align}
 \KL(P_{k^\ell}, \Qepsilon_{k^\ell}) \geq    I_{k^\ell} = \inf_{Q: \mathbb{E}_{Q}[X]=\boldmu} \KL(P_{k^\ell},  Q \calK_{k^\ell} )  \geq \KL(P_{k^\ell}, \Qepsilon_{k^\ell}) - \epsilon. \label{eq:interchange-1-proof-2}
\end{align}
All elements of the resulting collection of probability measures on $(\calX, \calB)$, denoted by $\{\Qepsilon_{k^\ell}:\ell \geq 1\}$,  are supported on the compact set $\calX = [0,1]^K$. Hence, this collection of probability measures is trivially ``tight''. An application of  Prokhorov's theorem~\citep[Theorem 5.1]{billingsley1999convergence} then implies that there exists a distribution $\Qepsilon_\infty$ on $(\calX, \calB)$, and a subsequence $(k_j^\ell)_{j \geq 1}$ such that 
\begin{align}
    \Qepsilon_{k^{\ell}_j} \Longrightarrow \Qepsilon_\infty, \qtext{as} j \to \infty, 
\end{align}
where ``$\Longrightarrow$'' denotes weak convergence. 
{Since weak convergence of probability measures implies convergence of the bounded functions, and as the projection map $\xvec \mapsto x_j$ is bounded on $\calX = [0,1]^K$} for all $j \in [K]$, we observe that 
\begin{align}
\mathbb{E}_{\Qepsilon_\infty}[X] = \lim_{j \to \infty} \mathbb{E}_{\Qepsilon_{k_j^{\ell}}}[X] = \lim_{j \to \infty} \boldmu = \boldmu. \label{eq:interchange-1-proof-3}
\end{align}
Thus, $\Qepsilon_\infty$ is a valid probability measure for the primal definition of  $\KLinf(P, \boldmu)$. Furthermore, we can also verify that $ P\calK_k = P_k \Longrightarrow P$, which also implies that the subsequence $P_{k_j^\ell} \stackrel{j \to \infty}{\Longrightarrow} P$. Due to the joint lower semicontinuity of relative entropy, we then obtain 
\begin{align}
    \liminf_{j \to \infty} \KL(P_{k_j^\ell}, \Qepsilon_{k_j^\ell}) \geq \KL\lrp{\liminf_{j \to \infty} P_{k_j^\ell}, \liminf_{j \to \infty} \Qepsilon_{k_j^\ell}} = \KL(P, \Qepsilon_\infty). 
\end{align}

This fact coupled with the inequality~\eqref{eq:interchange-1-proof-2} leads to 
\begin{align}
    \liminf_{k \to \infty} I_k = \lim_{\ell \to \infty}  I_{k^\ell} = \lim_{j \to \infty}I_{k_j^\ell} \geq \liminf_{j \to \infty} \KL(P_{k_j^\ell}, \Qepsilon_{k^\ell_j}) - \epsilon  \geq \KL(P, \Qepsilon_\infty) - \epsilon. 
\end{align}

Finally, using the fact that $Q_\infty^{(\epsilon)}$ is a feasible distribution for the $\KLinf(P, \boldmu)$ definition according to~\eqref{eq:interchange-1-proof-3}, we have the following: 
\begin{align}
    \left( \liminf_{k \to \infty} I_k \right) + \epsilon \geq \KL(P, \Qepsilon_\infty) \geq \inf_{Q: \mathbb{E}_Q[X]=\boldmu} \KL(P, Q) = \KLinf(P, \mu) = I. 
\end{align}
Since $\epsilon>0$ is arbitrary, this implies the required $\liminf_{k \to \infty} I_k \geq I$. 
\hfill \qedsymbol
The previous lemma, combined with the dual representation of $\KLinf$ for finitely supported distributions~(Proposition~\ref{theorem:finite-support}) implies the following: 
\begin{align}
    \KLinf(P, \boldmu) \stackrel{\text{(Lemma\ref{lemma:interchange-1})}}{=} \lim_{k \to \infty}  \KLinf(P\calK_k, \boldmu) \stackrel{\text{(Prop.~\ref{theorem:finite-support})}}{=} \lim_{k \to \infty} \sup_{\boldlambda \in \dualdomain} H_k(\boldlambda). 
\end{align}
To complete the proof, we need to justify the interchange of $\lim$ and $\sup$ in the second equality above. 
\begin{lemma}
    \label{lemma:interchange-2}
    For any $\calX$-valued random variable $X \sim P$ and $k \geq 1$,  let $X_k \sim P_k$ denote the output obtained by  passing $X$ through the mean-preserving discretization channel $\calK_k$. For a fixed $\boldmu = (\mu_1,\ldots, \mu_K) \in \mathring{\calX} = (0, 1)^K$, introduce the terms 
    \begin{align}
        \calL_{\boldmu} \coloneqq \{ \boldlambda \in \R^K: \sup_{\xvec \in \calX} \boldlambda^T(\xvec - \boldmu) \leq 1 \}, & \qtext{and} h(x, \boldlambda)  \coloneqq \log(1 - \boldlambda^T(x - \boldmu)),  \\
    \quad 
        H_k(\boldlambda) \coloneqq \mathbb{E}_{P_k}[h(X_k, \boldlambda)], &\qtext{and} H(\boldlambda) \coloneqq \mathbb{E}_P[h(X, \boldlambda)]. 
    \end{align}
    Then, we have $\lim_{k \to \infty} \sup_{\boldlambda \in \calL_{\boldmu}} H_k(\boldlambda) = \sup_{\boldlambda\in \calL_{\boldmu}} H(\boldlambda)$. 
\end{lemma}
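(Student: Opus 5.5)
The plan is to pass through the interior retractions $\epsdualdomain \coloneqq (1-\epsilon)\calL_{\boldmu}$ for $\epsilon \in (0,1)$. There we will show that $H_k \to H$ uniformly, and then use concavity together with $H_k(\mathbf{0}) = H(\mathbf{0}) = 0$ to control the loss from shrinking $\calL_{\boldmu}$ to $\epsdualdomain$, uniformly in $k$.

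First, two preliminary facts about the dual domain.
\begin{itemize}
\item $\calL_{\boldmu}$ is convex and compact. Closedness and convexity are immediate. For boundedness, since $\boldmu \in \mathring{\calX}$ there is an $r>0$ with $\boldmu \pm r\mathbf{e}_j \in \calX$ for every $j$. Testing the constraint at these points gives $|\lambda_j| \leq 1/r$, hence $\|\boldlambda\|_1 \leq K/r \eqqcolon M$ on $\calL_{\boldmu}$.
\item For $\boldlambda = (1-\epsilon)\boldlambda'$ with $\boldlambda' \in \calL_{\boldmu}$, we have $1 - \boldlambda^T(\xvec - \boldmu) = \epsilon + (1-\epsilon)\bigl(1 - \boldlambda'^T(\xvec-\boldmu)\bigr) \geq \epsilon$ for all $\xvec \in \calX$.
\end{itemize}
Consequently, on $\calX \times \epsdualdomain$ the map $\xvec \mapsto h(\xvec, \boldlambda)$ is Lipschitz in $\|\cdot\|_\infty$ with constant $M/\epsilon$, since the argument of the logarithm stays in $[\epsilon, 1+M]$. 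Using $\|X_k - X\|_\infty \leq \Delta_k$ almost surely for the coupling produced by $\calK_k$, we get
\begin{align}
\sup_{\boldlambda \in \epsdualdomain} |H_k(\boldlambda) - H(\boldlambda)| \leq \sup_{\boldlambda \in \epsdualdomain}\mathbb{E}\bigl[|h(X_k,\boldlambda) - h(X,\boldlambda)|\bigr] \leq \frac{M \Delta_k}{\epsilon} \xrightarrow{k\to\infty} 0.
\end{align}
In particular, $\sup_{\epsdualdomain} H_k \to \sup_{\epsdualdomain} H$ for each fixed $\epsilon$.

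Next comes the concavity step, which is the key point because near the boundary of $\calL_{\boldmu}$ the functions $H_k, H$ may equal $-\infty$ and no uniform convergence is available there. Each of $H_k$ and $H$ is concave on $\calL_{\boldmu}$, with values in $[-\infty, \log(1+M)]$, and vanishes at $\mathbf{0}$. Therefore, for every $\boldlambda \in \calL_{\boldmu}$,
\begin{align}
H_k((1-\epsilon)\boldlambda) \geq (1-\epsilon) H_k(\boldlambda) + \epsilon H_k(\mathbf{0}) = (1-\epsilon)H_k(\boldlambda),
\end{align}
and the same holds for $H$. Since $\mathbf{0}$ is feasible, both suprema are nonnegative, and we obtain
\begin{align}
(1-\epsilon)\sup_{\calL_{\boldmu}} H_k \;\leq\; \sup_{\epsdualdomain} H_k \;\leq\; \sup_{\calL_{\boldmu}} H_k,
\end{align}
with the analogous chain for $H$. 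Crucially, the factor $(1-\epsilon)$ does not depend on $k$.

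Combining the two steps:
\begin{itemize}
\item Lower bound: $\liminf_k \sup_{\calL_{\boldmu}} H_k \geq \lim_k \sup_{\epsdualdomain} H_k = \sup_{\epsdualdomain} H \geq (1-\epsilon)\sup_{\calL_{\boldmu}} H$.
\item Upper bound: $\limsup_k \sup_{\calL_{\boldmu}} H_k \leq (1-\epsilon)^{-1}\lim_k \sup_{\epsdualdomain} H_k = (1-\epsilon)^{-1}\sup_{\epsdualdomain} H \leq (1-\epsilon)^{-1}\sup_{\calL_{\boldmu}} H$.
\end{itemize}
Here $\sup_{\calL_{\boldmu}} H \leq \log(1+M) < \infty$. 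Letting $\epsilon \downarrow 0$ yields the claim.

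The main obstacle is the boundary behaviour. The only delicate point is making sure the retraction loss is uniform in $k$, and the homogeneity trick around $\mathbf{0}$, where $H_k(\mathbf{0}) = 0$ exactly, handles this without any further estimates.
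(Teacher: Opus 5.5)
Your proof is correct. Its $\liminf$ half matches the paper's argument: uniform convergence $H_k \to H$ on $\epsdualdomain = (1-\epsilon)\dualdomain$ from the Lipschitz bound and $\|X_k - X\|_\infty \leq \Delta_k$, followed by the concavity/homogeneity argument around $\mathbf{0}$ for $H$.

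Where you genuinely differ is the $\limsup$ direction.

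\begin{itemize}
\item The paper uses the mean-preserving property $\mathbb{E}[X_k \mid X] = X$. Together with conditional Jensen for the concave map $\xvec \mapsto \log(1 - \boldlambda^T(\xvec - \boldmu))$, this gives $H_k(\boldlambda) \leq H(\boldlambda)$ pointwise. Hence $\sup_{\dualdomain} H_k \leq \sup_{\dualdomain} H$ for every $k$, with no retraction or limit needed.
\item You instead apply the concavity-in-$\boldlambda$ retraction bound to $H_k$ itself. Using $H_k(\mathbf{0}) = 0$, you get $\sup_{\dualdomain} H_k \leq (1-\epsilon)^{-1} \sup_{\epsdualdomain} H_k$ uniformly in $k$, and then pass to the limit on the retracted domain.
\end{itemize}

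Each route buys something different.

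\begin{itemize}
\item The paper's route yields a sharper, nonasymptotic statement: the discretized dual values are always lower bounds. This mirrors the DPI bound $I_k \leq I$ on the primal side. It depends essentially on the channel being exactly mean-preserving.
\item Your route never uses $\mathbb{E}[X_k \mid X] = X$, only $\|X_k - X\|_\infty \leq \Delta_k$. It therefore works for any discretization channel as in \Cref{assump:general-discretization}.
\end{itemize}

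In fact, your mechanism is essentially the one the paper adopts later for the $\limsup$ step in \Cref{lemma:approx-constraint-unique-limit}. There, exact constraint preservation is no longer available, and the uniform-in-$k$ concavity bound $S_k - A_k(t) \leq t(U-L)/(1-t)$ plays the role of your factor $(1-\epsilon)$.
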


\emph{Proof of~\Cref{lemma:interchange-2}.}
The sets $\calX = [0,1]^K$, and $\calL_{\boldmu}$ with $\boldmu \in \mathring{\calX}$ are closed and bounded subsets of $\R^K$, and thus are compact. For some $\epsilon>0$, introduce the ``$\epsilon$-interior'' of $\dualdomain$, defined as
\begin{align}
    \epsdualdomain \coloneqq \{ \boldlambda \in \dualdomain: \sup_{\xvec\in \calX} \boldlambda^T(\xvec-\boldmu) \leq 1-\epsilon\}.  \label{eq:epsilon-interior}
\end{align}
Then $(\xvec, \boldlambda) \mapsto h(\xvec, \boldlambda)$ is uniformly continuous and bounded on the domain $\calX \times \epsdualdomain$, and satisfies 
\begin{align}
    \log \epsilon \leq h(\xvec, \boldlambda) \leq \log\lrp{1 + 4(1-\epsilon) \frac{\max\{ \|\boldmu\|_\infty, \, \|\boldsymbol{1}-\boldmu\|_\infty \} }{\min\{ \|\boldmu\|_\infty, \, \|\boldsymbol{1}-\boldmu\|_\infty \}} }, \qtext{for all} (\xvec, \boldlambda) \in \calX \times \epsdualdomain. 
\end{align}
The lower bound follows from the defining property of $\epsdualdomain$. To see why the upper bound must hold, note that $\boldmu \in \mathring{\calX}$, and hence there exists a $r_\infty < \min \{\|\boldmu\|_\infty, \|\boldsymbol{1}-\boldmu\|_\infty\}$ and $R_\infty> \max  \{\|\boldmu\|_\infty, \|\boldsymbol{1}-\boldmu\|_\infty\}$, such that $B_\infty(\boldmu, r_\infty) \subset \calX \subset B_\infty(\boldmu, R_\infty)$, where we use $B_\infty(\uvec, r)$ to denote $\{\yvec \in \mathbb{R}^K: \|\yvec-\uvec\|_\infty \leq r\}$.
A valid choice is $R_\infty = 2  \max  \{\|\boldmu\|_\infty, \|\boldsymbol{1}-\boldmu\|_\infty\}$ and $r_\infty = (1/2) \min   \{\|\boldmu\|_\infty, \|\boldsymbol{1}-\boldmu\|_\infty\}$. 
Then, for any $\boldlambda\in \epsdualdomain$, we must have 
\begin{align}
    r_\infty \|\boldlambda\|_1 = \sup_{\xvec \in B_\infty(\boldmu, r_\infty)} \boldlambda^T(\xvec - \boldmu) \leq \sup_{\xvec \in \calX} \boldlambda^T(\xvec - \boldmu)  \leq 1-\epsilon. \label{eq:eps-dual-bound-1}
\end{align}
Furthermore, since $\calX \subset B_\infty(\boldmu, R_\infty)$, we also have 
\begin{align}
    \inf_{\xvec \in B_\infty(\boldmu, R_\infty)} \boldlambda^T(\xvec - \boldmu) = - \|\boldlambda\|_1 R_\infty \geq - \frac{(1-\epsilon) R_\infty}{r_\infty} = -4(1-\epsilon)  \frac{\max  \{\|\boldmu\|_\infty, \|\boldsymbol{1}-\boldmu\|_\infty\}}{ \min  \{\|\boldmu\|_\infty, \|\boldsymbol{1}-\boldmu\|_\infty\}  } \eqcolon -C.  
\end{align}
Moreover, for any $\boldlambda \in \epsdualdomain$, the  map $\xvec \mapsto h(\xvec, \boldlambda)$ is {also Lipschitz continuous} on $\calX \times \epsdualdomain$, since $z \mapsto \log z$ is $1/\epsilon$-Lipschitz on $[\epsilon, \infty)$, and  
\begin{align}
    |h(\xvec, \boldlambda) - h(\xvec', \boldlambda)| \leq \frac{|\boldlambda^T(\xvec - \xvec')|}{\epsilon}  \leq \frac{\|\boldlambda\|_1}{\epsilon} \|\xvec-\xvec'\|_\infty \leq \frac{(1-\epsilon)}{\epsilon r_\infty} \|\xvec-\xvec'\|_\infty
    \coloneqq L_\epsilon \|\xvec-\xvec'\|_\infty, 
\end{align}
where the third inequality follows from the bound on $\|\boldlambda\|_1$ obtained in\eqref{eq:eps-dual-bound-1}. 
We state our result in terms of the $\|\cdot\|_\infty$ norm for concreteness, but the same argument should work with any other norm. Note that the assumption that $\boldmu$ lies in the interior of $\calX$ is crucial for the boundedness and Lipschitz continuity. 

\paragraph{Proof of $\boldsymbol{\limsup_{k \to \infty} \sup_{\boldlambda\in \dualdomain} H_k(\boldlambda) \leq \sup_{\boldlambda \in \dualdomain} H(\boldlambda)}$.} Using the properties of $\calK_k$, we know that $\mathbb{E}[X_k|X] = X$ for every realization of $X$, and hence 
\begin{align}
    H_k(\boldlambda) = \mathbb{E}[h(X_k, \boldlambda)] = \mathbb{E}[\mathbb{E}[h(X_k, \boldlambda) \mid X]] \stackrel{(i)}{\leq} \mathbb{E}[ h(\mathbb{E}[X_k|X], \boldlambda)] = \mathbb{E}[h(X, \boldlambda)] = H(\boldlambda),  \label{eq:interchange-2-proof-1}
\end{align}
where $(i)$ is due to the conditional Jensen inequality and the concavity of the $\log$ function.  Hence, on maximizing over $\dualdomain$, we get 
\begin{align}
    \sup_{\boldlambda \in \dualdomain} H_k(\boldlambda) \leq  \sup_{\boldlambda \in \dualdomain} H(\boldlambda) \quad \implies \quad   
    \limsup_{k \to \infty} \sup_{\boldlambda \in \dualdomain} H_k(\boldlambda) \leq  \sup_{\boldlambda \in \dualdomain} H(\boldlambda). \label{eq:interchange-2-proof-2}
\end{align}

\paragraph{Proof of $\boldsymbol{\liminf_{k \to \infty} \sup_{\boldlambda\in \dualdomain} H_k(\boldlambda) \geq \sup_{\boldlambda \in \dualdomain} H(\boldlambda)}$.} By the construction of $X_k$, we know that $\|X - X_k\|_\infty \leq \Delta_k = 2^{-k}$. This fact, combined with the Lipschitz continuity of~$h(\cdot, \boldlambda)$ on $\calX$ implies that for any $\delta>0$, there exists a finite $k_{\epsilon, \delta}$, such that for all $k \geq k_{\epsilon, \delta}$ and $\boldlambda\in \epsdualdomain$, we have 
\begin{align}
    |H(\boldlambda) - H_k(\boldlambda)| \leq \delta \quad \implies \quad \sup_{\boldlambda\in \epsdualdomain} H(\boldlambda) - \delta \leq \sup_{\boldlambda \in \epsdualdomain} H_k(\boldlambda) \leq \sup_{\boldlambda\in \epsdualdomain} H(\boldlambda) + \delta. 
\end{align}
Since $\delta>0$ is arbitrary, we can conclude that 
\begin{align}
    \lim_{k \to \infty} \sup_{\boldlambda \in \epsdualdomain} H_k(\boldlambda) = \sup_{\boldlambda \in \epsdualdomain} H(\boldlambda). \label{eq:interchange-2-proof-4}
\end{align}
It remains to relate the supremum of $H(\boldlambda)$ over the restricted domain $\epsdualdomain$ to the supremum over the entire $\dualdomain$. In fact, we will show that 
\begin{align}
    \sup_{\boldlambda \in \dualdomain} H(\boldlambda) = \sup_{\epsilon \in (0, 1)} \sup_{\boldlambda \in \epsdualdomain} H(\boldlambda).  \label{eq:interchange-2-proof-3}
\end{align}
\sloppy Before proving this statement, we show how it suffices to reach our required conclusion $\liminf_{k \to \infty} \sup_{\boldlambda \in \dualdomain} H_k(\boldlambda) \geq \sup_{\boldlambda \in \dualdomain} H(\boldlambda)$. To do this,  observe that 
\begin{align}
    \liminf_{k \to \infty} \sup_{\boldlambda \in \calL_{\boldmu}} H_k(\boldlambda) \geq \liminf_{k \to \infty} \sup_{\boldlambda \in \epsdualdomain} H_k(\boldlambda) = \sup_{\boldlambda \in \epsdualdomain} H(\boldlambda), 
\end{align}
where the equality holds due to~\eqref{eq:interchange-2-proof-4}. Taking a supremum over $\epsilon \in (0, 1)$ and on invoking the equality~\eqref{eq:interchange-2-proof-3}, we get the required 
\begin{align}
    \liminf_{k \to \infty} \sup_{\boldlambda \in \dualdomain} H_k(\boldlambda) \geq \sup_{\epsilon \in (0, 1)} \sup_{\boldlambda \in \epsdualdomain} H(\boldlambda) = \sup_{\boldlambda \in \dualdomain} H(\boldlambda). 
\end{align}
This completes the proof of~\Cref{lemma:interchange-2}.

\paragraph{Justification of~\eqref{eq:interchange-2-proof-3}.} We begin by observing that $\epsdualdomain = (1-\epsilon)\dualdomain = \{(1-\epsilon) \boldlambda: \boldlambda \in \dualdomain\}$, which is  a consequence of the linearity of the constraint defining $\dualdomain$ and $\epsdualdomain$.  In particular, if $\boldlambda \in \dualdomain$, then 
\begin{align}
    \sup_{\xvec \in \calX} \boldlambda^T(\xvec - \boldmu) \leq 1 \quad \iff \quad 
    \sup_{\xvec \in \calX} (1-\epsilon)\boldlambda^T(\xvec - \boldmu) \leq 1-\epsilon, 
\end{align}
which implies that $(\boldlambda \in \dualdomain) \; \iff \; \big( (1-\epsilon)\boldlambda \in \epsdualdomain\big)$. 
Now, observe that $H(\boldlambda) = \mathbb{E}_P[\log(1 - \boldlambda^T(X-\boldmu)]$ is concave on the interior of the dual domain $\dualdomain$, since $\boldlambda \mapsto \log(1 + \boldlambda^T(X-\boldmu))$ is concave, which is preserved under expectation. Also  at $\boldsymbol{0} \in \dualdomain$, we have $H(\boldsymbol{0}) = 0$. Now, introduce the terms 
\begin{align}
    s = \sup_{\boldlambda \in \dualdomain} H(\boldlambda), \qtext{and} s_\epsilon = \sup_{\boldlambda \in \epsdualdomain} H(\boldlambda), \qtext{and observe that}  \sup_{\epsilon \in (0,1)} s_\epsilon \leq s.  \label{eq:interchange-2-proof-9}
\end{align}
To show the other direction, fix an arbitrary $\delta >0$, and let $\boldlambda_\delta \in \dualdomain$ be an element such that $H(\boldlambda_\delta) \geq s - \delta = \sup_{\boldlambda \in \dualdomain} H(\boldlambda) - \delta$. For any $\epsilon \in (0, 1)$, we then have 
\begin{align}
    s_\epsilon \geq H\big( (1-\epsilon) \boldlambda_\delta \big) = H\big( (1-\epsilon) \boldlambda_\delta + \epsilon \boldsymbol{0}\big) \geq (1-\epsilon) H(\boldlambda_\delta) + 0 \geq (1-\epsilon) (s - \delta). 
\end{align}
This implies the following chain: 
\begin{align}
    \sup_{\epsilon \in (0,1)} s_\epsilon \geq \liminf_{ \epsilon \to 0} s_\epsilon  \geq \liminf_{\epsilon \to 0} (1-\epsilon) (s-\delta) = s-\delta, \quad \implies \quad 
    \sup_{\epsilon \in (0,1)} s_\epsilon \geq s, \label{eq:interchange-2-proof-10}
\end{align}
since $\delta>0$ was arbitrary. 
Together,~\eqref{eq:interchange-2-proof-9} and~\eqref{eq:interchange-2-proof-10} imply the required equality.  \hfill \qedsymbol

\subsection{$\KLinf$ with box constraints}
\label{appendix:box-constraints}
An important aspect of our proof of~\Cref{theorem:continuous-klinf} was the usage of mean-preserving channels that ensured that the constraints of the discretized problems were exactly preserved. In this section, we observe that there exist a larger class of constraints (beyond the mean) for which the same idea still works. 
These include  constraints that involve monotone coordinate-wise transforms, such as $\mathbb{E}[|X|^j] \preccurlyeq \boldmu$, or bijective transforms from $\calX$ to $\calX$. 
\begin{assumption}
    \label{assump:general-constraint-1}
    The constraint function $g:\calX \to \mathbb{R}^J$, for some $J \in [K]$, is a continuous function that satisfies the following two properties: 
    \begin{itemize}
        \item The range of $g$ is a box; that is, $\calY \coloneqq g(\calX) =  \prod_{j=1}^J [a_j, b_j]$ for $a_j < b_j \in \mathbb{R}$ for all $j \in [J]$. 
        \item There exists a continuous surjective~(i.e., onto) selection function $s:\calY \to \calX$, such that $g(s(\yvec)) = \yvec$ for all $\yvec \in \calY$.  
\end{itemize}
\end{assumption}
The box structure of the range of $g$ allows us to discretize the space $\calY$ using the mean-preserving channel~(Definition~\ref{def:mean-preserving-channel}), while the existence of selection map allows us to pull the resulting discrete points in $\calY$ back to the domain $\calX$. 
A sufficient condition for these conditions to hold is if $g$ is a continuous bijective map from $\calX$ to $\calX$ with a continuous inverse~(that is, $g:\calX \to \calX$ is a homeomorphism). 
Next, suppose that we have a dual formulation of $\KLinf(P, g)$ for the case of $P$ supported on a finite subset of $\calX$ of the form 
\begin{align}
    \KLinf(P, g, \calC) = \inf_{Q \in \calQ_{g, \calC}} \KL(P, Q) = \sup_{\boldlambda \in \calL_g} H_g(P, \boldlambda),  \label{eq:finite-klinf-general-constraint}
\end{align}
for some domain $\calL_g$ and dual objective $H_g$. Then, we can establish an analog of~\Cref{theorem:continuous-klinf} for this more general case as stated next. 

\begin{proposition}
    \label{prop:bijective-constraints} Consider a continuous $g:\calX \to \calY \coloneqq g(\calX)$ satisfying~\Cref{assump:general-constraint-1}, where $\calX = [0,1]^K$ and $\calY \subset \mathbb{R}^J$ for some $J \in [K]$.  Then, for every $k \geq 1$, there exists a finite subset $V_k \subset \calX$, and a constraint-preserving channel $\calK_k:2^{V_k} \times \calX \to [0,1]$, such that $\int g(x) d(Q\calK_k)(x) = \int g(x) dQ(x)$ for all $Q \in \calP(\calX)$. Furthermore, we have the following: 
    \begin{align}
        \KLinf(P, g, \calC) = \lim_{k \to \infty} \KLinf(P \calK_k, g, \calC) = \lim_{k \to \infty} \sup_{\boldlambda \in \calL_g} H_g( P \calK_k, \boldlambda), \label{eq:KLinf-general-constraint}
    \end{align}
    where the dual formulation stated in~\eqref{eq:finite-klinf-general-constraint} is assumed to hold for finitely supported distributions. 
\end{proposition}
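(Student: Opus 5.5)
The proof has two parts. First we construct a constraint-preserving channel. Then we rerun the two interchange steps from the proof of~\Cref{theorem:continuous-klinf}.

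\emph{Construction of the channel.} Let $\calY=\prod_{j=1}^J[a_j,b_j]$. Let $\phi:\calY\to[0,1]^J$ be the coordinatewise affine bijection $\phi_j(y)=(y_j-a_j)/(b_j-a_j)$. Let $\calK^{\calY}_k$ be the conjugate under $\phi$ of the mean-preserving channel of~\Cref{def:mean-preserving-channel} on $[0,1]^J$. This channel is supported on the finite grid $W_k=\phi^{-1}(V_k^{(J)})$, satisfies $\mathbb{E}[Y_k\mid Y]=Y$, and $\|Y_k-Y\|_\infty\le \max_j(b_j-a_j)2^{-k}$. Set $V_k=s(W_k)\subset\calX$, which is finite. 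Define
\begin{align}
\calK_k(\cdot\mid \xvec)=\big(\calK^{\calY}_k(\cdot\mid g(\xvec))\big)\circ s^{-1},
\end{align}
that is: map $\xvec$ to $g(\xvec)$, stochastically round in $\calY$, and pull back through $s$. Measurability holds because $g$ is continuous and the kernel weights are continuous on each cell. Since $g\circ s=\mathrm{id}$, we get
\begin{align}
\int g\,d(Q\calK_k)=\mathbb{E}[g(s(Y_k))]=\mathbb{E}[Y_k]=\mathbb{E}[g(X)].
\end{align}
This gives exact constraint preservation. It follows that $\calQ_{g,\calC}\calK_k=\{Q'\in\calP(V_k):\mathbb{E}_{Q'}[g]\in\calC\}$. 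The inclusion $\supseteq$ holds because any $Q'$ supported on $V_k\subset\calX$ is itself feasible and $Q'\calK_k$ has the same $g$-mean. So we can write the discretized $\KLinf$ either as $\inf_Q\KL(P\calK_k,Q\calK_k)$ or over all feasible $Q$.

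\emph{Primal limit.} This mirrors~\Cref{lemma:interchange-1}. The upper bound $\limsup_k \KLinf(P\calK_k,g,\calC)\le\KLinf(P,g,\calC)$ follows from DPI applied to each feasible $Q$. For the lower bound, take $\epsilon$-optimal $Q_k$, which are feasible and supported on $V_k$. Extract a weakly convergent subsequence by Prohorov on the compact $\calX$. The limit is feasible because $g$ is bounded and continuous, so $\mathbb{E}_{Q_k}[g]\to\mathbb{E}_{Q_\infty}[g]$, and $\calC$ is closed. Then apply joint weak \lsc of $\KL$. Both steps need $P\calK_k\Rightarrow P$.

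\emph{Main obstacle.} The step I expect to be hardest is proving $P\calK_k\Rightarrow P$, because $\calK_k$ maps $\xvec$ to a point $s(Y_k)$ that need not be close to $\xvec$. The fix I would try is to note that $Y_k$ is close to $g(\xvec)$, so $s(Y_k)$ is close to $s(g(\xvec))$ by uniform continuity of $s$ on the compact box. But $s(g(\xvec))$ may differ from $\xvec$ unless $g$ is injective. I would handle this as follows. In the homeomorphism case, $s=g^{-1}$, so $\|s(Y_k)-\xvec\|\to0$ uniformly and weak convergence follows as in~\Cref{lemma:convergence-of-KL}. In the general case, I would observe that $\KL$ with $g$-constraints may be reduced via the sufficient map $\xvec\mapsto s(g(\xvec))$, and argue that the minimum-divergence problem is unchanged. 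If that reduction fails, I would restrict the statement to the case where $s\circ g$ is the identity on $\supp P$.

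\emph{Dual limit.} The second equality in~\eqref{eq:KLinf-general-constraint} is then the assumed finite-support dual~\eqref{eq:finite-klinf-general-constraint} applied to $P\calK_k$, which is supported on the finite set $V_k$. The proposition asserts only the limit, not an interchange of $\lim$ and $\sup$, so no analogue of~\Cref{lemma:interchange-2} is needed.
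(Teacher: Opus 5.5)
Your construction (stochastic rounding in the box $\calY$, pulled back through $s$) and your primal-limit argument are the same as the paper's. You are also right that the second equality in~\eqref{eq:KLinf-general-constraint} is just the assumed finite-support dual applied to $P\calK_k$, so no interchange lemma is needed.

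\textbf{The gap.} It is in your ``main obstacle'' paragraph. You leave a non-injective ``general case'' open, and you consider either an unproved reduction or weakening the statement. That case cannot occur. \Cref{assump:general-constraint-1} requires $s$ to be \emph{surjective} onto $\calX$. Together with $g\circ s=\mathrm{id}_{\calY}$, this forces $s\circ g=\mathrm{id}_{\calX}$. Indeed, given $\xvec\in\calX$, pick $\yvec$ with $s(\yvec)=\xvec$. Then $g(\xvec)=g(s(\yvec))=\yvec$, so $s(g(\xvec))=\xvec$.

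Hence $g$ is a homeomorphism $\calX\to\calY$ with inverse $s$ (which incidentally forces $J=K$). Your homeomorphism-case argument is therefore the whole proof. Let $\omega_s$ be the modulus of continuity of $s$ on the compact box. Then, almost surely and uniformly in $\xvec$,
\[
\|s(Y_k)-\xvec\|_\infty=\|s(Y_k)-s(g(\xvec))\|_\infty\le \omega_s\big(\max_j(b_j-a_j)\,2^{-k}\big).
\]
So $P\calK_k\Longrightarrow P$ by the test-function argument of~\Cref{lemma:convergence-of-KL}. Replace the hedge with this observation; no restriction of the statement is needed.

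Your instinct here is sharper than the paper's write-up. The paper justifies the limit only by density of $\bigcup_k V_k$ in $\calX$. Density alone does not give $P\calK_k\Longrightarrow P$. What the argument actually uses is the uniform displacement bound above, which rests on $s\circ g=\mathrm{id}_{\calX}$.

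\textbf{A minor slip.} Your justification of the inclusion $\supseteq$ in $\calQ_{g,\calC}\calK_k=\{Q'\in\calP(V_k):\mathbb{E}_{Q'}[g]\in\calC\}$ is not right. Showing that $Q'\calK_k$ has the correct $g$-mean does not show that $Q'$ lies in the image $\calQ_{g,\calC}\calK_k$. The correct reason is that $\calK_k(\cdot\mid s(\mathbf{w}))=\delta_{s(\mathbf{w})}$ for every grid point $\mathbf{w}\in W_k$, because grid points are fixed by the box channel and $g(s(\mathbf{w}))=\mathbf{w}$. Hence $Q'\calK_k=Q'$. Your limit argument does not actually need this equality: the DPI upper bound uses only $\subseteq$, and the lower bound works with arbitrary feasible near-optimizers.
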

\begin{proof}
The key idea behind this result is to use the ``box assumption" to construct a mean-preserving kernel $\calJ_k$ in the feature space $g(\calX)$ with uniform grids, and then pull it back via the inverse map $s$ to obtain the required $(V_k, \calK_k)$. 

In particular, let $Y = g(X)$ be a $\calY$-valued random variable. By the assumption that $\calY = g(\calX) = \prod_{i=1}^K [a_i, b_i]$, we can construct a dyadic grid $\calG_k$ with an associated set of vertices $U_k \subset \calY$. Using these, we can then define a mean-preserving channel $\calJ_k:2^{U_k} \times \calY \to [0,1]$ as described in~\Cref{def:mean-preserving-channel}. Let $Y_k$ denote the discretized version of $Y$ after passing through $\calJ_k$. Then, it follows that $\mathbb{E}[Y_k] = \mathbb{E}[\mathbb{E}[Y_k \mid Y] ] = \mathbb{E}[Y] = \mathbb{E}[g(X)]$. Finally, by the existence of a continuous selection function $s:\calY \to \calX$, we can construct a discrete subset $V_k = \{s(\yvec): \yvec \in U_k\}$, and the channel $\calK_k$ by  pulling back $\calJ_k$ through $s$ as $\calK_k(E \mid \xvec) = \calJ_k(s^{-1}(E) \mid g(\xvec))$.  
Since $s$ is onto and continuous, and $\cup_k U_k$ is dense in $\calY$, it follows that $\cup_k V_k = \cup_k s(U_k)$ is also dense in the original domain $\calX$. 
Thus, we have constructed a sequence of constraint-preserving channels $\calK_k:2^{V_k} \times \calX \to [0,1]$, which discretize $X$ to $X_k$ supported on finite domains $V_k$, such that $\cup_k V_k$ is dense in $\calX$. With these properties, we can repeat all the arguments of~\Cref{theorem:continuous-klinf} with this more general constraint to obtain the result stated in~\eqref{eq:KLinf-general-constraint}. We omit details to avoid repetition. 
\end{proof}

\subsection{Proof of~\Cref{theorem:f-divergence}} 
\label{proof:f-divergence} 

As in the case of $\KLinf$, we prove this result in two steps. First, in~Appendix~\ref{proof:f-divergence-finite-support}, we obtain the dual for finitely supported distributions, and in~Appendix~\ref{proof:f-divergence-extension}, we extend it to general distributions on $\calX = [0,1]^K$ following an argument that parallels the proof of~\Cref{theorem:continuous-klinf}. 

    \subsubsection{$P$ with finite support}
    \label{proof:f-divergence-finite-support} 
    As for relative entropy, we first consider the case of finitely supported $P$ on $\finitesupport = \{\xvec_1, \ldots, \xvec_m\}$ with mean $\boldmu \in \mathring{\calX}$. Any  $Q \in \calP(\calX)$ with finite $D_f(P \parallel Q)$ can be represented by $\{q_i = Q(\{\xvec_i\}): i \in [m]\}$ and $\Qtilde$ which is supported in $\calX \setminus \finitesupport$. Let us denote by $\qtilde$ the mass assigned by $Q$~(equivalently $\Qtilde$) to $\calX \setminus \finitesupport$, and define 
    \begin{align}
        \boldrho = \begin{cases}
            (1/\qtilde) \int \xvec d\Qtilde(\xvec), & \text{ if } \qtilde >0, \\
            \boldsymbol{0}, & \text{ if } \qtilde =0. 
        \end{cases}
    \end{align}
    Since $P$ is supported on $\finitesupport$, it follows that 
    \begin{align}
        D_f(P \parallel Q) = \int f \lrp{\frac{dP}{dQ}} dQ  = \sum_{i=1}^m p_i \ftilde\lrp{\frac{q_i}{p_i}} + \qtilde f(0), \qtext{with} p_i \coloneqq P(\{\xvec_i\}), \; \forall i \in [m].
    \end{align}
    The constraints only depend on $Q$ through $(\qtilde, \boldrho)$; that is, 
    \begin{align}
        &\sum_{i=1}^m q_i + \qtilde= 1, \; \text{for}\; q_i \geq 0, \; \qtilde \geq 0, && (\text{probability constraint}) \\
        &\sum_{i=1}^m q_i \xvec_i + \qtilde \boldrho = \boldmu,  \; \text{for} \; \boldrho \in \calX. && (\text{mean constraint})
    \end{align}
    We first recall that in this problem, strong duality holds by the exact construction used in the proof of~\Cref{theorem:finite-support}. 
    Now, we introduce the dual variables $\boldlambda \in \R^K$ and $\gamma \in \mathbb{R}$, and write the Lagrangian as 
    \begin{align}
        L(\{q_i\}, \qtilde, \boldrho; \boldlambda, \gamma) & = \sum_{i=1}^m p_i \ftilde\lrp{\frac{q_i}{p_i} }  + \qtilde f(0) +  \sum_{i=1}^m (\gamma - \boldlambda^T \xvec_i) q_i + (\gamma - \boldlambda^T \boldrho) \qtilde - (\gamma - \boldlambda^T \boldmu). 
    \end{align}
    For any feasible $(\boldlambda, \gamma)$, we can define the dual objective by minimizing $L$ over $(\{q_i\}, \qtilde, \boldrho)$: 
    \begin{align}
        g(\boldlambda, \gamma) \coloneqq \inf_{\{q_i \geq 0 \}, \qtilde \geq 0, \boldrho \in \calX}  L(\{q_i\}, \qtilde, \boldrho; \boldlambda, \gamma) = \inf_{\{q_i\}} \lrp{  \inf_{\qtilde, \boldrho} L(\{q_i\}, \qtilde, \boldrho; \boldlambda, \gamma)}. 
    \end{align}
    Observe that the only $(\qtilde, \boldrho)$-dependent term in $L$ is $(f(0) + \gamma - \boldlambda^T \rho) \qtilde$, and we claim that this is equal to zero  or $-\infty$. To see that, let us introduce $A(\boldlambda) = \sup_{ \boldrho \in \calX}\,\boldlambda^T \boldrho$, and consider two cases: 
    \begin{itemize}
        \item If $f(0) + \gamma < A(\boldlambda)$, then there exists a $\boldrho \in \calX$, such that $f(0) + \gamma - \boldlambda^T \boldrho < 0$. For such $(\gamma, \boldlambda)$ pairs, the inner infimum can be made arbitrarily small by taking $\qtilde \uparrow \infty$.
        \item If $f(0) + \gamma > A(\boldlambda)$, then for every $\boldrho \in \calX$, the minimum is achieved by taking $\qtilde=0$. 
        \item Hence, if $\gamma = A(\boldlambda)$, then the minimizing $\boldrho \in \argmax_{\yvec \in \calX} \boldlambda^T \yvec$~(the supremum is achieved due to the compactness of the domain $\calX$), and the optimizing $\qtilde$ may be positive.  
    \end{itemize}
    These points summarize that the $\inf_{\qtilde, \boldrho} (f(0) + \gamma - \boldlambda^T \boldrho) \qtilde$ term is either equal to $-\infty$~(if $f(0) + \gamma < A(\boldlambda)$), or zero. If we restrict our attention to dual variables for which $g$ is nontrivial~(i.e., $>-\infty$), then this term contributes $0$ to $g(\boldlambda, \gamma)$. 

    We next consider the optimization over $\{q_i\}$. Let $r_i = \gamma - \boldlambda^T \xvec_i$, and observe that for any fixed $(\gamma, \boldlambda)$, the objective is separable in each $q_i$. Considering each  $q_i$-dependent term in $L$ individually, we get  
    \begin{align}
        \inf_{q_i \geq 1} \left\{ p_i \ftilde\lrp{\frac{q_i}{p_i}} + r_i q_i \right\} = p_i \inf_{w \geq 0} \left\{ \ftilde(w) + r_i w \right\} \eqcolon p_i \Phi(r_i). 
    \end{align}
    By definition this value is finite only if $r_i = \gamma - \boldlambda^T \xvec_i \in U_f = \{r \in \R: \Phi(r)> -\infty\}$. Putting together these components, we can conclude that for all feasible $(\boldlambda, \gamma)$, the dual objective is either $-\infty$, or defined as 
    \begin{align}
        g(\boldlambda, \gamma) = \sum_{i=1}^m p_i \Phi(\gamma - \boldlambda^T \xvec_i) - (\gamma - \boldlambda^T \boldmu). 
    \end{align}
    By strong duality, this means that the mean-constrained $f$-divergence for finitely supported $P$ is equal to: 
    \begin{align}
        &D_f^{\inf}(P, \boldmu) = \sup_{\boldlambda, \gamma \in \calL_{\finitesupport}}\; \mathbb{E}_P[\Phi(\gamma - \boldlambda^T X)] - (\gamma - \boldlambda^T \boldmu),   \\
        \text{with} \quad & \calL_{\finitesupport} = \left\{ (\boldlambda, \gamma) \in \R^{K+1}: \gamma + f(0) \geq \sup_{\boldrho \in \calX} \boldlambda^T \boldrho, \; \text{and} \; \gamma - \boldlambda^T \xvec \in U_f, \; \forall \xvec \in \finitesupport \right\}. 
    \end{align}
This completes the derivation of the dual of mean-constrained divergence for finitely supported $P$. \hfill \qedsymbol

\subsubsection{$U_f$ is an interval}
\label{proof:Uf-is-an-interval}
Before completing the proof of~\Cref{theorem:f-divergence}, we first show a useful result that says that $U_f$~(that is, the set of points at which $\Phi(\cdot)$ is greater than $-\infty$) is an interval. 
\begin{lemma}
    \label{lemma:Uf-is-an-interval}
    Let $\Phi(r) = \inf_{w \geq 0} \lrp{\ftilde(w) + r w} = -\ftilde^*(-r)$,/and let $U_f = \{r \in \R: \Phi(r) > -\infty\}$. Then, $U_f$ is half-line of the from $(r_{\min}, \infty)$ or $[-r_{\min}, \infty)$ for some $r_{\min} \in [-\infty, \infty)$. Additionally, if $\ftilde$ is continuously differentiable and strictly convex, then $r_{\min}= - \lim_{w \to \infty} \ftilde'(w)$. 
\end{lemma}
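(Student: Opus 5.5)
The plan is to treat $\Phi$ as a pointwise infimum of affine functions of $r$, indexed by $w \geq 0$, and read off the shape of $U_f$ from monotonicity. The upward closure and nonemptiness give the half-line part; a derivative argument on $w \mapsto \ftilde(w)+rw$ identifies $r_{\min}$.

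\emph{Step 1 (monotonicity and upward closure).} For $r \leq r'$ and any $w \geq 0$ we have $\ftilde(w) + rw \leq \ftilde(w) + r'w$, since $w \geq 0$. Taking infima gives $\Phi(r) \leq \Phi(r')$. Hence $r \in U_f$ implies $[r,\infty) \subset U_f$, so $U_f$ is upward closed. Such a set is either empty, all of $\R$, or a half-line of the form $(r_{\min},\infty)$ or $[r_{\min},\infty)$ with $r_{\min} = \inf U_f$. (The second form in the statement should read $[r_{\min},\infty)$.) As a side remark, $\Phi$ is also concave as an infimum of affine maps, so $U_f$ is convex; this is consistent with the above but not needed.

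\emph{Step 2 (nonemptiness, so $r_{\min} < \infty$).} Use that $\ftilde$ is convex with $\ftilde(1) = 1\cdot f(1) = 0$. Take a subgradient $s \in \partial\ftilde(1)$, which is nonempty since $1$ is an interior point of the domain. This gives the affine minorant $\ftilde(w) \geq s(w-1)$ for all $w \geq 0$. Then for every $r \geq -s$,
\[
\Phi(r) \geq \inf_{w \geq 0}\{(s+r)w - s\} = -s > -\infty .
\]
So $U_f \supset [-s,\infty)$ is nonempty and $r_{\min} \in [-\infty,\infty)$.

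\emph{Step 3 (identifying $r_{\min}$ when $\ftilde$ is $C^1$ and strictly convex).} Set $L = \lim_{w\to\infty}\ftilde'(w)$. This limit exists in $(-\infty,+\infty]$ because $\ftilde'$ is nondecreasing.
\begin{itemize}
\item If $r > -L$, pick $W$ with $\ftilde'(W) + r > 0$. By convexity, for $w \geq W$,
\[
\ftilde(w)+rw \geq \ftilde(W)+rW+(\ftilde'(W)+r)(w-W),
\]
which is bounded below. On $[0,W]$ the function $\ftilde(w)+rw$ is convex and lower semicontinuous. It is therefore bounded below on this compact interval; the value $\ftilde(0)=+\infty$, as in the relative-entropy case, only helps. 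Hence $\Phi(r) > -\infty$.
\item If $r < -L$ (possible only when $L<\infty$), then $\frac{d}{dw}(\ftilde(w)+rw) = \ftilde'(w)+r \leq L + r < 0$ for all $w>0$. So $\ftilde(w)+rw \leq \ftilde(1)+r + (L+r)(w-1) \to -\infty$, and $r \notin U_f$.
\end{itemize}
Together these give $r_{\min} = -L = -\lim_{w\to\infty}\ftilde'(w)$, including the case $L=+\infty$, $r_{\min}=-\infty$. Whether the endpoint $r=-L$ belongs to $U_f$ depends on $f$, which is why both forms of half-line appear.

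The main obstacle I expect is Step 3's lower bound near $w=0$, since $\ftilde$ may blow up or fail to be differentiable at the boundary. I would handle it purely through lower semicontinuity and convexity on the compact piece $[0,W]$, avoiding any use of $\ftilde'(0)$.
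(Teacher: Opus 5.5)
Your proof is correct and follows the paper's route: monotonicity of $\Phi$ in $r$ gives the half-line shape, and your two cases in Step 3 are the paper's two cases for $\ftilde^*(s)$ rewritten with $s=-r$ and $b=L$. The paper also invokes convexity of $\mathrm{dom}\,\ftilde^*$, which, as you note, is redundant.

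Your Step 2 is a genuine addition, since the paper's proof tacitly assumes $U_f\neq\emptyset$ when it concludes that $U_f$ is unbounded above. Also, the obstacle you anticipate at $w=0$ does not arise: the tangent line $\ftilde(W)+\ftilde'(W)(w-W)$ minorizes $\ftilde$ on all of $[0,\infty)$, not just on $[W,\infty)$, which gives $\Phi(r)\geq \ftilde(W)-W\ftilde'(W)$ directly for every $r\geq-\ftilde'(W)$.
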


\begin{proof}
    Since $\ftilde^*$ is a convex function, its level set $ \mathrm{dom}(\ftilde^*) = \{r: \ftilde^*(r) < \infty\}$ is convex, which means that 
    \begin{align}
        U_f = \{r \in \R: \ftilde^*(-r) < \infty\} =   -\mathrm{dom}(\ftilde^*), 
    \end{align}
    is also convex. Since convex subsets of the real line are intervals, we can conclude that $U_f$ must also be an interval. 

    Additionally, note that $\Phi(r)$ is non-decreasing in $r$. That is, for $r_2 > r_1$, then $\ftilde(w) + r_2 w \geq \ftilde(w) + r_1 w$ since $w \geq 0$, which implies that $\Phi(r_2) \geq \Phi(r_1)$. Therefore the interval $U_f$ must be unbounded from above, and of the form $(r_{\min}, \infty)$ or $[r_{\min}, \infty)$ as claimed. 

    Finally, under the additional assumption that $\ftilde$ is $C^1$ and strictly convex, which means that $\ftilde'$ is non-decreasing on its domain, and define $b = \lim_{w \to \infty} \ftilde'(w)$.  Then, it follows that $\ftilde^*(s) < \infty$ for all $s < b$, and $\ftilde^*(s) = + \infty$ for $s > b$~(if $b < \infty$). This immediately implies that $\Phi(r) > - \infty$ for all $r > r_{\min} = - b$. 
    \begin{itemize}
        \item If $s > b$, then choose $w_0$ large enough to ensure that $\ftilde'(w_0) \leq (b+s)/2 < s$. Hence, by the non-decreasing property of $\ftilde'(\cdot)$, we have the following for $w \geq w_0$: 
        \begin{align}
            \ftilde(w) \leq \ftilde(w_0) + \ftilde'(w)(w - w_0) \leq \ftilde(w_0) + \frac{b+s}{2} (w-w_0). 
        \end{align}
        This leads to
        \begin{align}
            \ftilde(s) = \sup_{w \geq 0} \lrp{s w - \ftilde(w)} \geq \sup_{w \geq 0} \lrp{ s w - \ftilde(w_0) - \frac{b+s}{2} (w-w_0)} =  \sup_{w \geq 0} \lrp{\frac{s-b}{2} w + \text{constant}} = + \infty. 
        \end{align}
        So no $s > b$ can belong to $\mathrm{dom}(\ftilde^*)$. 
        \item For $s < b$, a similar argument shows that $s \in \mathrm{dom}(\ftilde^*)$. To see why, choose a $w_1$ large enough to ensure that $\ftilde'(w_1) \geq (s+b)/2 > s$. By the convexity, we have the bound 
        \begin{align}
            \ftilde(w) \geq \ftilde(w_1) + \ftilde'(w_1)(w-w_1). 
        \end{align}
        Hence, for $w \geq w_1$, we have 
        \begin{align}
            sw - \ftilde(w) \leq s w - \ftilde(w_1) + \ftilde'(w_1)(w-w_1) = (s - \ftilde'(w_1)) w + \text{constant}. 
        \end{align}
        Since $s - \ftilde'(w_1) < 0$, this is decreasing in $w$ for all $w \geq w_1$, and thus the supremum of $sw - \ftilde(w)$ over $w \geq 0$ is achieved at some point in $[0, w_1]$, and that this value is finite. Hence, for all $s<b$, we have $\ftilde^*(s) < \infty$, or equivalently $s \in \mathrm{dom}(\ftilde^*)$. 
    \end{itemize}
    As $U_f = - \mathrm{dom}(\ftilde^*)$, this completes the proof of~\Cref{lemma:Uf-is-an-interval}. 
\end{proof}

\begin{remark}[$U_f$ instantiations]
    \label{remark:Uf-examples} 
    Using~\Cref{lemma:Uf-is-an-interval}, we can instantiate the evaluation of $U_f$ for the three $f$-divergences discussed in the main text. 
    \begin{itemize}
        \item For relative entropy, we have $\ftilde(w) = - \log w$, $\ftilde'(w) = -1/w$ which converges to $0$ as $w \to \infty$. Hence, $b = 0$, and thus $r_{\min} = - b = 0$, and $U_f = (0, \infty)$. We can exclude the end-point $r_{\min}=0$, because $\Phi(0) = \inf_{w \geq 0} - \log w = -\infty$. 

        \item For squared Hellinger distance, we have $\ftilde(w) = (1-\sqrt{w})^2$, $\ftilde'(w)= 1 - 1/\sqrt{w}$. Hence $b = \lim_{ w \to \infty} \ftilde'(w) = 1$, which means that $r_{\min} = -1$, and $U_f = (-1, \infty)$. We can exclude the end-point $r_{\min}=-1$, because $\Phi(-1) = \inf_{w \geq 0} (1-\sqrt{w})^2 - w = -\infty$. 
        
        \item For $\chi^2$-divergence, we have $\ftilde(w) = w + 1/w - 2$, $\ftilde'(w) = 1 - 1/w^2$, which implies that $b = \lim_{w \to \infty} \ftilde'(w) = 1$, Again, this implies that $r_{\min} = -1$, and in this case we have $U_f = [-1, \infty)$. 
    \end{itemize}
\end{remark}

\subsubsection{Extension to Arbitrary $P$}
\label{proof:f-divergence-extension}
The extension to arbitrary $P$ supported on $\calX = [0,1]$ can be obtained following the exact same arguments used in~\Cref{theorem:continuous-klinf} for the case of relative entropy.  We present an outline below, omitting the details to avoid repetition. 

As before, we work with a sequence of mean-preserving discretization channels $\{\calK_k: k \geq 1\}$ supported on finite $\Delta_k$-covering sets $\{V_k: k \geq 1\}$ of $\calX$. We will also assume that each $V_k$ contains the end points $\{0,1\}^K$. While this condition is not strictly necessary, it greatly simplifies the proof, because, the dual domain $\calL_k$ then becomes independent of $k$, and can be written as 
\begin{align}
    \fdivdualdomain = \{(\boldlambda, \gamma) \in \R^{K+1}: \gamma + f(0) \geq \beta(\boldlambda), \; \gamma - \beta(\boldlambda) \in U_f \}, \qtext{where} \beta(\boldlambda) = \sup_{\xvec \in \calX} \boldlambda^T \xvec.  
\end{align}
This is due to the fact that $\sup_{\xvec \in V_k} \boldlambda^T\xvec = \sup_{\xvec \in \calX} \boldlambda^T\xvec =   \sup_{\xvec \in \{0,1\}^K} \boldlambda^T\xvec$  under the assumption that $\{0,1\}^K \subset V_k$ for all $k \geq 1$. 
To complete the proof, we need to repeat the two steps involved in the proof of~\Cref{theorem:continuous-klinf}, represented by~\Cref{lemma:interchange-1} and~\Cref{lemma:interchange-2}. In particular, let 
\begin{align}
    I_{f,k} = \inf_{\substack{Q \in \calP(V_k) \\ \mathbb{E}_Q[X] = \boldmu }} D_f(P \calK_k \parallel Q), \qtext{and}
    I_f = \inf_{\substack{Q \in \calP(\calX) \\ \mathbb{E}_Q[X] = \boldmu }} D_f(P \parallel Q). 
\end{align}
Then, the first step is to show that $\lim_{k \to \infty} I_{f,k} = I_f$. The proof is identical to that of~\Cref{lemma:interchange-1}. Let $\calQ$ and $\calQ_k$ denote the optimization domains in the definitions of $I_f$ and $I_{f,k}$ respectively. Then, due to the mean-preserving property of $\calK_k$, we observe that $Q \in \calQ$ implies $Q \calK_k \in \calQ_k$. Hence, by the data processing inequality, we get $I_{f,k} = \inf_{Q_k \in \calQ_k} D_f(P \calK_k \parallel Q_k) \leq \inf_{Q \in \calQ} D_f( P \calK_k \parallel Q \calK_k) \leq \inf_{Q \in \calQ} D_f(P \parallel Q)$. Hence, we have $\limsup_{k \to \infty} I_{f,k} \leq I_f$. For the other direction, we can again use the Prokhorov + lower-semicontinuity argument of~\Cref{lemma:interchange-1}. We omit the details to avoid repetition. 

To conclude the proof, it remains to establish the continuity of the dual. For any $\theta \equiv (\boldlambda, \gamma) \in \fdivdualdomain$, define 
\begin{align}
    H_k(\theta) \coloneqq \mathbb{E}_{P_k}[\Phi(\gamma - \boldlambda^T X)] - (\gamma - \boldlambda^T \boldmu), \quad 
    H(\theta) \coloneqq \mathbb{E}_{P}[\Phi(\gamma - \boldlambda^T X)] - (\gamma - \boldlambda^T \boldmu). 
\end{align}
By the finite-support dual derivation, we have 
\begin{align}
    I_{f,k} = \sup_{\theta \in \fdivdualdomain} H_k(\theta), \qtext{for all} k \geq1. 
\end{align}
To pass to the limit, we argue exactly as in~\Cref{lemma:interchange-2} for $\KLinf$, replacing the $\log$ with $\Phi$. Then, by the mean-preservation property $\mathbb{E}[X_k \mid X] = X$, and the concavity of $\Phi(\cdot)$, we get $H_k(\theta) \leq H(\theta)$ for all $\theta \in \fdivdualdomain$, which leads to 
\begin{align}
    \limsup_{k \to \infty} \lrp{ \sup_{\theta \in \fdivdualdomain} H_k(\theta) } \leq \sup_{\theta \in \fdivdualdomain} H(\theta). 
\end{align}
For the $\liminf$ direction, we again restrict to $\epsfdivdualdomain \subset \fdivdualdomain$~(defined as in the $\KLinf$ proof) on which $\Phi(\gamma - \boldlambda^T \xvec)$ is uniformly Lipschitz in $\xvec$. This fact combined with $\|X_k - X\|_\infty \overset{a.s.}{\longrightarrow} 0$ implies the uniform convergence $\sup_{\theta \in \epsfdivdualdomain} \left\lvert H_k(\theta) - H(\theta) \right\rvert \to 0$. Finally, taking~$\epsilon \downarrow 0$, and invoking the inequality~\eqref{eq:interchange-2-proof-3} used in the proof of~\Cref{lemma:interchange-2} yields the required $\lim_{k \to \infty} \sup_{\theta \in \fdivdualdomain} H_k(\theta) = \sup_{\theta \in \fdivdualdomain} H(\theta)$. This concludes the proof.

\subsection{Details of \Cref{corollary:f-divergence}}
\label{proof:corollary-f-divergence}

\subsubsection{Hellinger Divergence}
\label{proof:corollary-Hellinger}
Hellinger divergence is associated with $f(u) = (\sqrt{u} - 1)^2$, so the associated perspective function $\ftilde(w) = w f(1/w) = w (\sqrt{1/w} - 1)^2 = w \lrp{\frac 1 w + 1 - 2 \frac 1 {\sqrt{w}} } = (\sqrt{w} -1)^2$. This leads to the following definition of $\Phi(r)$: 
\begin{align}
    \Phi(r) = \inf_{w \geq 0} \lrp{1 - 2 \sqrt{w} + w + rw } = \frac{r}{r+1} = 1 - \frac{1}{r+1}.  
\end{align}
This implies that the domain of $\Phi$ is $U_f = (-1, \infty)$. Plugging this expression in the general dual derived in~\Cref{theorem:f-divergence}, and using $c = \gamma - \boldlambda^T \boldmu$ and $Z_{\boldlambda} = \boldlambda^T(X-\boldmu)$, we get 
\begin{align}
    \mathbb{E}_P[\Phi(\gamma - \boldlambda^T X)] - \gamma - \boldlambda^T \boldmu = 1 - \mathbb{E}_P\left[ \frac{1}{\gamma - \boldlambda^T X + 1} \right] - c = 1 - c - \mathbb{E}_P\left[ \frac{1}{1 + c + Z_{\boldlambda}} \right].
\end{align}
Now, let $\boldlambda_0$ denote the normalized parameter $\boldlambda/(1+c)$, and observe that 
\begin{align}
    1 + c - Z_{\boldlambda} = (1+c) \lrp{ 1 - \boldlambda_0^T(X - \boldmu)}, 
\end{align}
which implies that 
\begin{align}
    1 - c - \mathbb{E}\left[ \frac{1}{1 + c - Z_{\boldlambda}} \right] = 2 - \left\{ (1+c) + \frac{1}{1+c} \underbrace{\mathbb{E}\left[ \frac{1}{1 - \boldlambda_0^T(X-\boldmu)} \right]}_{\coloneqq A(\boldlambda_0)} \right\}. 
\end{align}
Now, for a fixed $\boldlambda_0$, the term above is maximized~(over $d = 1+c$) at $d^* = \sqrt{A(\boldlambda_0)}$ yielding the value $2 - 2 \sqrt{A(\boldlambda_0)}$. The feasibility of $(\gamma, \boldlambda)$ directly translates into $1-\boldlambda_0^T(\xvec - \boldmu) > 0$ for all $x \in \calX$; that is, $\boldlambda_0 \in \calL_{\boldmu} = \{\boldlambda: 1-\boldlambda^T(\xvec - \boldmu) \geq 0, \; \forall \xvec \in \calX\}$.  Hence, we get the required 
\begin{align}
 D_{\mathrm{Hel}}^{\inf}(P, \boldmu)  = \sup_{\boldlambda \in \calL_{\boldmu}}  \lrp{ 2 -  2\sqrt{\mathbb{E}_P\left[ \frac{1}{1 - \boldlambda^T(X-\boldmu)  }  \right] }}. 
\end{align}

\subsubsection{Chi-Squared Divergence}
\label{proof:corollary-Chi-squared}
    In this case, we have $f(u) = (u-1)^2$, which leads to $\ftilde(w) = w + \frac 1 w - 2$ for $w>0$. Then, 
    \begin{align}
        \Phi(r) = \inf_{w > 0} \lrp{ w + \frac 1 w - 2 + r w} = 2 \lrp{\sqrt{1+r} - 1},  \quad U_f = [-1, \infty). 
    \end{align}
    With $c = \gamma - \boldlambda^T \boldmu$ and $Z_{\boldlambda} = \boldlambda^T(X-\boldmu)$, we have 
    \begin{align}
        \mathbb{E}_P[\Phi(\gamma - \boldlambda^TX)]  - \gamma  + \boldlambda^T \boldmu = 2 \mathbb{E}_P[\sqrt{1 + c - Z_{\boldlambda}}] - 2 - c. 
    \end{align}
    Let us use $s$ to denote $1+c$, and set $\boldlambda_0 = \boldlambda/s$, which implies $\sqrt{1+c + Z_{\boldlambda}} = \sqrt{s} \sqrt{1 + \boldlambda_0^T(X-\boldmu)}$. Hence, for a fixed $\boldlambda_0$, the objective is 
    \begin{align}
        2 \sqrt{s} B(\boldlambda_0) - s - 1, \qtext{with} B(\boldlambda_0) \coloneqq \mathbb{E}_P[\sqrt{1- \boldlambda^T_0(X-\boldmu)}]. 
    \end{align}
    On maximizing over $s$, we get that $s^* = B(\boldlambda_0)^2$, which gives the objective $B(\boldlambda_0)^2 -1$. 

    The feasibility of the reparameterized variable requires $1-\boldlambda_0^T(\xvec-\boldmu) \geq 0$ for all $\xvec  \in \calX$, or $\boldlambda\in \calL_{\boldmu}$. Hence, we get the required dual formulation
    \begin{align}
        D_{\chi^2}^{\inf}(P, \boldmu)& = \sup_{\boldlambda \in \calL_{\boldmu}} \lrp{ \left[ \mathbb{E}_P \sqrt{1 - \boldlambda^T(X-\boldmu)} \right]^2 - 1 }. 
    \end{align}
    This completes the proof of~\Cref{corollary:f-divergence}. \hfill \qedsymbol

\section{Deferred Proofs from~\Cref{sec:general-argument-approximation-constraints}}
\label{proof:general-limiting-argument}

\subsection{Proof of~\Cref{theorem:general-approximate-constraint}}
As we mentioned after the statement of~\Cref{theorem:general-approximate-constraint}, its proof has two main parts. The first is to establish the primal continuity which says that $I_k$ converges to $I \equiv I(P, g, \calC)$. The second, and more involved, step is to show the existence of a limit of the dual representations of the discretized problems. A key challenge, that we did not face in proving~\Cref{theorem:continuous-klinf}, is that the dual objectives and domains for the discretized problems could change with $k \geq 1$. To address this, a key idea is to first shrink the domain from $\Theta_k$ to $\Theta_k^{(t)}$ for $t \in (0,1)$, and then transport the domain to the limiting set $\Theta^{(t)}$ via the map $\tau_{k,t}$ in~\Cref{assump:general-dual-function}, and study the limiting value of $F_{k,t}(\theta) \coloneqq H_k(\tau_{k,t}(\theta), P_k)$ which is now defined on $\Theta^{(t)}$ for all $k \geq 1$.  
Before going into the details, we need to introduce some notation. For any $t \in (0,1)$ and $k \geq 1$, introduce 
\begin{align}
    A_k(t) \coloneqq \sup_{\theta \in \Theta^{(t)}_k}, \quad 
    \widetilde{A}_k(t) \coloneqq \sup_{\theta \in \Theta^{(t)}} F_{k,t}(\theta), \quad 
    A(t) \coloneqq \sup_{\theta \in \Theta^{(t)}} H^{(t)}(\theta, P). \label{eq:A-terms}
\end{align}
The shrinking away from the boundary to define $\Theta^{(t)}_k$ and $\Theta^{(t)}$ allows us to avoid boundary singularities analogous to the $\KLinf$ case in~\Cref{theorem:continuous-klinf}. Once the limit is established for a fixed $t$, then we appeal to concavity to allow us to take $t \downarrow 0$. 
We now present the proof of~\Cref{theorem:general-approximate-constraint} in five Lemmas, starting with  the primal continuity step. 
\begin{lemma}
    \label{lemma:approx-contraint-primal-1} 
    Let $\calQ = \{Q \in \calP(\calX): \mathbb{E}_Q[g(X)] \in \calC\}$ and $\calQ_k = \{Q \in \calP(V_k): \mathbb{E}_Q[g(X)] \in \calC_k\}$ denote the primal domains. 
    Under~\Cref{assump:general-constraint} and~\ref{assump:general-discretization}, for any $Q \in \calQ$, we have  $Q_k = Q\calK_k \in \calQ_k$, which implies $\lim_{k \to \infty} I_k \;=\; I \equiv I(P, g, \calC)$.
\end{lemma}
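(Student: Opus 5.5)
We first verify the feasibility claim, then prove the two inequalities $\limsup_k I_k \le I$ and $\liminf_k I_k \ge I$, following the template of~\Cref{lemma:interchange-1}.

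\emph{Feasibility.} Fix $Q \in \calQ$ and write $X \sim Q$ and $X_k \sim Q_k = Q\calK_k$. By~\Cref{assump:general-discretization}, $Q_k$ is supported on $V_k$ and $\|g(X_k) - g(X)\|_\infty \le \eta_k$ almost surely. Jensen's inequality for the norm then gives
\begin{align}
\|\mathbb{E}_{Q_k}[g(X_k)] - \mathbb{E}_Q[g(X)]\|_\infty \le \eta_k .
\end{align}
Since $\mathbb{E}_Q[g(X)] \in \calC$, it follows that $\mathbb{E}_{Q_k}[g(X_k)] \in \calC + B_\infty(\boldsymbol{0}, \eta_k) = \calC_k$. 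Hence $Q_k \in \calQ_k$.

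\emph{Upper bound.} Using the feasibility claim and the DPI from~\Cref{assump:general-divergence}, for every $Q \in \calQ$ we have
\begin{align}
I_k \le D(P\calK_k, Q\calK_k) \le D(P, Q).
\end{align}
Taking the infimum over $Q \in \calQ$ gives $I_k \le I$ for every $k$, and therefore $\limsup_k I_k \le I$.

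\emph{Lower bound.} This part needs more care, because elements of $\calQ_k$ satisfy only the relaxed constraint.
\begin{itemize}
\item We may assume $\liminf_k I_k < \infty$. Pass to a subsequence $k^\ell$ along which $I_{k^\ell}$ converges to this $\liminf$.
\item Fix $\epsilon > 0$ and pick $Q^{(\epsilon)}_{k^\ell} \in \calQ_{k^\ell}$ with $D(P_{k^\ell}, Q^{(\epsilon)}_{k^\ell}) \le I_{k^\ell} + \epsilon$.
\item Every measure here lives on the compact set $\calX$, so the family is tight. By Prokhorov's theorem (\Cref{fact:prohorov}), a further subsequence converges weakly to some $Q^{(\epsilon)}_\infty \in \calP(\calX)$.
\item Since $g$ is continuous and bounded on $\calX$, $\mathbb{E}_{Q^{(\epsilon)}_{k}}[g] \to \mathbb{E}_{Q^{(\epsilon)}_\infty}[g]$ along this subsequence.
\item Each such expectation lies in $\calC_k$, so its distance to $\calC$ is at most $\eta_k \to 0$. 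Because $\calC$ is closed, the limit $\mathbb{E}_{Q^{(\epsilon)}_\infty}[g]$ lies in $\calC$. Hence $Q^{(\epsilon)}_\infty \in \calQ$.
\item Arguing exactly as in the proof of~\Cref{lemma:convergence-of-KL}, uniform continuity on $\calX$ together with $\|X_k - X\|_\infty \le \Delta_k$ shows that $P_k \Rightarrow P$.
\end{itemize}
The weak lower semicontinuity in~\Cref{assump:general-divergence} then yields
\begin{align}
\liminf_k I_k + \epsilon \ge \liminf D(P_{k}, Q^{(\epsilon)}_{k}) \ge D(P, Q^{(\epsilon)}_\infty) \ge I .
\end{align}
Letting $\epsilon \downarrow 0$ concludes the proof.

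The main obstacle is the step showing that the weak limit is feasible for the original constraint set $\calC$. This is where the closedness of $\calC$, the vanishing of $\eta_k$, and the continuity of $g$ are all used simultaneously. Once that step is in place, everything else is a direct transcription of~\Cref{lemma:interchange-1}.
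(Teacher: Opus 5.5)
Your proposal is correct and follows essentially the same route as the paper. Both arguments establish feasibility of $Q\calK_k$ through $\|\mathbb{E}_{Q\calK_k}[g(X)]-\mathbb{E}_Q[g(X)]\|_\infty \le \eta_k$, use the DPI for the $\limsup$ bound, and combine Prokhorov, continuity of $g$, closedness of $\calC$ and weak lower semicontinuity for the $\liminf$ bound; the only differences are cosmetic (a fixed $\epsilon$ sent to zero at the end rather than the paper's vanishing sequence $\epsilon_k$, and an explicit treatment of the case $\liminf_k I_k=\infty$, which the paper leaves implicit).
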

This result completes the primal part of the proof by showing that the discretized feasible sets, $\calQ_k$, are rich enough to approximate the original feasible set $\calQ$ sufficiently well, while the lower semicontinuity of $D$ provides the lower bound. We now turn to the dual part of the argument. 

We know from the assumptions that the dual  of $I_k$ is characterized by the terms $(H_k, \Theta_k, \psi_k, b_k)$. Additionally, for any $t \in (0,1)$, we introduce the function 
\begin{align}
    F_{k,t}(\theta) \coloneqq H_k(\tau_{k,t}(\theta), P_k), \quad \theta \in \Theta^{(t)}. 
\end{align}
With this simple trick, for every $t \in (0,1)$, we now have a sequence of functions defined on the same  domain $\Theta^{(t)}$, which allows us to seek a limiting dual objective. Our next step is to identify this limit on the retracted domain $\Theta^{(t)}$. The regularity assumptions of~\Cref{assump:general-dual-function} imply that $\{F_{k,t}: k \geq 1\}$ is a uniformly bounded and equicontinuous family on the compact set $\Theta^{(t)}$, and thus we start off by identifying its limit. 
\begin{lemma}
    \label{lemma:approx-constraint-limiting-1} For a fixed  $t \in (0, 1)$, there exists a unique continuous function $H^{(t)}(\cdot, P)$ on $\Theta^{(t)}$ such that 
    \begin{align}
        \sup_{\theta \in \Theta^{(t)}} \; \left\lvert F_{k, t}(\theta) - H^{(t)}(\theta, P) \right\rvert \; \stackrel{k \to \infty}{\longrightarrow} \; 0. 
    \end{align}
\end{lemma}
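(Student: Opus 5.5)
The plan is an Arzel\`a--Ascoli argument combined with the dense-set convergence in~\Cref{assump:general-dual-limits}. Fix $t \in (0,1)$. By construction $\tau_{k,t}$ maps $\Theta^{(t)}$ into $\Theta_k^{(t)}$: the point $(\theta - t\theta_0)/(1-t)$ lies in $\Theta$, its projection lies in $\Theta_k$, and then we retract. Hence $F_{k,t} = H_k(\tau_{k,t}(\cdot), P_k)$ only evaluates $H_k$ on the retracted domain, where~\Cref{assump:general-dual-function} gives control.

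\textbf{Uniform boundedness.} For $\theta' \in \Theta_k^{(t)}$ we have $|H_k(\theta',P_k)| \le \sup_k\sup_{\xvec,\theta'}|\psi_k| + \sup_k \sup_{\theta'}|b_k| < \infty$. So $\sup_k \sup_{\theta \in \Theta^{(t)}} |F_{k,t}(\theta)| < \infty$.

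\textbf{Asymptotic equicontinuity.} Since Euclidean projection onto a convex set is $1$-Lipschitz,
\begin{align}
\|\tau_{k,t}(\theta) - \tau_{k,t}(\theta')\|_2 \le \|\theta - \theta'\|_2 .
\end{align}
Combining this with the uniform modulus $\omega_t$ gives $|F_{k,t}(\theta) - F_{k,t}(\theta')| \le \omega_t(\|\theta-\theta'\|_2)$ for all $k$. The statement of~\Cref{assump:general-dual-function} only says $\omega_t$ is non-decreasing, not that $\omega_t(\delta)\to 0$ as $\delta \downarrow 0$. That vanishing is what "uniform continuity" is meant to encode, so I will read it into the assumption. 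If it is not available, I will instead recover equicontinuity from concavity: $F_{k,t}$ is a concave function composed with a projection map, so this is less clean. The better route uses $\Theta^{(t)} \subset \Theta_k^{(t+\delta)}$-type interior margins. Concave functions uniformly bounded on a neighborhood are uniformly Lipschitz on compact subsets of the interior, and $\Theta_k^{(t)}$ sits at distance $\gtrsim t$ inside $\Theta_k$. I expect this point, making equicontinuity honest uniformly in $k$ when the domains $\Theta_k$ move, to be the main obstacle.

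\textbf{Existence and uniqueness of the limit.} Arzel\`a--Ascoli (\Cref{fact:arzela-ascoli}) on the compact set $\Theta^{(t)}$ now yields, for every subsequence, a further subsequence converging uniformly to some continuous $G$. On the countable dense set $\calD_t$, $\lim_k F_{k,t}(\theta)$ exists by~\Cref{assump:general-dual-limits}, so every such subsequential limit $G$ agrees with this pointwise limit on $\calD_t$. Since $G$ is continuous and $\calD_t$ is dense, all subsequential limits coincide; call the common limit $H^{(t)}(\cdot,P)$. The standard subsequence argument then shows that the full sequence converges uniformly: if it did not, some subsequence would stay at sup-distance $\ge \epsilon$ from $H^{(t)}$, contradicting the fact that it has a further subsequence converging to $H^{(t)}$.

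Uniqueness of a continuous uniform limit is immediate. Alternatively, one can avoid Arzel\`a--Ascoli altogether: equicontinuity plus pointwise convergence on a dense subset of a compact set already implies uniform Cauchy-ness, via a finite $\delta$-net drawn from $\calD_t$. I will present this $\epsilon/3$ version, since it is more direct.
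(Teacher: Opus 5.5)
Your proposal is correct and follows the paper's proof essentially step for step. Uniform boundedness comes from the bounds on $\psi_k,b_k$, and equicontinuity from the $1$-Lipschitz property of $\tau_{k,t}$ combined with the monotone modulus $\omega_t$. The conclusion then uses Arzel\`a--Ascoli together with the fact that all subsequential limits agree on the dense set $\calD_t$. Your explicit subsequence-of-subsequence step, and the $\epsilon/3$ finite-net variant, simply make the paper's final line more careful. The paper also tacitly reads $\omega_t(\delta)\to 0$ into the ``uniform continuity'' assumption, so the concavity-based fallback you sketch is not needed.
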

This result follows from an application of Arzel\'a-Ascoli theorem, and it presents a candidate limiting dual objective function $H^{(t)}$ on the interior domain $\Theta^{(t)}$ whose uniqueness is then justified by the   ``dense subset'' $\calD_t$ assumption. 
The next step is to relate these fixed-$t$ limits back to the original dual values $\sup_{\theta \in \Theta_k} H_k(\theta, P_k)$. We organize this part of the proof into the following chain
\begin{align}
    \lim_{k \to \infty} \sup_{\theta \in \Theta_k} H_k(\theta, P_k) \stackrel{?}{=} \lim_{k \to \infty} \sup_{t \in (0,1)} A_k(t) \stackrel{?}{=} \sup_{t \in (0,1)} \lim_{k \to \infty} A_k(t) \stackrel{?}{=} \sup_{t \in (0,1)}  A(t),      \label{eq:approx-constraint-proof-1}
\end{align}
where $A_k(t)  \coloneqq \sup_{\theta \in \Theta_k^{(t)}}H_k(\theta, P_k)$ and $A(t) \coloneqq \sup_{\theta \in \Theta^{(t)}}H^{(t)}(\theta, P)$ were defined in~\eqref{eq:A-terms}. 

We first justify the passage from the full  domain $\Theta_k$ to its retracted versions $\Theta^{(t)}_k$~(that is, the first ``?'' above). This is the point in the proof where the concavity of $H_k$ enters, since for a concave objective, there is no loss of optimality by shrinking the domain towards the interior and then taking a supremum over $t$. 
\begin{lemma}
    \label{lemma:approx-constraint-interchange-1} Under the assumptions of~\Cref{theorem:general-approximate-constraint}, let $\widetilde{\Theta}$ be either $\Theta$ or some $\Theta_k$ for $k \geq 1$, and let $h:\widetilde{\Theta} \to \R$ be a concave function with $h(\theta_0)> -\infty$. Then, we have 
    \begin{align}
        \sup_{\theta \in \widetilde{\Theta}} h(\theta) \; = \; \sup_{t \in (0,1)} \sup_{\theta \in \widetilde{\Theta}} h(\theta). 
    \end{align} 
\end{lemma}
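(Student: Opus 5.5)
The statement as typeset has a typo: the inner supremum on the right should be over the retracted set $\widetilde{\Theta}^{(t)} = t\theta_0 + (1-t)\widetilde{\Theta}$ rather than over $\widetilde{\Theta}$ itself. Otherwise the claim is trivial. I will prove
\begin{align}
    \sup_{\theta \in \widetilde{\Theta}} h(\theta) = \sup_{t \in (0,1)} \sup_{\theta \in \widetilde{\Theta}^{(t)}} h(\theta).
\end{align}
The plan mirrors the justification of~\eqref{eq:interchange-2-proof-3} in the proof of~\Cref{lemma:interchange-2}. There the retraction was toward $\boldsymbol{0}$ and $H(\boldsymbol{0})=0$; here it is toward $\theta_0$, with only $h(\theta_0) > -\infty$ assumed.

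\textbf{Easy direction.} Since $\widetilde{\Theta}$ is convex and contains $\theta_0$, each $\widetilde{\Theta}^{(t)} \subset \widetilde{\Theta}$. Hence for every $t$,
\begin{align}
    \sup_{\widetilde{\Theta}^{(t)}} h \leq \sup_{\widetilde{\Theta}} h,
\end{align}
and the same holds after taking the supremum over $t \in (0,1)$.

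\textbf{Reverse direction.} Let $s = \sup_{\widetilde{\Theta}} h$, which may be $+\infty$ a priori. Fix any $\theta \in \widetilde{\Theta}$ with $h(\theta) > -\infty$; such points exist, e.g.\ $\theta_0$. The point $\theta_t = t\theta_0 + (1-t)\theta$ lies in $\widetilde{\Theta}^{(t)}$ by definition. Concavity of $h$, which for an extended-valued $h$ means concavity of the hypograph, gives
\begin{align}
    \sup_{\widetilde{\Theta}^{(t)}} h \;\geq\; h(\theta_t) \;\geq\; t\,h(\theta_0) + (1-t)\,h(\theta).
\end{align}
Because $h(\theta_0)$ and $h(\theta)$ are both finite, the right-hand side tends to $h(\theta)$ as $t \downarrow 0$. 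Therefore
\begin{align}
    \sup_{t} \sup_{\widetilde{\Theta}^{(t)}} h \;\geq\; \limsup_{t \downarrow 0} \sup_{\widetilde{\Theta}^{(t)}} h \;\geq\; h(\theta).
\end{align}
Taking the supremum over such $\theta$ yields the reverse inequality, including the case $s = +\infty$. Points with $h(\theta) = -\infty$ do not affect $s$, since $h(\theta_0)$ is already finite.

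\textbf{Main obstacle.} The only delicate points are bookkeeping. First, one must handle extended values: $h$ may equal $-\infty$ on the boundary, as $\log$ does, so the concavity inequality should only be applied to $\theta$ where $h(\theta)$ is finite. Second, one must confirm that $\widetilde{\Theta}^{(t)} \subset \widetilde{\Theta}$, which uses convexity of $\Theta$ and $\Theta_k$ together with $\theta_0 \in \mathring{\Theta}$ and $\theta_0 \in \mathring{\Theta}_k$ from \Cref{assump:general-dual-function} and \Cref{assump:general-dual-limits}. The argument uses neither compactness nor continuity.
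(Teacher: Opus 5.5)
Your proof is correct and follows essentially the paper's argument: the inclusion $\widetilde{\Theta}^{(t)} \subset \widetilde{\Theta}$ (from convexity) gives one direction, and retracting a near-optimal point toward $\theta_0$ and applying concavity gives the other. You are also right that the inner supremum should be over $\widetilde{\Theta}^{(t)}$, which is what the paper's proof actually uses. The only difference is cosmetic: you treat the cases of infinite and finite $\sup_{\widetilde{\Theta}} h$ together by letting $t \downarrow 0$ for each fixed $\theta$ with finite $h(\theta)$. The paper instead splits into these two cases with explicit choices of $t$.
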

Applying this Lemma with $h = H_k(\cdot, P_k)$ yields the first equality in~\eqref{eq:approx-constraint-proof-1}. We next turn to the last equality in~\eqref{eq:approx-constraint-proof-1}, namely the identification of the fixed-$t$ limit $A(t)$ with the limit of the retracted dual suprema $A_k(t)$. For each fixed $t$, this comparison is obtained in two steps. The first is to compare $A(t)$ with the transported supremum $\widetilde{A}_k(t) \coloneqq \sup_{\theta \in \Theta^{(t)}} F_{k,t}(\theta)$, and then compare $\widetilde{A}_k(t)$ with the retracted supremum $A_k(t)$. 
\begin{lemma}
    \label{lemma:approx-constraint-interchange-2} Let $A_k(t) \coloneqq \sup_{\theta \in \Theta_k^{(t)}}H_k(\theta, P_k)$ and $A(t) \coloneqq \sup_{\theta \in \Theta^{(t)}}H^{(t)}(\theta, P)$. Then,~\Cref{lemma:approx-constraint-limiting-1} implies that $A_k(t) \to A(t)$ as $k \to \infty$, for each fixed $t$. If $\widetilde{A}_k(t) = \sup_{\theta \in \Theta^{(t)}} F_{k,t}(\theta) =  \sup_{\theta \in \Theta_k^{(t)}} H_k(\tau_{k,t}(\theta), P_k)$, then we also have 
    \begin{align}
        \lim_{k \to \infty} \, \left \lvert \widetilde{A}_k(t) - A_k(t) \right\rvert = 0 \quad \implies \quad \lim_{k \to \infty} \, \left \lvert {A}_k(t) - A(t) \right\rvert = 0
    \end{align}
\end{lemma}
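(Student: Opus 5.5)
The claim has two parts. First, $\widetilde{A}_k(t) - A_k(t) \to 0$. Second, combining this with \Cref{lemma:approx-constraint-limiting-1} gives $A_k(t) \to A(t)$.

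The second part is immediate from uniform convergence. We have
\begin{align}
    |\widetilde{A}_k(t) - A(t)| = \Big|\sup_{\Theta^{(t)}} F_{k,t} - \sup_{\Theta^{(t)}} H^{(t)}(\cdot, P)\Big| \leq \sup_{\theta \in \Theta^{(t)}} |F_{k,t}(\theta) - H^{(t)}(\theta, P)| \to 0,
\end{align}
and the triangle inequality then yields $|A_k(t) - A(t)| \leq |A_k(t) - \widetilde{A}_k(t)| + |\widetilde{A}_k(t) - A(t)| \to 0$.

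So the real work is the first part. I will compare the image $\tau_{k,t}(\Theta^{(t)})$ with $\Theta_k^{(t)}$. Since $\Pi_{\Theta_k}$ maps into $\Theta_k$, we have $\tau_{k,t}(\Theta^{(t)}) \subset t\theta_0 + (1-t)\Theta_k = \Theta_k^{(t)}$. This gives $\widetilde{A}_k(t) \leq A_k(t)$ for free.

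For the reverse inequality, fix any $\theta' = t\theta_0 + (1-t)\phi' \in \Theta_k^{(t)}$ with $\phi' \in \Theta_k$. By the Hausdorff assumption, there is $\phi \in \Theta$ with $\|\phi - \phi'\|_2 \leq s_k$. Set $\theta = t\theta_0 + (1-t)\phi \in \Theta^{(t)}$. Then $\tau_{k,t}(\theta) = t\theta_0 + (1-t)\Pi_{\Theta_k}(\phi)$, and
\begin{align}
    \|\Pi_{\Theta_k}(\phi) - \phi'\|_2 \leq \|\Pi_{\Theta_k}(\phi) - \phi\|_2 + \|\phi - \phi'\|_2 \leq 2 s_k,
\end{align}
because $\|\phi - \Theta_k\|_2 \leq s_k$. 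Hence $\|\tau_{k,t}(\theta) - \theta'\|_2 \leq 2(1-t)s_k$.

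Both points lie in $\Theta_k^{(t)}$, so the uniform continuity in $\theta$ from \Cref{assump:general-dual-function} gives
\begin{align}
    H_k(\theta', P_k) \leq F_{k,t}(\theta) + \omega_t\big(2(1-t)s_k\big).
\end{align}
Taking the supremum over $\theta'$ yields $A_k(t) \leq \widetilde{A}_k(t) + \omega_t(2 s_k)$, using that $\omega_t$ is non-decreasing.

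The expected main obstacle is that we need $\omega_t(\delta) \to 0$ as $\delta \downarrow 0$. The assumption states only that $\omega_t$ is non-decreasing, but a modulus of continuity is implicitly meant to vanish at $0$; I will read the uniform-continuity assumption that way. Under that reading, $|\widetilde{A}_k(t) - A_k(t)| \leq \omega_t(2s_k) \to 0$ since $s_k \downarrow 0$.

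A secondary check is that all suprema are finite. This follows from the uniform boundedness assumption on $\psi_k$ and $b_k$ over $\Theta_k^{(t)}$.
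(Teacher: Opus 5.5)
Your proof is correct and follows the paper's argument essentially step for step: you bound $|\widetilde{A}_k(t)-A(t)|$ by the uniform convergence from \Cref{lemma:approx-constraint-limiting-1}, you get $\widetilde{A}_k(t)\le A_k(t)$ from the inclusion $\tau_{k,t}(\Theta^{(t)})\subset\Theta_k^{(t)}$, and you get the reverse bound by pairing each point of $\Theta_k^{(t)}$ with a nearby point of $\Theta^{(t)}$ through the Hausdorff bound and the uniform modulus $\omega_t$. The differences are cosmetic: the paper takes your $\phi$ to be $\Pi_\Theta(\eta)$ and uses nonexpansiveness of $\Pi_{\Theta_k}$ together with $\Pi_{\Theta_k}(\eta)=\eta$ to get distance $(1-t)s_k$ rather than your $2(1-t)s_k$, and the paper also tacitly uses $\omega_t(\delta)\to 0$ as $\delta\downarrow 0$, exactly the reading you flag.
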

This lemma shows that for each fixed $t$, the retracted dual value $A_k(t)$ converges to its limit $A(t)$. The only remaining issue is that the theorem requires a supremum over $t \in (0,1)$, and so we must justify interchanging the limit in $k$ with the supremum over $t$; that is, the middle ``?'' in~\eqref{eq:approx-constraint-proof-1}. 
\begin{lemma}
    \label{lemma:approx-constraint-unique-limit}
    With $A_k(t)$ and $A(t)$ as in~\Cref{lemma:approx-constraint-interchange-1}, we have 
    \begin{align}
        \lim_{k \to \infty} \sup_{t \in (0,1)} A_k(t) = \sup_{t \in (0,1)} \lim_{k \to \infty}  A_k(t)  =  \sup_{t \in (0,1)} A(t). 
    \end{align}
    
\end{lemma}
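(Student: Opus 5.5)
The second equality is simply \Cref{lemma:approx-constraint-interchange-2} applied pointwise in $t$. So the real task is the interchange $\lim_k \sup_t A_k(t) = \sup_t A(t)$. The plan is to prove the two one-sided inequalities separately. First we show that $\lim_k \sup_t A_k(t)$ exists, or at least we work with $\liminf$ and $\limsup$. The key structural fact is monotonicity in $t$. By \Cref{lemma:approx-constraint-interchange-1} applied to $h = H_k(\cdot, P_k)$, we have $\sup_{t} A_k(t) = \sup_{\theta \in \Theta_k} H_k(\theta, P_k) = I_k$. By \Cref{lemma:approx-contraint-primal-1}, $I_k \to I(P, g, \calC)$. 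Hence the left-hand limit exists and equals $I$, and it remains to show $\sup_t A(t) = I$.

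For the lower bound $\liminf_k \sup_t A_k(t) \geq \sup_t A(t)$, the argument is immediate. For each fixed $t$ we have $\sup_{t'} A_k(t') \geq A_k(t)$, and $A_k(t) \to A(t)$. Taking $\liminf_k$ and then $\sup_t$ gives the claim.

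For the upper bound we need $\limsup_k \sup_t A_k(t) \leq \sup_t A(t)$, which requires uniformity in $t$ near $0$. Here I would use concavity quantitatively, as in the proof of~\eqref{eq:interchange-2-proof-3}. For $\theta \in \Theta_k$, the point $t\theta_0 + (1-t)\theta$ lies in $\Theta_k^{(t)}$, and concavity gives
\begin{align}
H_k(t\theta_0 + (1-t)\theta, P_k) \geq (1-t) H_k(\theta, P_k) + t H_k(\theta_0, P_k).
\end{align}
Therefore $A_k(t) \geq (1-t) I_k + t H_k(\theta_0, P_k)$. Now $H_k(\theta_0,P_k)$ is bounded below uniformly in $k$ by a constant $-M$. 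This follows from the uniform boundedness in \Cref{assump:general-dual-function}, since $\theta_0 \in \Theta_k^{(t)}$ for every $t$. Also $I_k$ is bounded above, because $I_k \le I$ by DPI, assuming $I<\infty$; the case $I=\infty$ is handled separately, where the lower bound direction already suffices after a comparable argument. Together these give the uniform estimate
\begin{align}
I_k = \sup_{t'} A_k(t') \leq A_k(t) + t\,(I_k + M) \leq A_k(t) + t\,C
\end{align}
for all $k$ and all $t\in(0,1)$. Let $k \to \infty$ with $t$ fixed, using $A_k(t) \to A(t)$. This yields $\limsup_k I_k \le A(t) + tC \le \sup_{t'} A(t') + tC$. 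Finally let $t \downarrow 0$.

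The main obstacle is this uniform-in-$t$ control, i.e.\ ensuring that $H_k(\theta_0,P_k)$ and $I_k$ are bounded uniformly in $k$. If the uniform boundedness assumption is not directly applicable at $\theta_0$, I would instead use equicontinuity of $F_{k,t}$ together with $\tau_{k,t}(\theta_0)=\theta_0$, which holds since $\theta_0$ is interior to $\Theta_k$ for large $k$. This shows that $H_k(\theta_0,P_k)$ converges to the finite value $H^{(t)}(\theta_0,P)$ and hence is bounded.
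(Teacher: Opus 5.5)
Your proposal is correct and is essentially the paper's proof. The $\liminf$ direction goes pointwise in $t$. The $\limsup$ direction comes from concavity along $t\theta_0+(1-t)\theta$, combined with a uniform lower bound on $H_k(\theta_0,P_k)$ and a uniform upper bound on $I_k$; you take $I_k\le I$ from the DPI step, where the paper takes $I_k\le I+1$ for large $k$.

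Your aside on $I=\infty$ is not right as stated: in that case the lower-bound inequality is trivial, and what you would need is $\sup_t A(t)=\infty$. The case is vacuous anyway. For any fixed $t$ you have $(1-t)I_k - tM \le A_k(t)$, and $A_k(t)$ is bounded uniformly in $k$ by the uniform-boundedness assumption, so $I=\lim_k I_k<\infty$.
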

Combining~\Cref{lemma:approx-constraint-interchange-1},~\Cref{lemma:approx-constraint-interchange-2}, and~\Cref{lemma:approx-constraint-unique-limit}, we get 
\begin{align}
    I(P, g, \calC) = \sup_{t \in (0,1)} \sup_{\theta \in \Theta^{(t)}} H^{(t)}(\theta, P). 
\end{align}
This proves the limiting dual representation of a family of retracted domains $\{\Theta^{(t)}: t \in (0,1)\}$. To conclude~\Cref{theorem:general-approximate-constraint} from this, it remains to identify this collection of limiting functions with a single dual objective on all $\Theta$.  
    
Under the additional compatibility assumption that there exists a concave $H(\cdot, P): \Theta \to \R$ such that for all $t \in (0,1)$, we have $H(\theta, P) = H^{(t)}(\theta, P)$ for all $\theta \in \Theta^{(t)}$, we can apply~\Cref{lemma:approx-constraint-interchange-1} again to remove the parameter $t$ and recover the full dual representation 
    \begin{align}
        \sup_{t\in (0, 1)} \sup_{\theta \in \Theta^{(t)}} H^{(t)} (\theta, P) \; = \; \sup_{\theta \in \Theta} H(\theta, P), \qtext{which implies}  I(P, g, \calC) = \sup_{\theta \in \Theta} H(\theta, P). 
    \end{align}
This concludes the proof of~\Cref{theorem:general-approximate-constraint}. \hfill \qedsymbol  

\subsubsection{Proof of~\Cref{lemma:approx-contraint-primal-1}}
\label{proof:approx-constraint-primal-1}
Let us recall the terms 
\begin{align}
    \calQ \coloneqq \{Q \in \calP(\calX): \mathbb{E}_Q[g(X)] \in \calC\}, \quad \calQ_k \coloneqq \{Q \in \calP(V_k): \mathbb{E}_Q[g(X)] \in \calC_k \}, 
\end{align}
where $\calC_k = \calC + B_\infty(\boldsymbol{0}, \eta_k)$. We start with the observation that if $Q \in \calQ$, then $Q_k = \calQ \calK_k$ lies in $\calQ_k$. This is simply due to the fact that $\calK_k(\cdot \mid X)$ is supported on $V_k$, and 
\begin{align}
    \left\lVert \mathbb{E}_{Q\calK_k}[g(X)] - \mathbb{E}_Q[g(X)] \right\rVert_\infty & = \left\lVert \mathbb{E}[ \mathbb{E}[g(X_k) \mid X]] - \mathbb{E}[g(X)] \right\rVert_\infty  \leq  \mathbb{E} \big[ \left\lVert  g(X_k)  - g(X) \right\rVert_\infty  \big]\leq \omega_g(\Delta_k) \leq \eta_k. 
\end{align}
Thus, $\mathbb{E}_{Q\calK_k}[g(X)] \in \calC_k$ as required, since it is at most $\eta_k$ away from  $\calC$ in $\|\cdot\|_\infty$ norm. 
This fact, along with the data-processing inequality~(DPI) assumption on $D(\cdot, \cdot)$ gives us one direction of the proof, since for any $Q \in \calQ$ and  $Q_k = \calK_k Q \in \calQ_k$
\begin{align}
    I_k = \inf_{Q' \in \calQ_k} D(P_k, Q') \leq D(P_k, Q_k) = D(\calK_k P, \calK_k Q) \leq D(P, Q), 
\end{align}
where the last inequality follows from the DPI condition of~\Cref{assump:general-divergence}. This implies that 
\begin{align}
    \limsup_{k \to \infty} I_k \leq \inf_{Q \in \calQ} D(P, Q) = I(P, g, \calC). 
    \label{eq:approx-constraint-proof-7}
\end{align}
For the other direction, we will rely on the lower semicontinuity of the divergence $D$. In particular, for each $k \geq 1$, and pick a $Q_k \in \calQ_k$, such that $I_k \geq D(P_k, Q_k) - \epsilon_k$ for some sequence $\{\epsilon_k\}_{k \geq 1} \downarrow 0$.
Now, let us select a subsequence $\{k^\ell: \ell \geq 1\}$ of the natural numbers that satisfy $\liminf_{k \to \infty} I_k = \lim_{\ell \to \infty} I_{k^\ell}$. 
Because $\calX$ is compact, the collection $\{Q_{k^\ell}: \ell \geq 1\}$ is tight, and hence it has a weakly convergent subsequence; that is, there exists a subsequence $\{k^\ell_j\}_{j \geq 1}$ such that $Q_{k^\ell_j} \Longrightarrow Q_\infty \in \calP(\calX)$. 
Also, because of the uniform approximation property of $\calK_k$, it follows that $P_k = P \calK_k$ also converges weakly to $P$ by the same bounded-continuous test function argument used earlier in~\Cref{lemma:convergence-of-KL}. We will drop the $\ell$-superscript from $k_j^\ell$, and simply refer to it is $k_j$ for the rest of the proof to simplify notation. 

Because the constraint function $g:\calX \to\R^J$ is continuous on the compact domain $\calX$, it is also bounded and continuous coordinate-wise. Then, the weak convergence of $Q_{k_j}$ also implies the convergence of the bounded functions 
\begin{align}
    \mathbb{E}_{Q_{k_j}}[g(X)] \; \longrightarrow \; \mathbb{E}_{Q_\infty}[g(X)] \quad \implies \quad \|\mathbb{E}_{Q_{k_j}}[g(X)] - \mathbb{E}_{Q_{\infty}}[g(X)] \|_\infty  \; \longrightarrow \; 0. 
\end{align}
Now, observe that 
\begin{align}
    \|\mathbb{E}_{Q_{\infty}}[g(X)] - \calC \|_\infty  &\leq \|\mathbb{E}_{Q_{k_j}}[g(X)] - \calC \|_\infty  + \|\mathbb{E}_{Q_{k_j}}[g(X)] - \mathbb{E}_{Q_{\infty}}[g(X)] \|_\infty \\
    & \leq \eta_{k_j} + + \|\mathbb{E}_{Q_{k_j}}[g(X)] - \mathbb{E}_{Q_{\infty}}[g(X)] \|_\infty.  
\end{align}
Together, the previous two displays imply that $\|\mathbb{E}_{Q_{\infty}}[g(X)] - \calC \|_\infty \to 0$ with $j \to \infty$. Since we have assumed that the set $\calC$ is closed, this means that $\mathbb{E}_{Q_{\infty}}[g(X)] \in \calC$. Or in other words, the limiting distribution $Q_\infty$ is feasible for the original problem, and it lies in $\calQ$. 

It now remains to exploit the weak lower semicontinuity of the divergence $D$, to get 
\begin{align}
    \liminf_{j \to \infty} D(P_{k_j}, Q_{k_j}) \; \geq \; D\lrp{ \liminf_{j \to \infty} P_{k_j}, \liminf_{j \to \infty} Q_{k_j} } = D(P, Q_\infty) \geq \inf_{Q \in \calQ} D(P, Q) = I(P, g, \calC). 
\end{align}
Since $\{k_j: j \geq 1\}$ is a subsequence (of a subsequence) that achieves the $\liminf$, we have the following: 
\begin{align}
    \liminf_{k \to \infty} I_k = \lim_{j \to \infty} I_{k_j} \geq \liminf_{j \to \infty} \lrp{D(P_{k_j}, Q_{k_j}) - \epsilon_{k_j}} \geq \lim_{j \to \infty} \lrp{I(P, g, \calC) - \epsilon_{k_j}} = I(P, g, \calC). \label{eq:approx-constraint-proof-8}
\end{align}
Taken together,~\eqref{eq:approx-constraint-proof-7} and~\eqref{eq:approx-constraint-proof-8} give us the required conclusion that $\lim_{k \to \infty} I_k = I$. 

\subsubsection{Proof of~\Cref{lemma:approx-constraint-limiting-1}}
\label{proof:approx-constraint-limiting-1}
We begin with the simple observation that functions $\tau_{k,t}$ introduced in~\eqref{eq:tau-k-t-def} in~\Cref{assump:general-dual-limits} is $1$-Lipschitz for all $k,t$. This is a direct consequence of the fact that the Euclidean projections are non-expansive, and hence 
\begin{align}
    \|\tau_{k,t}(\theta) - \tau_{k,t}(\theta')\|_2  &= (1-t) \left\lVert  \Pi_{\Theta_k}\lrp{\frac{\theta - t \theta_0}{1-t}} - \Pi_{\Theta_k}\lrp{\frac{\theta' - t \theta_0}{1-t}} \right\rVert_2 \\
    & \leq (1-t) \left \lVert \frac{\theta - \theta'}{1-t} \right\rVert_2 = \|\theta - \theta'\|_2. \label{eq:lipschitz-property-of-identification-map}
\end{align}

Now, for any $t \in (0,1)$, the collection of functions $\{F_{k,t}: k \geq 1\}$ satisfies the following two conditions: 
\begin{itemize}
    \item It is uniformly bounded on $\Theta^{(t)}$; that is, 
    \begin{align}
        \sup_{k}\, \sup_{\theta \in \Theta^{(t)}} |F_{k,t}(\theta)| &= \sup_{k}\, \sup_{\theta \in \Theta^{(t)}}\left\lvert \mathbb{E}_{P_k}[\psi_k(X, \tau_{k,t}(\theta))]  + b_k(\tau_{k,t}(\theta))\right\rvert  \\ 
        & \leq  \sup_{k}\, \sup_{\theta' \in \Theta_k^{(t)}}\left\lvert \mathbb{E}_{P_k}[\psi_k(X, \theta')]  + b_k(\theta')\right\rvert  < \infty,
    \end{align}
    by the uniform boundedness condition of~\Cref{assump:general-dual-function}. 
    \item This collection is also equicontinuous, since 
    \begin{align}
        \sup_{k} |F_{k,t}(\theta)  - F_{k,t}(\theta')| & = \sup_{k} |H_{k}(\tau_{k,t}(\theta))  - H_{k}(\tau_{k,t}(\theta'))| \leq \omega_t(\|\tau_{k,t}(\theta) - \tau_{k,t}(\theta')\|_2) 
         \leq \omega_t(\|\theta -\theta'\|_2), 
    \end{align}
    where the last inequality uses the 1-Lipschitz  property of $\tau_{k,t}$ established in~\eqref{eq:lipschitz-property-of-identification-map}. 
\end{itemize}
Thus, we know that for any $t \in (0,1)$, the dual set $\Theta^{(t)}$ is compact~(\Cref{assump:general-dual-limits}), and the collection of functions $\{F_{k,t}: k \geq 1\}$ on $\Theta^{(t)}$ is uniformly bounded and equicontinuous. Hence, by~Arzel\'a-Ascoli~(\Cref{fact:arzela-ascoli}), we know that there exists a subsequence $\{k_j: j \geq 1\}$, and a continuous function $H^{(t)}(\cdot, P)$, on $\Theta^{(t)}$, such that 
\begin{align}
    \sup_{\theta \in \Theta^{(t)}} \left\lvert F_{k_j,t}(\theta) - H^{(t)}(\theta, P) \right\rvert \; \stackrel{j \to \infty}{\longrightarrow} \; 0. 
\end{align}
This shows a subsequential convergence of the collection of functions. But the assumption that there exists a dense subset of $\Theta^{(t)}$ on which $F_{k,t}$ is convergent implies that every such subsequential limit agrees on a dense set $\calD_t$, and hence everywhere on $\Theta^{(t)}$ by continuity. Therefore, the whole sequence converges uniformly to a unique continuous function $H^{(t)}(\cdot, P)$.  This concludes the proof.  

\subsubsection{Proof of~\Cref{lemma:approx-constraint-interchange-1}}
\label{proof:approx-constraint-interchange-1}

Let $M$ denote the value $\sup_{\theta \in \widetilde{\Theta}} h(\theta) \in (-\infty, \infty]$. 

\paragraph{The ``$\leq$'' direction.} For any $t \in (0,1)$ and $\theta' \in \widetilde{\Theta}$,  let $\theta = \theta_0 t + (1-t) \theta'$ denote an arbitrary element of $\widetilde{\Theta}^{(t)}$. Since  $\widetilde{\Theta}$ is convex, it follows immediately that $\widetilde{\Theta}^{(t)} \subset \widetilde{\Theta}$, hence we have 
\begin{align}
    \sup_{\theta \in \widetilde{\Theta}^{(t)}} h(\theta) \leq  \sup_{\theta \in \widetilde{\Theta}} h(\theta) = M. 
\end{align}
This inequality is maintained on taking a supremum over all $t \in (0,1)$, proving the required inequality. 

\paragraph{The ``$\geq$'' direction.} This direction crucially relies on the concavity of the function $h$. We break the proof into two cases. First we consider the case of $M = \infty$. This means that for any $y \in \R$, there exists a $\theta_y \in \widetilde{\Theta}$, such that $h(\theta_y) \geq y$. Now, consider any $t \in (0,1)$, and observe that $\theta_{y,t} = t \theta_0 + (1-t) \theta_y$ lies in $\widetilde{\Theta}^{(t)}$, and 
\begin{align}
    h(\theta_{y,t}) = h(t \theta_0 + (1-t) \theta_y) \geq t h(\theta_0) + (1-t) h(\theta_y) \geq y - t \lrp{|h(\theta_0)| + |y|}
\end{align}
by an application of Jensen's inequality. Since $t \in (0,1)$ is arbitrary and for any $\epsilon > 0$, we can select $t_\epsilon < \epsilon/\lrp{|h(\theta_0)| + |y|}$, to ensure that $\sup_{\theta \in \widetilde{\Theta}^{(t_\epsilon)}} h(\theta)  \geq y - \epsilon$. As a result, for any $y \in \R$, we have 
 $\sup_{t \in (0,1)}   \sup_{\theta \in \widetilde{\Theta}^{(t)}} h(\theta)  \geq  y - \epsilon$. Since $\epsilon>0$ and $y \in \R$ were arbitrary, this implies that $\sup_{t \in (0,1)}   \sup_{\theta \in \widetilde{\Theta}^{(t)}} h(\theta) = \infty$ as required. 

 A similar argument works for the case of finite $M$ as well. In particular, for any $\epsilon > 0$, there exists a $\theta_\epsilon \in \widetilde{\Theta}$ such that $h(\theta_\epsilon) > M - \epsilon$. Now, for any $t \in (0, 1)$, we can define $\theta_{t,\epsilon} = t \theta_0 + (1-t) \theta_\epsilon$, such that 
 \begin{align}
     h(\theta_{t,\epsilon}) = h(t \theta_0 + (1-t) \theta_\epsilon) \geq t h(\theta_0) + (1-t) h(\theta_\epsilon) \geq M - \epsilon - t (M + |h(\theta_0)|). 
 \end{align}
For all $t < \epsilon/(M + |h(\theta_0)|)$, we then get that $h(\theta_{t,\epsilon}) \geq M - 2\epsilon$, which implies that 
\begin{align}
    \sup_{t\in (0,1)} \sup_{\theta \in \widetilde{\Theta}^{(t)}} h(\theta) \geq M - 2\epsilon. 
\end{align}
Since $\epsilon >0$ is arbitrary, the result follows. 

\subsubsection{Proof of~\Cref{lemma:approx-constraint-interchange-2}}
\label{proof:approx-constraint-interchange-2}
The goal of this lemma is to essentially make precise the  relation 
\begin{align}
  \underbrace{\sup_{\theta \in \Theta_k^{(t)} } H_k^{(t)}(\theta, P_k) }_{\coloneqq A_k(t)} \; \approx \;
  \underbrace{\sup_{\theta \in \Theta^{(t)} } F_{k,t}^{(t)}(\theta) }_{\coloneqq \widetilde{A}_k(t)} \; \approx \;
  \underbrace{\sup_{\theta \in \Theta^{(t)} } H^{(t)}(\theta, P) }_{\coloneqq A(t)}. \label{eq:approx-constraint-proof-2}
\end{align}
The first statement is due to the following observation:
\begin{align}
    |A(t) - \widetilde{A}_k(t)| & = \left\rvert \sup_{\theta \in \Theta_k^{(t)}} H(\theta, P) \; - \; \sup_{\theta \in \Theta^{(t)}} F_{k,t}(\theta)  \right\rvert 
     \leq \sup_{\theta \in \Theta^{(t)}} \left \lvert H^{(t)}(\theta, P) - F_{k,t}(\theta) \right\rvert,\label{eq:approx-constraint-proof-6}
\end{align}
which converges to $0$ by~\Cref{lemma:approx-constraint-limiting-1}. This establishes the $\widetilde{A}_k(t) \approx A(t)$ part of~\eqref{eq:approx-constraint-proof-2}.

The next step is to relate $\widetilde{A}_k(t)$ and $A_k(t)$. We start with the simple observation that $\{\tau_{k,t}(\theta): \theta \in \Theta^{(t)}\} \subset \Theta^{(t)}_k$, which means that 
\begin{align}
    \widetilde{A}_k(t) = \sup_{\theta \in \Theta^{(t)}} H_k(\tau_{k,t}(\theta), P_k) \leq  \sup_{\theta \in \Theta_k^{(t)}} H_k(\theta, P_k)  = A_k(t) \quad \implies A_k(t) - \widetilde{A}_k(t) \geq 0.  \label{eq:approx-constraint-proof-3}
\end{align}
This inequality is valid for all $k \geq 1$ and $t \in (0,1)$. 

Next, consider an arbitrary $\epsilon>0$, and let $\vartheta \in \Theta_k^{(t)}$ be a point such that $H_k(\vartheta, P_k) \geq A_k(t) - \epsilon$.  Since $\vartheta$ is an element of $\Theta_k^{(t)}$, by definition there must exist a $\eta \in \Theta_k$, such that $\vartheta = \theta_0 t + (1-t) \eta$. Next, we associate a point $\theta \in \Theta^{(t)}$ to the $\epsilon$-suboptimal point $\vartheta \in \Theta^{(t)}_k$ defined as 
\begin{align}
    \theta &= t \theta_0 + (1-t) \Pi_\Theta \lrp{\frac{\vartheta - t\theta_0}{1-t}} = t \theta_0 + (1-t) \Pi_\Theta(\eta). 
\end{align}
Here, $\Pi_{\Theta}$ denotes an $\ell_2$-projection  onto $\Theta$. 
Note that both $\vartheta = t\theta_0 + (1-t)\eta \in \Theta_k^{(t)}$ and $\theta = t \theta_0 + (1-t) \Pi_{\Theta}(\eta) \in \Theta^{(t)}$ are defined using the same point, $\eta \in \Theta_k$. Now, let us compare $\vartheta \in \Theta^{(t)}_k$ with $\tau_{k,t}(\theta) \in \Theta^{(t)}_k$, where 
\begin{align}
    \tau_{k,t}(\theta) = t \theta_0 + (1-t) \Pi_{\Theta_k}\lrp{\frac{\theta - t \theta_0}{1-t}} = t \theta_0 + (1-t) \Pi_{\Theta_k}\lrp{\Pi_\Theta(\eta)}
\end{align}
This means that for $\eta \in \Theta_k$, $\vartheta = t\theta_0 + (1-t)\eta \in \Theta^{(t)}_k$, and $\theta = t \theta_0 + (1-t) \Pi_{\Theta}(\eta) \in \Theta^{(t)}$,  we have 
\begin{align}
    \|\tau_{k,t}(\theta) - \vartheta\|_2   = (1-t) \|\Pi_{\Theta_k}(\Pi_{\Theta}(\eta)) - \eta\|_2 \leq (1-t)\|\Pi_{\Theta}(\eta) - \eta\|_2 \leq (1-t) s_k. 
\end{align}
The first inequality above relies on the fact that $\eta \in \Theta_k$, and for any $\zvec$, $\|\Pi_{\Theta_k}(\zvec) - \eta\|_2 \leq \|\zvec - \eta\|_2$. The last inequality is by~\Cref{assump:general-dual-limits}, where $\{s_k\}_{k \geq 1}$ denotes a sequence converging to $0$. Thus, we can conclude
\begin{align}
    |H_k(\tau_{k,t}(\theta, P_k) - H_k(\vartheta, P_k)| \;\leq\; \omega_t\lrp{(1-t) s_k} \quad \implies \quad H_k(\vartheta, P_k) \leq  F_{k,t}(\theta) + \omega_t\lrp{s_k}, 
\end{align}
since $\omega_t(\cdot)$ is non-decreasing. 
Now, recall that $\vartheta \in \Theta_k^{(t)}$ was an $\epsilon$-sub-optimal point, which means that 
\begin{align}
    A_k(t) - \epsilon \;\leq\; F_{k,t}(\theta) + \omega_t(s_k) \;\leq \;  \widetilde{A}_k(t) + \omega_t(s_k). \label{eq:approx-constaint-proof-4}
\end{align}
Since $\epsilon>0$ was arbitrary, it follows that $A_k(t) \leq \widetilde{A}_k(t) + \omega_t(s_k)$, and  together with~\eqref{eq:approx-constraint-proof-3}, this implies that  
\begin{align}
    0 \leq A_k(t) - \widetilde{A}_k(t) \leq \omega_t(s_k) \;   \quad \implies \quad 
    \lim_{k \to \infty} \left \lvert A_k(t) - \widetilde{A}_k(t) \right \rvert = 0. \label{eq:approx-constraint-proof-5}
\end{align}
Now, combining the above statement with~\eqref{eq:approx-constraint-proof-6}, we get the required 
\begin{align}
    \lim_{k \to \infty} \left\lvert A_k(t) - A(t) \right\rvert \leq \lim_{k \to \infty} \left( \left\lvert A_k(t) - \widetilde{A}_k(t) \right\rvert  +  \left\lvert \widetilde{A}_k(t) - A(t) \right\rvert  \right) = 0. 
\end{align}
This completes the proof.

\subsubsection{Proof of~\Cref{lemma:approx-constraint-unique-limit}}
\label{proof:approx-constraint-unique-limit}

To start off the proof, we introduce the notation 
\begin{align}
    S_k \coloneqq \sup_{t \in (0,1)} A_k(t), \qtext{and} S \coloneqq \sup_{t \in (0,1)} A(t). 
\end{align}
Our goal is to show that $S_k \to S$, which implies that 
\begin{align}
    \lim_{k \to \infty}  \sup_{t \in (0,1)} A_k(t) = \sup_{t \in (0,1)} A(t) = \sup_{t \in (0,1)} \lim_{k \to \infty} A_k(t), 
\end{align}
justifying the interchange of $\lim_{k \to \infty}$ and $\sup_{t \in (0,1)}$ that we need to prove~\Cref{theorem:general-approximate-constraint}. 

\paragraph{The proof of $\liminf_{k \to \infty} S_k \geq S$.} This is the easy direction of the proof. Fix any $t \in (0,1)$, and for each $k \geq 1$, we have 
\begin{align}
    S_k = \sup_{u \in (0,1)} A_k(u)  \geq A_k(t) \quad \implies \quad \liminf_{k \to \infty} S_k \geq \liminf_{k \to \infty} A_k(t) = A(t),  
\end{align}
where the last equality uses~\Cref{lemma:approx-constraint-interchange-2}. Taking a supremum over $t$ gives the required inequality. 

\paragraph{The proof of $\limsup_{k \to \infty} S_k \leq S$.} This is the nontrivial part of the proof, and uses the concavity of the dual objective for each $k$, and the uniform boundedness assumption. In particular, we assume that there exist constants $L>-\infty$ and $U < \infty$, such that for all sufficiently large $k$, we have 
\begin{align}
    S_k \leq U, \qtext{and} H_k(\theta_0, P_k) \geq L. 
\end{align}
The first inequality is justified by the fact that due to~\Cref{lemma:approx-constraint-interchange-1}, $S_k = \sup_{\theta \in \Theta_k} H_k(\theta, P_k) = I_k$, which by~\Cref{lemma:approx-contraint-primal-1} converges to $I(P, \calC, g) < \infty$. Hence  we may set $U = I(P, \calC, g) + 1$, and there must exist as finite $k'$, such that for all $k \geq k'$, we have $I_k \leq U$. Also, the existence of an $L> -\infty$ is a consequence of the stronger uniform boundedness assumption on $H_k$ in~\Cref{assump:general-dual-function}. 

Now, let us fix a $t\in (0,1)$ and an $\epsilon >0$, and select a $\vartheta_{k,\epsilon} \in \Theta_k$, such that $H_k(\vartheta_{k,\epsilon}, P_k) \geq S_k - \epsilon$. On shrinking this parameter, we get $\vartheta^{(t)}_{k, \epsilon} = t \theta_0 + (1-t) \vartheta_{k,\epsilon}$ which is a member of $\Theta^{(t)}_k$. Since $H_k(\cdot, P_k)$ is concave by assumption, we have 
\begin{align}
    A_k(t) = \sup_{\vartheta \in \Theta_k^{(t)}} H_k(\vartheta, P_k) \geq H_k(\vartheta^{(t)}_{k,\epsilon}, P_k) \geq t H_k(\theta_0) + (1-t) H_k(\vartheta_{k, \epsilon}, P_k) \geq t L + (1-t) (S_k - \epsilon). 
\end{align}
On simplifying this implies that 
\begin{align}
    (1-t)(S_k- \epsilon)   \leq A_k(t) - t L \quad \implies \quad S_k - A_k(t) \leq \frac{t}{1-t} \lrp{A_k(t)- L} + \epsilon.  
\end{align}
Since $\epsilon>0$ was arbitrary, and using  $A_k(t) \leq \sup_{t} A_k(t) = S_k \leq U$, and $H_k(\theta_0, P_k) \geq L$, we get
\begin{align}
    S_k - A_k(t) \leq \frac{t(U-L)}{1-t} \quad \implies \limsup_{k \to \infty} S_k \leq \lim_{k \to \infty} A_k(t) + \frac{t (U-L)}{1-t} = A(t) + \frac{t(U-L)}{1-t},
\end{align}
where we used the fact that $A_k(t) \to A(t)$, and that $U$ and $L$ are independent of $k$. Next, we bound $A(t)$ with a supremum over all $t' \in (0,1)$, to get 
\begin{align}
    \limsup_{k \to \infty} S_k \; \leq \; A(t) + \frac{t(U-L)}{1-t} \leq \lrp{\sup_{t' \in (0,1)} A(t')} + \frac{t(U-L)}{1-t} = I(P, g, \calC) + \frac{t(U-L)}{1-t}. 
\end{align}
So this inequality is true for all $t \in (0,1)$, and we get the required conclusion by taking $t \downarrow 0$ since $U, L$ are independent of $t$. 

\paragraph{Unique Limit.} Finally, if there exists a unique function $H(\cdot, P): \Theta \to \R \cup \{\pm \infty\}$, such that it agrees with $H^{(t)}(\cdot, P)$ on $\Theta^{(t)}$ for all $t$, then by~\Cref{lemma:approx-constraint-interchange-1}, we get that 
\begin{align}
    \sup_{t \in (0,1)} \sup_{\theta \in \Theta^{(t)}} H^{(t)}(\theta, P) =  \sup_{t \in (0,1)} \sup_{\theta \in \Theta^{(t)}} H(\theta, P) = \sup_{\theta \in \Theta} H(\theta, P), 
\end{align}
as required. This completes the proof of~\Cref{lemma:approx-constraint-unique-limit}.

\subsection{Proof of~\Cref{theorem:general-constraint-KL}}
\label{proof:general-constraint-KL}
To prove this result, we will verify the assumptions required by~\Cref{theorem:general-approximate-constraint}. In particular, note that relative entropy satisfies DPI and weak lower-semicontinuity~(\Cref{assump:general-divergence}), the constraint function $g$ and the set $\calC$ satisfy~\Cref{assump:general-constraint},  and the discretization channels are assumed to satisfy~\Cref{assump:general-discretization}. The nontrivial steps lie in verifying~\Cref{assump:general-dual-function} and~\Cref{assump:general-dual-limits}, which we will break down into four steps. First, for each discretized problem $I_k$, we derive the corresponding dual using finite-dimensional  convex duality. Second, using the existence of an interior point in the constraint set, we show the crucial fact that the dual maximizers can be restricted to compact convex sets uniformly in $k$. Third, we verify the conditions required by~\Cref{assump:general-dual-function} on these restricted domains, and finally, we verify~\Cref{assump:general-dual-limits} by proving the convergence of the truncated dual domains and of the dual objectives, which allows us to invoke~\Cref{theorem:general-approximate-constraint} to complete the proof. 

Throughout the proof, we will use $\omega_g$ to denote the modulus of continuity of $g$
\begin{align}
\omega_g(\delta)\coloneqq \sup\{\|g(x)-g(x')\|_\infty:\ \|x-x'\|_\infty\le \delta\}, \qtext{satisfying} \lim_{\delta \to 0} \omega_g(\delta)=0. 
\end{align}
Since we have assumed that $g$ is Lipschitz~(say with constant $L_g$), we can simply take $\omega_g(\delta) \leq L_g  \delta$. 
Let $\mathcal M=\mathrm{conv}(g(\mathcal X))$, and observe that $\mathcal M$ is compact and convex, and by assumption we know that  there exists an $\mvec^\circ\in \calC \cap \mathring{\calM}$.
Next, recall that for each $k\ge 1$, $V_k\subset \calX$ is a $\Delta_k$-cover of $\calX$ under $\|\cdot\|_\infty$,
with $\Delta_k\downarrow 0$, and set $\eta_k\coloneqq \omega_g(\Delta_k)$ and define the inflated constraint sets
$\calC_k\coloneqq \calC+B_\infty(0,\eta_k)$. Let $P_k \coloneqq P\mathcal K_k$, so that $P_k$ is supported on $V_k$. Moreover, by uniform continuity of bounded continuous test functions on compact $\mathcal X$ and the fact that $\calK_k(\cdot\mid \xvec)$ is supported on $B_\infty(\xvec,\Delta_k)$, we have $P_k\Longrightarrow P$. Finally, introduce the term $G_\infty\coloneqq \sup_{\xvec\in\mathcal X}\|g(\xvec)\|_\infty<\infty$.

The starting point of the proof is to obtain the dual for the discretized version of the problem. For any $k\ge 1$, denote the
minimum divergence value by
\begin{align}
I_k \coloneqq  \inf\Big\{\mathrm{KL}(P_k,Q):\ Q\in\mathcal P(V_k),\ \mathbb E_Q[g(X)]\in \calC_k\Big\}.
\end{align}
Since there exists an $\mvec^\circ \in \calC \cap \mathring{\calM}$, by the compactness of $g(\calX)$ we can show the existence of $Q^\circ \in \calP(\calX)$ such that $\mathbb{E}_{Q^\circ}[g(X)] = \mvec^\circ$. On discretizing $Q^\circ$ with $\calK_k$, we observe that $\|\mathbb{E}_{Q^\circ \calK_k}[g(X)] - \mvec^\circ\|_\infty \leq \eta_k \coloneqq \omega_g(\Delta_k) \leq L_g \Delta_k$, it follows that  $\mathbb{E}_{Q^\circ \calK_k}[g(X)]$ lies in the interior of $\calC_k$. This also implies that for small enough $\epsilon$, the distribution $Q_{k, \epsilon} = (1-\epsilon) Q^\circ \calK_k + \epsilon P \calK_k$ has strictly positive mass on the support of $P_k$ and $\mathbb{E}_{Q_{k,\epsilon}}[g(X)]$ also lies in the interior of $\calC_k$. This $Q_{k,\epsilon}$ is a strictly feasible point for the discretized primal problem, and hence strong duality and dual attainment hold  by Slater's criterion.

Define the restricted dual domain
\begin{align}
\widetilde\Theta_k
\coloneqq \Big\{(\boldlambda,\gamma)\in\mathbb R^J\times\mathbb R:\ \gamma-\langle \boldlambda,g(\vvec)\rangle>0\ \ \forall \vvec\in V_k\Big\},
\end{align}
and define the support function $\sigma_{\calC_k}(\boldlambda)\coloneqq \sup_{\cvec\in \calC_k}\langle \boldlambda,\cvec\rangle$, and note  that $\inf_{\cvec\in \calC_k}\langle \boldlambda,\cvec \rangle \;=\; -\,\sigma_{\calC_k}(-\boldlambda)$. Repeating the argument of~\Cref{theorem:finite-support}, we obtain the following dual representation of $I_k$:
\begin{align}
I_k = \sup_{(\boldlambda,\gamma)\in \widetilde\Theta_k} \Big\{\mathbb E_{P_k}\big[\log(\gamma-\langle\boldlambda,g(X)\rangle)\big] +1-\gamma +\inf_{\cvec \in \calC_k}\langle \boldlambda,\cvec\rangle \Big\}.
\end{align}
Moreover, the supremum is attained by some maximizer $(\boldlambda_k^\star,\gamma_k^\star)\in \widetilde\Theta_k$. Note also that
$(\boldsymbol{0},1)\in \widetilde\Theta_k$ and achieves value $0$, so $I_k\ge 0$, and every maximizer satisfies
\begin{align}
\mathbb E_{P_k}\big[\log(\gamma_k^\star-\langle\boldlambda_k^\star,g(X)\rangle)\big] +1-\gamma_k^\star +\inf_{\cvec\in \calC_k}\langle \boldlambda_k^\star,\cvec \rangle \;\ge\;0. \label{eq:opt-nonneg}
\end{align}

We now state a key observation that allows us to restrict attention to compact subsets of the dual domain without losing optimality.
It is this step that utilizes the non-empty intersection of $\calC$ and the  interior of $\calM = \mathrm{conv}(g(\mathcal X))$.

\begin{lemma}\label{lemma:KL-general-limit-1}
There exists a constant $R<\infty$ such that for all $k$ large enough, there exists a pair of dual optimizers
$(\boldlambda_k^\star,\gamma_k^\star)$ satisfying $\|(\boldlambda_k^\star,\gamma_k^\star)\|_2\le R$.
\end{lemma}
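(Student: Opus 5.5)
We bound $\|\boldlambda_k^\star\|_2$ first and then $\gamma_k^\star$, using the interior point $\mvec^\circ\in\calC\cap\mathring{\calM}$. Fix $r>0$ with $B_2(\mvec^\circ,2r)\subset\calM$. Since $\Delta_k\to 0$, for all $k$ large enough every $\yvec\in B_2(\mvec^\circ,r)$ is a convex combination of points $g(\vvec)$ with $\vvec\in V_k$. To see this, apply Carath\'eodory (\Cref{fact:caratheodary}) to points of $B_2(\mvec^\circ,2r)\subset\operatorname{conv}(g(\calX))$, and replace each $\xvec$ by a nearby $\vvec\in V_k$; this perturbs $g$ by at most $\omega_g(\Delta_k)\to0$. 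A standard "shrinking-ball" argument then keeps the radius-$r$ ball inside $\operatorname{conv}(g(V_k))$ once $\omega_g(\Delta_k)$ is small relative to $r$. Dual feasibility gives $\gamma-\langle\boldlambda,g(\vvec)\rangle>0$ for all $\vvec\in V_k$. Hence, taking $\yvec=\mvec^\circ+r\boldlambda/\|\boldlambda\|_2$, we get
\begin{align}
\gamma-\langle\boldlambda,\mvec^\circ\rangle\;\ge\; r\|\boldlambda\|_2 .
\end{align}

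Next we combine this with the optimality bound \eqref{eq:opt-nonneg}. Since $\mvec^\circ\in\calC\subset\calC_k$, we have $\inf_{\cvec\in\calC_k}\langle\boldlambda,\cvec\rangle\le\langle\boldlambda,\mvec^\circ\rangle$. Also $\gamma-\langle\boldlambda,g(\xvec)\rangle\le \gamma+\sqrt{J}G_\infty\|\boldlambda\|_2\eqcolon M$. Writing $c=\gamma-\langle\boldlambda,\mvec^\circ\rangle>0$, the bound \eqref{eq:opt-nonneg} gives
\begin{align}
0\le \log\big(c+\langle\boldlambda,\mvec^\circ\rangle+\sqrt J G_\infty\|\boldlambda\|_2\big)+1-c
\le \log\big(c+C_1\|\boldlambda\|_2\big)+1-c,
\end{align}
with $C_1=\|\mvec^\circ\|_2+\sqrt J G_\infty$. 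Using $\|\boldlambda\|_2\le c/r$, this becomes $c-1\le\log(c(1+C_1/r))$. Since the left side is linear and the right side logarithmic in $c$, this forces $c\le c_0$ for a constant $c_0$ depending only on $r,C_1$. Consequently $\|\boldlambda_k^\star\|_2\le c_0/r$. Then $|\gamma_k^\star|\le c_0+\|\mvec^\circ\|_2c_0/r$, and we may take $R$ to be the resulting constant, independent of $k$.

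Attainment of a maximizer was already noted from Slater's condition, so no further work is needed there. Every maximizer obeys the bound, which proves the lemma for all large $k$.

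The main obstacle is the first step: showing that the ball $B_2(\mvec^\circ,r)$ lies in $\operatorname{conv}(g(V_k))$ uniformly in large $k$, since this is exactly where the interior assumption and $\Delta_k\to0$ enter. I would argue it by contradiction. If some $\yvec$ in the ball were outside $\operatorname{conv}(g(V_k))$, a separating unit vector $\uvec$ would give $\langle\uvec,g(\vvec)\rangle\le\langle\uvec,\yvec\rangle$ for all $\vvec\in V_k$. Therefore $\sup_{\xvec}\langle\uvec,g(\xvec)\rangle\le\langle\uvec,\yvec\rangle+\sqrt J\omega_g(\Delta_k)$. But $\mvec^\circ+2r\uvec\in\calM$ yields $\sup_{\xvec}\langle\uvec,g(\xvec)\rangle\ge\langle\uvec,\mvec^\circ\rangle+2r\ge\langle\uvec,\yvec\rangle+r$. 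These are incompatible once $\sqrt J\omega_g(\Delta_k)<r$.
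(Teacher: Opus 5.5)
Your proof is correct, and it runs on the same mechanism as the paper's. The interior point $\mvec^\circ$ gives a margin that survives discretization once $\omega_g(\Delta_k)$ is small. The nonnegativity \eqref{eq:opt-nonneg} of the optimal value then pits a linear penalty against a logarithm.

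The packaging is different, though. The paper uses the $\ell_1$ support-function margin $\nu$ together with the estimate $M_k(\uvec)\ge M(\uvec)-\eta_k$. It parametrizes by the slack $\gamma'=\gamma-\|\boldlambda\|_1 M_k(\uvec)$ at the extreme grid point in direction $\boldlambda$, and derives the ray bound $\log(1+2G_\infty r)+1-\nu r/2$. That bound controls only $\|\boldlambda\|_1$. The bound on $\gamma$ is then a separate, briefly sketched step that invokes compactness of $\calC$.

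You express the same margin in primal form, as the containment $B_2(\mvec^\circ,r)\subset\operatorname{conv}(g(V_k))$. Your separation argument is the same directional perturbation estimate as the paper's $M_k(\uvec)\ge M(\uvec)-\eta_k$, now with Euclidean normalization. You then parametrize instead by the slack $c=\gamma-\langle\boldlambda,\mvec^\circ\rangle$ at the interior point. Since $c\ge r\|\boldlambda\|_2$, the single scalar inequality $c-1-\log c\le\log(1+C_1/r)$ bounds $c$, and hence $\boldlambda$ and $\gamma$ together. It needs only $\mvec^\circ\in\calC$, not boundedness of $\calC$.

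Your route is therefore tighter and fully explicit on the $\gamma$ bound. The paper's route skips the separation step and makes the linear decay of the dual objective along rays explicit. The Carath\'eodory/``shrinking-ball'' sketch in your first paragraph is superseded by the separation argument in your last paragraph and can be dropped.
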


\begin{proof}
Since $\mvec^{\circ}\in \mathring{\calM}$, where $\calM\coloneqq \mathrm{conv}(g(\calX))$, we have that for every $\uvec \in \R^J$ with $\|\uvec\|_1=1$, $\langle \uvec,\mvec^{\circ}\rangle < \max_{\mvec\in \calM}\langle \uvec,\mvec\rangle$.
The map $\uvec\mapsto \langle \uvec,\mvec^{\circ}\rangle-\max_{\mvec\in \calM}\langle \uvec,\mvec\rangle$ is continuous on the compact
set $\{\uvec:\|\uvec\|_1=1\}$, hence there exists a uniform margin
\begin{align}
\nu \;\coloneqq \;\inf_{\|\uvec\|_1=1}\Big(\max_{\mvec\in \calM}\langle \uvec,\mvec\rangle-\langle \uvec,\mvec^{\circ}\rangle\Big)\;>\;0.
\end{align}

We will now use this ``margin'' property to derive a uniform bound on dual maximizers. Fix a $k\ge 1$, and select any
$(\boldlambda,\gamma)\in \widetilde\Theta_k$. If $r\coloneqq \|\boldlambda\|_1$ is equal to $0$, there is nothing to prove, so assume $r>0$ and set
$\uvec\coloneqq \boldlambda/r$ so that $\|\uvec\|_1=1$. Define the following terms:
\begin{align}
M(\uvec)\coloneqq \max_{\xvec\in\calX}\langle \uvec,g(\xvec)\rangle, \qquad M_k(\uvec)\coloneqq \max_{\vvec\in V_k}\langle \uvec,g(\vvec)\rangle, \qquad \gamma'\coloneqq \gamma-r\,M_k(\uvec).
\end{align}
Feasibility of $(\boldlambda,\gamma)$ implies $\gamma'>0$, and observe that for any $\vvec\in V_k$, we have  $\gamma-\langle \boldlambda,g(\vvec)\rangle
=\gamma' + r\big(M_k(\uvec)-\langle \uvec,g(\vvec)\rangle\big)$.
Since $\|\uvec\|_1=1$ and $\|g(\cdot)\|_\infty\le G_\infty$, this implies $\langle \uvec,g(\vvec)\rangle\in[-G_\infty,G_\infty]$, and hence
\begin{align}
0\le M_k(\uvec)-\langle \uvec,g(\vvec)\rangle\le 2G_\infty \quad\text{and}\quad \gamma-\langle \boldlambda,g(\vvec)\rangle\le \gamma'+2G_\infty r.
\end{align}
Together these bounds imply that $ \mathbb E_{P_k}\!\big[\log(\gamma-\langle \boldlambda,g(X)\rangle)\big]\le \log(\gamma'+2G_\infty r)$. 
Moreover, $\inf_{\cvec\in\calC_k}\langle \boldlambda,\cvec\rangle=r\inf_{\cvec\in\calC_k}\langle \uvec,\cvec\rangle$ and
$\gamma=\gamma'+rM_k(\uvec)$, so the dual objective satisfies
\[
\mathbb E_{P_k}\!\big[\log(\gamma-\langle \boldlambda,g(X)\rangle)\big] + 1-\gamma + \inf_{\cvec\in\calC_k}\langle \boldlambda,\cvec\rangle
\le
\log(\gamma'+2G_\infty r) + 1-\gamma' - r\Big(M_k(\uvec)-\inf_{\cvec\in\calC_k}\langle \uvec,\cvec\rangle\Big).
\label{eq:KL-general-proof-1}
\]

We will now obtain a lower bound on the gap term in \eqref{eq:KL-general-proof-1}. Since $\mvec^{\circ}\in\calC\subseteq \calC_k$, we have $\inf_{\cvec\in\calC_k}\langle \uvec,\cvec\rangle\le \langle \uvec,\mvec^{\circ}\rangle$, hence
$ M_k(\uvec)-\inf_{\cvec\in\calC_k}\langle \uvec,\cvec\rangle \ge M_k(\uvec)-\langle \uvec,\mvec^{\circ}\rangle$. 
Also, because $V_k$ is a $\Delta_k$-cover of $\calX$ and $\eta_k=\omega_g(\Delta_k)$, we have
$ M_k(\uvec)=\max_{\vvec\in V_k}\langle \uvec,g(\vvec)\rangle
\ge \max_{\xvec\in\calX}\langle \uvec,g(\xvec)\rangle-\eta_k
= M(\uvec)-\eta_k$. 
Combining these two inequalities gives us  $M_k(\uvec)-\inf_{\cvec\in\calC_k}\langle \uvec,\cvec\rangle \ge M(\uvec)-\langle \uvec,\mvec^{\circ}\rangle-\eta_k \ge \nu-\eta_k$. Since $\eta_k \downarrow 0$, for all sufficiently large $k$, we have $\eta_k\le \nu/2$, hence $M_k(\uvec)-\inf_{\cvec\in\calC_k}\langle \uvec,\cvec\rangle\ge \nu/2$. Plugging this into \eqref{eq:KL-general-proof-1} yields~(for all sufficiently large $k$),
\begin{align}
\mathbb E_{P_k}\!\big[\log(\gamma-\langle \boldlambda,g(X)\rangle)\big] + 1-\gamma + \inf_{\cvec\in\calC_k}\langle \boldlambda,\cvec\rangle \le \log(\gamma'+2G_\infty r) + 1-\gamma' - (\nu/2)\,r.
\end{align}
Using $\log y\le y-1$ with $y=(\gamma'+2G_\infty r)/(1+2G_\infty r)$ gives
$ \sup_{\gamma'>0}\{\log(\gamma'+2G_\infty r)+1-\gamma'\}\le \log(1+2G_\infty r)+1$, 
and thus for all sufficiently large $k$, we have
\begin{align}
\mathbb E_{P_k}\!\big[\log(\gamma-\langle \boldlambda,g(X)\rangle)\big] + 1-\gamma + \inf_{\cvec\in\calC_k}\langle \boldlambda,\cvec\rangle
\le \log(1+2G_\infty r)+1-(\nu/2)\,r.
\end{align}
The right-hand side tends to $-\infty$ as $r\to\infty$. Since any maximizer $(\boldlambda_k^\star,\gamma_k^\star)$ satisfies \eqref{eq:opt-nonneg}, it follows that $\|\boldlambda_k^\star\|_1$ is uniformly bounded for all sufficiently large $k$. Let us call the bound $R_{\boldlambda}$.

Finally, feasibility gives $\gamma_k^\star>\max_{\vvec\in V_k}\langle \boldlambda_k^\star,g(\vvec)\rangle\ge
-\|\boldlambda_k^\star\|_1 G_\infty\ge -R_{\boldlambda}G_\infty$, giving a uniform lower bound on $\gamma_k^\star$.
Moreover, since $\calC$ is compact and $\eta_k\to 0$, the sets $\calC_k$ are uniformly bounded; thus
$\inf_{\cvec\in\calC_k}\langle \boldlambda_k^\star,\cvec\rangle\le \|\boldlambda_k^\star\|_1\sup_{\cvec\in\calC_k}\|\cvec\|_\infty$
is uniformly bounded above. Using again that the term $1-\gamma$ drives the objective to $-\infty$ as $\gamma\to\infty$,
while \eqref{eq:opt-nonneg} must hold at a maximizer, we obtain a uniform upper bound $\gamma_k^\star\le R_\gamma$ for all sufficiently
large $k$. Enlarging constants if needed, there exists $R<\infty$ such that for all sufficiently large $k$,
\[
\|(\boldlambda_k^\star,\gamma_k^\star)\|_2\le R. \label{eq:theta-star-bound}
\]
This completes the proof.
\end{proof}

We will now verify that conditions required in~\Cref{assump:general-dual-function} are satisfied. 
To do that, we first note that by Lemma~\ref{lemma:KL-general-limit-1}, for all sufficiently large $k$, we can restrict the attention to the following restricted dual domain without loss of optimality: 
\begin{align}
\Theta_k \coloneqq \overline{\widetilde\Theta}_k\cap\{\theta:\ \|\theta\|_2\le R\},\qquad \theta=(\boldlambda,\gamma),
\end{align}
and note that $\theta_0\coloneqq(\boldsymbol{0},1)\in\mathring{\Theta}_k$ and $H_k(\theta_0,P_k)=0 > -\infty$.

\begin{lemma}
\label{lemma:general-KL-2} 
Define, for $\xvec\in\calX$ and $\theta=(\boldlambda,\gamma)$,
\[
\psi_k(\xvec,\theta)\coloneqq\log\!\big(\gamma-\langle \boldlambda,g(\xvec)\rangle\big),
\qquad
b_k(\theta)\coloneqq1-\gamma+\inf_{\cvec\in\calC_k}\langle \boldlambda,\cvec\rangle,
\qquad
H_k(\theta,P_k)\coloneqq \mathbb{E}_{P_k}[\psi_k(X,\theta)]+b_k(\theta).
\]
Then, these terms satisfy the conditions required by~\Cref{assump:general-dual-function}. 
\end{lemma}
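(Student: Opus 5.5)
The plan is to check the requirements of~\Cref{assump:general-dual-function} one at a time on the truncated domains $\Theta_k = \overline{\widetilde\Theta}_k\cap\{\|\theta\|_2\le R\}$, using three inputs: the radius $R$ from~\Cref{lemma:KL-general-limit-1}, the Lipschitz constant $L_g$ of $g$ (so $\eta_k\le L_g\Delta_k$), and $G_\infty=\sup_{\xvec}\|g(\xvec)\|_\infty$.

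\textbf{Structural properties.} $\Theta_k$ is an intersection of a closed ball with closed half-spaces $\{\gamma-\langle\boldlambda,g(\vvec)\rangle\ge 0\}$, one for each $\vvec\in V_k$. Hence it is convex and compact, and it contains $\theta_0=(\boldsymbol 0,1)$. Taking $R>1$ (enlarging it if needed), every $(\boldlambda,\gamma)$ with $|\gamma-1|<1/4$ and $\|\boldlambda\|_1<1/(4G_\infty)$ satisfies $\gamma-\langle\boldlambda,g(\vvec)\rangle>1/2$ for all $\vvec$. So $\theta_0\in\mathring\Theta_k$, uniformly in $k$, and $H_k(\theta_0,P_k)=0$. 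Concavity of $H_k$ holds because $\theta\mapsto\log(\gamma-\langle\boldlambda,g(\xvec)\rangle)$ is the log of an affine map and survives averaging under $P_k$, and $\theta\mapsto 1-\gamma+\inf_{\cvec\in\calC_k}\langle\boldlambda,\cvec\rangle$ is an infimum of affine maps.

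\textbf{Key step: lower bound on the argument of the log.} Fix $t\in(0,1)$ and write $\theta=t\theta_0+(1-t)\theta'$ with $\theta'=(\boldlambda',\gamma')\in\Theta_k$. Then
\begin{align}
\gamma-\langle\boldlambda,g(\xvec)\rangle = t+(1-t)\big(\gamma'-\langle\boldlambda',g(\xvec)\rangle\big).
\end{align}
The subtlety is that $\psi_k(\xvec,\theta)$ must be controlled for every $\xvec\in\calX$, but membership $\theta'\in\Theta_k$ only guarantees nonnegativity at grid points $\vvec\in V_k$. To bridge this, pick $\vvec\in V_k$ with $\|\vvec-\xvec\|_\infty\le\Delta_k$. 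Since $\|\boldlambda'\|_1\le\sqrt J R$, this gives
\begin{align}
\gamma'-\langle\boldlambda',g(\xvec)\rangle\ge -\sqrt J R\,L_g\Delta_k,
\qquad\text{so}\qquad
\gamma-\langle\boldlambda,g(\xvec)\rangle\ge t-\sqrt J R L_g\Delta_k\ge t/2
\end{align}
for all $k\ge k_t$. The finitely many remaining indices can be discarded, since~\Cref{theorem:general-approximate-constraint} only involves limits in $k$; alternatively one could enlarge the base point to absorb them.

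In the other direction, $\gamma-\langle\boldlambda,g(\xvec)\rangle\le R+\sqrt J R G_\infty$. So $\psi_k$ takes values in $[\log(t/2),\log(R(1+\sqrt JG_\infty))]$ uniformly in $k$ and $\xvec$.

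\textbf{Boundedness and the two continuity conditions.}
\begin{itemize}
\item Uniform boundedness of $\psi_k$ follows directly from the previous step. For $b_k$, note that $\calC_k\subset\calC+B_\infty(\boldsymbol 0,\eta_1)$ is contained in a fixed bounded set (because $\calC$ is compact). Hence $|b_k(\theta)|\le 1+R+\sqrt J R\,C_\infty$, where $C_\infty=\sup_k\sup_{\cvec\in\calC_k}\|\cvec\|_\infty$.
\item Lipschitz continuity in $\xvec$: $\log$ is $(2/t)$-Lipschitz on $[t/2,\infty)$, which gives
\begin{align}
|\psi_k(\xvec,\theta)-\psi_k(\xvec',\theta)|\le \tfrac{2}{t}\,\sqrt J R\,L_g\,\|\xvec-\xvec'\|_\infty,
\end{align}
so $L_t=2\sqrt J RL_g/t$.
\item Continuity in $\theta$: by the same Lipschitz bound on $\log$,
\begin{align}
|\psi_k(\xvec,\theta)-\psi_k(\xvec,\tilde\theta)|\le\tfrac2t\big(|\gamma-\tilde\gamma|+G_\infty\|\boldlambda-\tilde\boldlambda\|_1\big).
\end{align}
Also $\boldlambda\mapsto\inf_{\cvec\in\calC_k}\langle\boldlambda,\cvec\rangle$ is $C_\infty$-Lipschitz in $\|\cdot\|_1$. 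Taking expectations, $\omega_t(s)=c_t s$ with $c_t=\sqrt{J+1}\,\big(\tfrac2t(1+G_\infty)+1+C_\infty\big)$, which is independent of $k$.
\end{itemize}

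The main obstacle is the key step above: the retraction must keep the log argument away from $0$ at all $\xvec\in\calX$, not just on $V_k$. This is exactly where the Lipschitz assumption on $g$ and the uniform radius $R$ from~\Cref{lemma:KL-general-limit-1} are used.
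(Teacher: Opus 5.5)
Your proposal is correct and follows essentially the same route as the paper. Concavity comes from the log-of-affine and infimum-of-affine structure. The key margin $\gamma_t-\langle\boldlambda_t,g(\xvec)\rangle\ge t-\sqrt{J}RL_g\Delta_k\ge t/2$ comes from retracting toward $\theta_0=(\boldsymbol{0},1)$ and bridging from $V_k$ to $\calX$ via the Lipschitz property of $g$. From that margin, for all sufficiently large $k$, you get uniform boundedness, the $(2/t)$-Lipschitz bound in $\theta$, and $L_t=2\sqrt{J}RL_g/t$.

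The differences are cosmetic. You apply the grid-to-$\calX$ bridge to $\theta'\in\Theta_k$ before retracting, whereas the paper applies it to $\theta_t$. You also check compactness, convexity and $\theta_0\in\mathring{\Theta}_k$ explicitly, which the paper handles in the surrounding text.
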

\begin{proof} We will verify the following: (i) concavity of $H_k$, (ii) uniform boundedness of $H_k$, (iii) equicontinuity of $H_k$, and (iv) uniform Lipschitz property of $\psi_k$.

\emph{Concavity.} For each fixed $\xvec$, the map $\theta\mapsto \log(\gamma-\langle \boldlambda,g(\xvec)\rangle)$ is concave on its domain. Since expectation preserves concavity, the term $1-\gamma$ is affine in $(\boldlambda, \gamma)$, and
$\boldlambda\mapsto \inf_{\cvec\in\calC_k}\langle \boldlambda,\cvec\rangle$ is concave (infimum of linear maps), we can conclude that the objective $H_k(\cdot,P_k)$ is concave on $\Theta_k$ for all $k \geq 1$.

To verify the other three conditions, we need to ensure that on the retracted domain, the argument of $\log(\cdot)$ is strictly bounded away from $0$.  For $t\in(0,1)$ define the retracted domain $\Theta_k^{(t)}\coloneqq t(\boldsymbol{0},1)+(1-t)\Theta_k$.
Take any $\theta_t=(\boldlambda_t,\gamma_t)\in\Theta_k^{(t)}$. Then $\theta_t=t(\boldsymbol{0},1)+(1-t)\theta$ for some $\theta=(\boldlambda,\gamma)\in\Theta_k$, and for all $\vvec\in V_k$,
\begin{align}
\gamma_t-\langle \boldlambda_t,g(\vvec)\rangle
= t+(1-t)\big(\gamma-\langle \boldlambda,g(\vvec)\rangle\big)\ge t.
\end{align}
Now fix $\xvec\in\calX$ and choose $\vvec\in V_k$ with $\|\xvec-\vvec\|_\infty\le \Delta_k$. Using the $L_g$-Lipschitz assumption on $g$,
\begin{align}
\gamma_t-\langle \boldlambda_t,g(\xvec)\rangle \ge \gamma_t-\langle \boldlambda_t,g(\vvec)\rangle - \|\boldlambda_t\|_1\,\|g(\xvec)-g(\vvec)\|_\infty \ge t - \|\boldlambda_t\|_1 L_g\Delta_k.
\end{align}
Since $\|\boldlambda_t\|_1\le \sqrt{J}\|\theta_t\|_2\le \sqrt{J}R$, for all sufficiently large $k$ we have $\sqrt{J}R L_g\Delta_k\le t/2$, and hence
\begin{align}
\gamma_t-\langle \boldlambda_t,g(\xvec)\rangle \ge t/2, \qtext{for all} \xvec\in\calX,\text{ and } \theta_t\in\Theta_k^{(t)}.
\end{align}
We now proceed to the verification of the remaining three properties. 

\emph{Uniform boundedness.}
Observe that  $t/2 \; \leq \;\gamma_t-\langle \boldlambda_t,g(\xvec)\rangle \leq |\gamma_t|+\|\boldlambda_t\|_1\|g(\xvec)\|_\infty \leq R+\sqrt{J}R G_\infty$, which implies 
 $\sup_{\xvec,\theta_t}|\psi_k(\xvec,\theta_t)|<\infty$ uniformly over large $k$. Moreover,  since $\calC$ is compact and $\calC_k=\calC+B_\infty(0,\eta_k)$ are uniformly bounded for all $k \geq 1$, so
$\sup_{\theta_t\in\Theta_k^{(t)}}|b_k(\theta_t)|<\infty$ uniformly over large $k$.
Therefore $H_k(\cdot,P_k)$ is uniformly bounded on $\Theta_k^{(t)}$.

\emph{Equicontinuity in $\theta$.}
On $\Theta_k^{(t)}$, the log-argument is at least $t/2$, so for any $\xvec$ and $\theta,\theta'\in\Theta_k^{(t)}$, we have 
\begin{align}
|\psi_k(\xvec,\theta)-\psi_k(\xvec,\theta')| &\leq \frac{2}{t}\Big(|\gamma-\gamma'|+|\langle \boldlambda-\boldlambda',g(\xvec)\rangle|\Big) \leq \frac{2}{t}\Big(|\gamma-\gamma'|+\sqrt{J}G_\infty\|\boldlambda-\boldlambda'\|_2\Big) \\
|b_k(\theta)-b_k(\theta')| &\leq |\gamma-\gamma'|+\sup_{\cvec\in\calC_k}\|\cvec\|_2\,\|\boldlambda-\boldlambda'\|_2 \leq |\gamma-\gamma'|+B\,\|\boldlambda-\boldlambda'\|_2,
\end{align}
for some $B<\infty$ independent of large $k$. Combining and taking expectations yields a Lipschitz modulus $\omega_t(r)=L_t^{(\theta)}r$ such that
\begin{align}
|H_k(\theta,P_k)-H_k(\theta',P_k)|\leq \omega_t(\|\theta-\theta'\|_2),
\qquad \text{for all } \theta,\theta'\in\Theta_k^{(t)},
\end{align}
uniformly over sufficiently large $k$.

\emph{Uniform Lipschitz in $\xvec$.}
For any $\theta_t\in\Theta_k^{(t)}$ and $\xvec,\xvec'\in\calX$,
\begin{align}
|\psi_k(\xvec,\theta_t)-\psi_k(\xvec',\theta_t)|
\leq \frac{2}{t}\,|\langle \boldlambda_t,g(\xvec)-g(\xvec')\rangle|
\leq \frac{2}{t}\,\|\boldlambda_t\|_1\,\|g(\xvec)-g(\xvec')\|_\infty
\leq \frac{2}{t}\,\sqrt{J}R\,L_g\,\|\xvec-\xvec'\|_\infty.
\end{align}
Thus the Lipschitz constant required by Assumption~B.4 can be taken as $L_t = \frac{2}{t}\sqrt{J}R L_g$, uniformly over all
sufficiently large $k$. This completes the verification of~\Cref{assump:general-dual-function}. 
\end{proof}

Next, we verify the conditions required by~\Cref{assump:general-dual-limits}. 
\begin{lemma}\label{lemma:KL-general-limit-B5}
Fix $R<\infty$ as in Lemma~\ref{lemma:KL-general-limit-1} and define the compact convex domains
\[
\Theta_k \coloneqq  \overline{\widetilde\Theta}_k\cap\{\theta:\|\theta\|_2\leq R\},
\qquad
\Theta \coloneqq  \overline{\widetilde\Theta}\cap\{\theta:\|\theta\|_2\leq R\},
\qquad \theta=(\boldlambda,\gamma).
\]
Let $\theta_0=(\boldsymbol{0},1)$. For $t\in(0,1)$, define $\Theta_k^{(t)}\coloneqq t\theta_0+(1-t)\Theta_k$ and
$\Theta^{(t)}\coloneqq t\theta_0+(1-t)\Theta$. Let $\Pi_{\Theta_k}$ and $\Pi_\Theta$ denote Euclidean projections onto $\Theta_k$ and $\Theta$,
and let $\tau_{k,t}$ denote the ``identification map'' from~\Cref{assump:general-dual-limits}. 
Finally, define
\[
H(\theta,P)\coloneqq \mathbb{E}_P\!\Big[\log\big(\gamma-\langle \boldlambda,g(X)\rangle\big)\Big]+1-\gamma+\inf_{\cvec\in\calC}\langle \boldlambda,\cvec\rangle.
\]
Then the triple $(\Theta_k,H_k,P_k)$ satisfies Assumption~\ref{assump:general-dual-limits} with the above $(\Theta,\theta_0,\tau_{k,t})$.
In particular, for each $t\in(0,1)$ and each $\theta\in\Theta^{(t)}$,
\[
F_k(\theta) \coloneqq H_k(\tau_{k,t}(\theta),P_k) \quad \overset{k \to \infty}{\longrightarrow}\quad H(\theta,P),
\]
so the required limit exists and is finite on any countable dense subset $D_t\subset\Theta^{(t)}$.
\end{lemma}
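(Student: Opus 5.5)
The plan is to verify the requirements of \Cref{assump:general-dual-limits} one at a time. These are: compactness and convexity of $\Theta$ and $\Theta^{(t)}$, the condition $\theta_0\in\mathring{\Theta}$, Hausdorff convergence $d_H(\Theta_k,\Theta)\to 0$, and convergence of $F_{k}(\theta)=H_k(\tau_{k,t}(\theta),P_k)$ on $\Theta^{(t)}$. Throughout I will enlarge $R$ if needed so that $R\ge 2>\|\theta_0\|_2=1$. This is harmless for \Cref{lemma:KL-general-limit-1}.

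\emph{Structural facts.} The closure $\overline{\widetilde\Theta}=\{(\boldlambda,\gamma):\gamma-\langle\boldlambda,g(\xvec)\rangle\ge 0\ \forall \xvec\in\calX\}$ is an intersection of closed half-spaces. Intersected with a Euclidean ball, it is therefore compact and convex. Hence $\Theta$ and its affine image $\Theta^{(t)}$ are compact. At $\theta_0=(\boldsymbol 0,1)$ the constraint has margin $1$ and $\|\theta_0\|_2<R$, so a small ball around $\theta_0$ lies in $\Theta$, which gives $\theta_0\in\mathring\Theta$. The key observation is that $V_k\subset\calX$, so $\Theta\subset\Theta_k$. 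This settles one half of the Hausdorff distance, namely $\sup_{\theta\in\Theta}\|\theta-\Theta_k\|_2=0$.

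\emph{Hausdorff convergence (main step).} Take $\theta=(\boldlambda,\gamma)\in\Theta_k$. For any $\xvec\in\calX$, pick $\vvec\in V_k$ with $\|\xvec-\vvec\|_\infty\le\Delta_k$. By the Lipschitz property of $g$,
\begin{align}
\gamma-\langle\boldlambda,g(\xvec)\rangle\ \ge\ -\|\boldlambda\|_1L_g\Delta_k\ \ge\ -c\Delta_k,\qquad c\coloneqq\sqrt J R L_g.
\end{align}
Simply shifting $\gamma$ upward could violate the norm bound $R$, so instead I would pull $\theta$ toward $\theta_0$. Set $\theta'=s\theta_0+(1-s)\theta$ with $s=c\Delta_k/(1+c\Delta_k)$. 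Then the constraint margin of $\theta'$ is at least $s-(1-s)c\Delta_k=0$, so $\theta'\in\overline{\widetilde\Theta}$. By convexity of the ball and $\|\theta_0\|_2\le R$, we also have $\|\theta'\|_2\le R$, hence $\theta'\in\Theta$. Finally, $\|\theta-\theta'\|_2\le s(1+R)\le c(1+R)\Delta_k$. This yields $d_H(\Theta_k,\Theta)\le s_k\coloneqq c(1+R)\Delta_k\to0$. The step needs care because it must produce a point that respects both the constraint and the norm cap simultaneously, and the retraction toward $\theta_0$ is what achieves this.

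\emph{Convergence of $F_k$.} Let $\theta\in\Theta^{(t)}$, and write $\theta=t\theta_0+(1-t)\eta$ with $\eta\in\Theta\subset\Theta_k$. Then $\Pi_{\Theta_k}(\eta)=\eta$, so $\tau_{k,t}(\theta)=\theta$ and $F_k(\theta)=H_k(\theta,P_k)$. Moreover, $\gamma-\langle\boldlambda,g(\xvec)\rangle\ge t$ for every $\xvec\in\calX$, and this quantity is bounded above by $R+\sqrt J RG_\infty$. Consequently $\xvec\mapsto\log(\gamma-\langle\boldlambda,g(\xvec)\rangle)$ is bounded and $(\sqrt JRL_g/t)$-Lipschitz in $\|\cdot\|_\infty$. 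Coupling $X\sim P$ with $X_k$ satisfying $\|X_k-X\|_\infty\le\Delta_k$ (\Cref{assump:general-discretization}), I would bound
\begin{align}
|H_k(\theta,P_k)-H(\theta,P)|\ \le\ \frac{\sqrt J R L_g}{t}\Delta_k\ +\ \Big|\inf_{\cvec\in\calC_k}\langle\boldlambda,\cvec\rangle-\inf_{\cvec\in\calC}\langle\boldlambda,\cvec\rangle\Big|.
\end{align}
Since $\calC_k=\calC+B_\infty(\boldsymbol 0,\eta_k)$, the last term is at most $\|\boldlambda\|_1\eta_k\le\sqrt JR\,L_g\Delta_k$. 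Therefore $F_k\to H(\cdot,P)$ uniformly on $\Theta^{(t)}$. The limit is finite because all of its terms are bounded, so it exists and is finite on any countable dense $\calD_t\subset\Theta^{(t)}$. As a by-product, $H(\cdot,P)$ is the common extension needed for part 4 of \Cref{theorem:general-approximate-constraint}.
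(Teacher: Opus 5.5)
Your proof is correct. Its skeleton matches the paper's: the nesting $\Theta\subseteq\Theta_k$, a slack of order $\sqrt J R L_g\Delta_k$ on all of $\calX$ from the Lipschitz property of $g$, then convergence of the objectives. Two steps are done differently.

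For the Hausdorff bound, the paper first shifts $\gamma$ up by $\delta_k=\sqrt J R L_g\Delta_k$. If that breaks the norm cap, it rescales toward the origin, using that $\overline{\widetilde\Theta}$ is a cone, and gets $d_H(\Theta_k,\Theta)\le 2\delta_k$. You instead retract toward $\theta_0$. This uses the unit margin at $\theta_0$ and convexity rather than homogeneity. It costs the constant $(1+R)$ and needs $\|\theta_0\|_2\le R$, but it would still work if the constraint set were not conic.

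For the objectives, the paper only uses $\|\tau_{k,t}(\theta)-\theta\|_2\le(1-t)\,d_H(\Theta_k,\Theta)\to 0$. It borrows the margin $t/2$ for large $k$ from the proof of \Cref{lemma:general-KL-2}, and combines Lipschitz continuity in $\theta$ with $P_k\Rightarrow P$ to get pointwise convergence. You observe that $\Theta\subseteq\Theta_k$ makes $\tau_{k,t}$ the identity on $\Theta^{(t)}$. So the margin is exactly $t$ for every $k$, and the coupling $\|X_k-X\|_\infty\le\Delta_k$ gives uniform convergence on $\Theta^{(t)}$ at rate $O(\Delta_k/t)$. That is more than the assumption needs, and it directly identifies $H^{(t)}(\cdot,P)=H(\cdot,P)$ on $\Theta^{(t)}$.

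You also check compactness, convexity and $\theta_0\in\mathring\Theta$ explicitly, and correctly note this needs $R>1$, which the paper assumes only implicitly.
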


\begin{proof}
We will verify the (i) Hausdorff convergence of the domains, and (ii)  the pointwise convergence of the dual objectives.

\emph{Hausdorff convergence of dual domains.} First, $\Theta\subseteq \Theta_k$ for all $k$, since positivity on all $\xvec\in\calX$ implies positivity on all $\vvec\in V_k$ and both
sets are intersected with the same Euclidean ball.
For the reverse direction, take any $\theta=(\boldlambda,\gamma)\in\Theta_k$ and any $\xvec\in\calX$. Choose $\vvec\in V_k$ with $\|\xvec-\vvec\|_\infty\le \Delta_k$, and using the $L_g$-Lipschitz assumption on $g$, note that 
\begin{align}
\gamma-\langle \boldlambda,g(\xvec)\rangle \ge \gamma-\langle \boldlambda,g(\vvec)\rangle - \|\boldlambda\|_1\,\|g(\xvec)-g(\vvec)\|_\infty \ge 0 - \|\boldlambda\|_1 L_g\Delta_k.
\end{align}
Since $\|\boldlambda\|_1\le \sqrt{J}\|\theta\|_2\le \sqrt{J}R$, we have
\begin{align}
\gamma-\langle \boldlambda,g(\xvec)\rangle \ge -\delta_k \quad\text{for all }\xvec\in\calX, \qquad \delta_k\coloneqq \sqrt{J}R L_g\Delta_k.
\end{align}
Hence $\theta'\coloneqq (\boldlambda,\gamma+\delta_k)\in \overline{\widetilde\Theta}$, and if  $\|\theta'\|_2\le R$ then $\theta'\in\Theta$ and $\|\theta-\theta'\|_2=\delta_k$. Otherwise, set $\alpha\coloneqq R/\|\theta'\|_2\in(0,1)$ and $\theta''\coloneqq \alpha\theta'$, and since the constraint $\gamma-\langle \boldlambda,g(\xvec)\rangle\ge 0$ is homogeneous under scaling, $\theta''\in\overline{\widetilde\Theta}$ and $\|\theta''\|_2=R$, hence $\theta''\in\Theta$. Moreover,
\begin{align}
\|\theta-\theta''\|_2 \le \|\theta-\theta'\|_2 + \|\theta'-\theta''\|_2
= \delta_k + (\|\theta'\|_2-R) \le 2\delta_k, \qtext{$\implies$}
d_H(\Theta_k,\Theta)\le 2\delta_k \longrightarrow 0.
\end{align}
This concludes the verification of the Hausdorff convergence of the dual domains. 

\emph{Pointwise convergence of dual objectives along $\tau_{k,t}$.}
First, fix a  $t\in(0,1)$ and $\theta\in\Theta^{(t)}$, and set $z\coloneqq (\theta-\theta_0)/(1-t)\in\Theta$. Then
$\mathrm{dist}(z,\Theta_k) \coloneqq \inf_{\theta \in \Theta_k} \|z - \theta\|_2 \le d_H(\Theta_k,\Theta)$, so
\[
\|\tau_{k,t}(\theta)-\theta\|_2
=(1-t)\|\Pi_{\Theta_k}(z)-z\|_2
=(1-t)\,\mathrm{dist}(z,\Theta_k)
\le (1-t)\,d_H(\Theta_k,\Theta)\to 0.
\]
Now,  write $\theta_k \coloneqq \tau_{k,t}(\theta)=(\boldlambda_k,\gamma_k)$, and observe that  $\theta_k\to\theta=(\boldlambda,\gamma)$ as $k \to \infty$.
Since $\theta_k\in\Theta_k^{(t)}$, the positivity margin from the proof of~\Cref{lemma:general-KL-2} yields that for all sufficiently large $k$, we have  $\gamma_k-\langle \boldlambda_k,g(\xvec)\rangle \ge t/2$, for all $\xvec\in\calX$.
Define $f_k(\xvec)\coloneqq \log(\gamma_k-\langle \boldlambda_k,g(\xvec)\rangle)$ and $f(\xvec)\coloneqq \log(\gamma-\langle \boldlambda,g(\xvec)\rangle)$.
Then $f_k$ and $f$ are bounded and continuous on $\calX$, and
\begin{align}
\big|\mathbb{E}_{P_k}[f_k(X)]-\mathbb{E}_P[f(X)]\big|
\leq \big|\mathbb{E}_{P_k}[f_k(X)-f(X)]\big| + \big|\mathbb{E}_{P_k}[f(X)]-\mathbb{E}_P[f(X)]\big|.
\end{align}
The second term tends to $0$ since $P_k\Rightarrow P$ and $f$ is bounded continuous. For the first term, the derivative bound $|\log a-\log b|\leq (2/t)|a-b|$ and boundedness of $\|g(\cdot)\|_\infty$ imply $\sup_{\xvec\in\calX}|f_k(\xvec)-f(\xvec)| \leq C_t\|\theta_k-\theta\|_2 \to 0$, hence $\mathbb{E}_{P_k}[f_k-f]\to 0$. Therefore we can conclude that $\mathbb{E}_{P_k}[f_k]\to \mathbb{E}_P[f]$.

Next, since $\gamma_k\to\gamma$, it remains to show
$\inf_{\cvec\in\calC_k}\langle \boldlambda_k,\cvec\rangle \to \inf_{\cvec\in\calC}\langle \boldlambda,\cvec\rangle$.
Decompose
\[
\inf_{\cvec\in\calC_k}\langle \boldlambda_k,\cvec\rangle - \inf_{\cvec\in\calC}\langle \boldlambda,\cvec\rangle
=
\Big(\inf_{\calC_k}\langle \boldlambda_k,\cvec\rangle-\inf_{\calC_k}\langle \boldlambda,\cvec\rangle\Big)
+
\Big(\inf_{\calC_k}\langle \boldlambda,\cvec\rangle-\inf_{\calC}\langle \boldlambda,\cvec\rangle\Big).
\]
The first term tends to $0$ because $\calC_k$ are uniformly bounded and $\boldlambda_k\to\boldlambda$.
For the second term, using $\calC_k=\calC+B_\infty(0,\eta_k)$ we have
$\inf_{\cvec\in\calC_k}\langle \boldlambda,\cvec\rangle=\inf_{\cvec\in\calC}\langle \boldlambda,\cvec\rangle-\eta_k\|\boldlambda\|_1$,
which tends to $\inf_{\cvec\in\calC}\langle \boldlambda,\cvec\rangle$ since $\eta_k\to 0$.
Hence $b_k(\theta_k)\to 1-\gamma+\inf_{\cvec\in\calC}\langle \boldlambda,\cvec\rangle$, and therefore
\begin{align}
H_k(\tau_{k,t}(\theta),P_k)\to H(\theta,P).
\end{align}
This limit is finite since the log-argument is bounded away from $0$ on $\Theta^{(t)}$. Finally, for each $t\in(0,1)$ choose any countable dense set $D_t\subset\Theta^{(t)}$ (e.g.,\ rational points in $\Theta^{(t)}$). Then the above convergence holds for every $\theta\in D_t$, verifying the last clause of Assumption~\ref{assump:general-dual-limits}.
\end{proof}

\section{Deferred Proofs from~\Cref{sec:statistical-applications}}
\label{proof:statistical-applications}

\subsection{Proof of~\Cref{theorem:sequential-testing}}
\label{proof:sequential-testing}

For the lower bound, first, recall that for any $\alpha\in(0,1)$ and any test $\tau'_\alpha \in \mathcal T(\boldmu, \alpha)$, 
\begin{align}
    \frac{\mathbb{E}[\tau'_\alpha]}{\log(1/\alpha)} \geq \frac{1}{\KLinf(P_X, \boldmu)} \quad \implies \quad \inf_{\tau'_\alpha \in \calT(\boldmu, \alpha)}\frac{\mathbb{E}[\tau'_\alpha]}{\log(1/\alpha)}  \geq \frac{1}{\KLinf(P_X, \boldmu)}, 
\end{align}
which immediately implies that $\liminf_{\alpha \downarrow 0} \inf_{\tau'_\alpha \in \calT(\boldmu, \alpha)} \mathbb{E}[\tau'_\alpha]/\log(1/\alpha) \geq 1/\KLinf(P_X, \boldmu)$. The first lower bound follows directly from an application of data-processing inequality. We refer the reader to \citet[Theorem 3.1]{agrawal2025stopping} for a proof of the first inequality.

Next, let $\calP_0 = \{P \in \calP(\calX): \mathbb{E}_{P}[X] = \boldmu\}$ and $\calP_1 = \{P \in \calP(\calX): \mathbb{E}_{P}[X] \not = \boldmu\}$ denote the null and alternative classes of distributions. The $\alpha$-correctness (under null) of the test introduced in~\Cref{def:sequential-test} follows from \citet[Lemma F.1]{agrawal2021optimal} and the dual formulation for $\KLinf$. \\ 

Finally, we now establish the sample complexity upper bound for this test.\\

\noindent{\bf Expected stopping time under the alternative.} Fix an alternative distribution $P_X \in \calP_1$, and let $\boldlambda^*$ denote the optimal dual variable attaining the maximum in $\KLinf(P_X, \boldmu) = \sup_{\boldlambda\in \dualdomain} \mathbb{E}_{P_X}[\log(1 - \boldlambda^T(X-\boldmu))] = \mathbb{E}_{P_X}[\log(1+(\boldlambda^*)^T(X-\boldmu))]$. Let $S^*_n = \sum_{i=1}^n \log(1+ (\boldlambda^*)^T(X-\boldmu))$, and observe that 
\begin{align}
    n \KLinf(\widehat{P}_n, \boldmu) = \sup_{\boldlambda \in \dualdomain} \sum_{i=1}^n \log (1 + \boldlambda^T(X_i-\boldmu)) \; \geq \; S_n^*. 
\end{align}
This implies that for any $\alpha \in(0,1)$, we have 
\begin{align}
    \tau_\alpha \leq \tau^*_\alpha \coloneqq \inf \{n \geq 1: S_n^* \geq \log(n^K/\alpha)\} \; \implies \; \mathbb{E}_{P_X}[\tau_\alpha] \leq  \mathbb{E}_{P_X}[\tau_\alpha^*]. 
\end{align}
Thus, it suffices to get an upper bound on $\mathbb{E}_{P_X}[\tau^*_\alpha]$. Since we have assumed that $P_X$ is such that $\boldlambda^*$ lies in the interior of its domain, it means that $S_n$ is a random walk with positive drift $\KLinf(P_X, \boldmu)$, and bounded increments~(say $C<\infty$). Furthermore, since $\lim_{n \to \infty} S_n^*/n$ converges almost surely to $\KLinf(P_X, \boldmu) > 0$ due to the strong law of large numbers, and $\log(n^K/\alpha)/n = 0$, we must have $\tau_\alpha^* < \infty$ almost surely. Thus, an application of Wald's identity implies that 
\begin{align}
    \KLinf(P_X, \boldmu) \mathbb{E}_{P_X}[\tau_\alpha^*] = \mathbb{E}_{P_X}[S_{\tau^*_\alpha}^*]  \leq \mathbb{E}_{P_X}\left[ K \log(\tau_\alpha^*) + \log(1/\alpha) + C \right]. 
\end{align}
This leads to the bound 
\begin{align}
    \mathbb{E}_{P_X}[\tau_\alpha] \leq \mathbb{E}_{P_X}[\tau_\alpha^*] = \frac{\log(1/\alpha)}{\KLinf(P_X, \boldmu)}\left(1 + o(1) \right), 
\end{align}
where $o(1)$ term converges to $0$ as $\log(1/\alpha)$ goes to $\infty$.  This completes the proof.

\subsection{Proof of~\Cref{prop:change-detection}}
\label{proof:change-detection}
We first prove the bounds on the ARL and detection delay for the scheme in~\Cref{def:change-detection}.
\paragraph{ARL bound for $N_\alpha$.} One way of establishing the ARL bound is to observe that 
\begin{align}
    N_\alpha = \inf \{n \geq 1: M_n \geq 1/\alpha\}, \qtext{where} M_n = \sum_{k \leq n} E_n^{(k)}, \quad E_n^{(k)} = \frac 1{\alpha} \boldsymbol{1}_{\tau_\alpha^{(k)} \leq n}. 
\end{align}
Hence, $\{M_n: n \geq 0\}$ is a  Shirayev-Roberts~(SR) type e-detector in the terminology of~\citet[Definition 2.6]{shin2023detectors}, which means that for any stopping time $N^*$, it satisfies the inequality $\mathbb{E}_{\infty, P}[N^*] \geq \mathbb{E}_{\infty, P}[M_{N^*}]$. Using this condition in particular with the stopping time $N_\alpha$, and observing that by definition we must have $M_{N_\alpha} \geq 1/\alpha$ almost surely, we obtain 
\begin{align}
    \mathbb{E}_{\infty, P}[N_\alpha] \geq \mathbb{E}_{\infty, P}[M_{N_\alpha}] \geq \frac 1{\alpha}, 
\end{align}
which is the required lower bound on the ARL under $T=\infty$.

\paragraph{Detection Delay.} Suppose the change in distribution from $P \in \calP_0$ to some $Q \in \calP_1$ happens at time $T \in \mathbb{N}$. Then, for $\mathcal{F}_T = \sigma(X_1, \ldots, X_T)$, note that 
\begin{align}
    \mathbb{E}_{T, P, Q}[(N_\alpha - T)^+ \mid \calF_T] & \leq \mathbb{E}_{T,P,Q}[(\tau_\alpha^{(T+1)} - T)^+ \mid \calF_T] = \mathbb{E}_{0, P, Q}[(\tau_\alpha^{(1)} - 0)^+ \mid \calF_0]. 
\end{align}
Since this is simply the expected value of the stopping time of~\Cref{def:sequential-test} under the alternative, we know by~\Cref{theorem:sequential-testing} that
\begin{align}
    \sup_{P \in \calP_0} \frac{\mathbb{E}_{T, P, Q}[(N_\alpha - T)^+ \mid \calF_T]}{\log(1/\alpha)} &\leq \frac{1}{\KLinf(Q, \boldmu_0)} \left( 1 + o(1) \right). 
\end{align}
On taking the $\mathrm{ess}\sup$ and the supremum over all $T \in \mathbb{N}$ and taking $\alpha \downarrow 0$, we get the required bound on $J_L(N_\alpha, P, Q)$. 

We now prove the lower bound on the detection delay for any test with the specified ARL constraint.
\paragraph{Lower Bound.} Let us introduce the notation 
\begin{align}
L(Q, \boldmu_0, \alpha) \coloneqq \inf_{N'_\alpha \in \calC(\boldmu_0, \alpha)} \sup_{P \in \calP_0} J_L(N'_\alpha, P, Q). 
\end{align}
Now, for an arbitrary $\epsilon >0$, let $P_\epsilon \in  \calP_0$ be a distribution with $\KL(Q, P_\epsilon) \leq \KLinf(Q, \boldmu_0) + \epsilon$. By the lower bound derived by~\citet[Theorem~3]{lorden1971procedures}, we know that 
\begin{align}
    L(Q, \boldmu_0, \alpha) \geq \inf_{N_\alpha' \in \calC(\boldmu_0, \alpha)} J_L(N_\alpha', P_\epsilon, Q) \geq \frac{\log(1/\alpha)}{\KL(Q, P_\epsilon)} \left(1 + o(1) \right) \geq \frac{\log(1/\alpha)}{\KLinf(Q, \boldmu_0) + \epsilon} \left(1 + o(1) \right). 
\end{align}
Thus, on dividing by $\log(1/\alpha)$ and taking $\alpha \to 0$, we get 
\begin{align}
    \liminf_{\alpha \downarrow 0} \frac{L(Q, \boldmu_0, \alpha)} {\log(1/\alpha)} \geq \frac{1}{\KLinf(Q, \boldmu_0) + \epsilon}. 
\end{align}
Since $\epsilon > 0$ was arbitrary, the result follows.

\end{document}